\providecommand{\U}[1]{\protect\rule{.1in}{.1in}}
\newtheorem{theorem}{Theorem}
\newtheorem{conjecture}[theorem]{Conjecture}
\newtheorem{consequence}[theorem]{Consequence}
\newtheorem{corollary}[theorem]{Corollary}
\newtheorem{definition}[theorem]{Definition}
\newtheorem{lemma}[theorem]{Lemma}
\newtheorem{proposition}[theorem]{Proposition}
\newtheorem{remark}[theorem]{Remark}
\numberwithin{equation}{section}
\newenvironment{proof}[1][Proof]{\noindent\textbf{#1.} }{\ \rule{0.5em}{0.5em}}
\begin{document}

%%%%%%%%%%%%%%%%%%%%%%%%%%%%%%%%%%%%%%%%%%%%%%%%%%%%%%%%%%%%%%%%%%%%%%%%%%%%%%%%%

\title{\textbf{R\'enyi squashed entanglement, discord, and relative entropy
differences}}
\author{Kaushik P. Seshadreesan\thanks{Hearne Institute for Theoretical Physics and
Department of Physics and Astronomy, Louisiana State University, Baton Rouge,
Louisiana 70803, USA}
\and Mario Berta\thanks{Institute for Quantum Information and Matter, California
Institute of Technology, Pasadena, California 91125, USA}
\and Mark M. Wilde\footnotemark[1] \thanks{Center for Computation and Technology,
Louisiana State University, Baton Rouge, Louisiana 70803, USA} }
\maketitle

%%%%%%%%%%%%%%%%%%%%%%%%%%%%%%%%%%%%%%%%%%%%%%%%%%%%%%%%%%%%%%%%%%%%%%%%%%%%%%%%%

\begin{abstract}
The squashed entanglement quantifies the amount of entanglement in a bipartite quantum state, and it satisfies all of the axioms desired for an entanglement measure. The quantum discord is a measure of quantum
correlations that are different from those due to entanglement. What these
two measures have in common is that they are both based upon the conditional
quantum mutual information. In [Berta {\it et al.},
J.~Math.~Phys.~\textbf{56}, 022205 (2015)], we recently proposed R\'{e}nyi
generalizations of the conditional quantum mutual information of a tripartite state on $ABC$ (with $C$ being the conditioning system), which were shown to satisfy some properties that hold for the original quantity, such as non-negativity, duality, and monotonicity with respect to local
operations on the system $B$ (with it being left open to show that the
R\'{e}nyi quantity is monotone with respect to local operations on system $A$). Here we
define a R\'{e}nyi squashed entanglement and a R\'{e}nyi
quantum discord based on a R\'{e}nyi conditional quantum mutual information and investigate these quantities in detail. Taking
as a conjecture that the R\'{e}nyi conditional quantum mutual information is
monotone with respect to local operations on both systems $A$ and $B$, we prove that the
R\'{e}nyi squashed entanglement and the R\'{e}nyi quantum discord satisfy many of the properties of the respective original von Neumann entropy based quantities.
In our prior work~[Berta {\it et al.},
Phys.~Rev.~A~\textbf{91}, 022333 (2015)], we also detailed a procedure to obtain R\'{e}nyi generalizations of any
quantum information measure that is equal to a linear combination of von
Neumann entropies with coefficients chosen from the set $\{-1,0,1\}$. Here, we
extend this procedure to include differences of relative entropies. Using the
extended procedure and a conjectured monotonicity of the R\'enyi generalizations in the R\'enyi parameter, we discuss potential remainder terms for well known inequalities such as monotonicity of the relative entropy, joint convexity of the relative entropy, and the Holevo bound.
\end{abstract}

%%%%%%%%%%%%%%%%%%%%%%%%%%%%%%%%%%%%%%%%%%%%%%%%%%%%%%%%%%%%%%%%%%%%%%%%%%%%%%%%%

\section{Introduction}\label{sec:intro}

Quantum information theory is intimately connected to the
theory of quantum correlations. Information quantities such as the quantum
entropy, conditional entropy, mutual information, and conditional mutual
information underlie several measures of quantum correlations. For example,
both the squashed entanglement \cite{CW04} and the quantum discord
\cite{zurek} are based upon the conditional quantum mutual information. Also,
the operational meanings that the entropic quantities take in
information-theoretic protocols aid in understanding the correlation measures
defined based on them. The entropy finds operational meaning in the context of
quantum data compression \cite{S95}, the conditional entropy in quantum state
merging \cite{HOW05,HOW07}, the mutual information in erasure of total
correlations \cite{GPW05}, and the conditional mutual information in quantum
state redistribution \cite{DY08,YD09}.

The squashed entanglement of a bipartite state $\rho_{AB}$ is defined as
\cite{CW04}
\begin{equation}
E^{\text{sq}}(A;B)_{\rho}\equiv\frac{1}{2}\inf_{\omega_{ABE}}\left\{
I(A;B|E)_{\omega}:\text{Tr}_{E}\left\{  \omega_{ABE}\right\}  =\rho
_{AB}\right\}  ,\label{eq:sqent}
\end{equation}
where the infimum is with respect to all extensions $\omega_{ABE}$ of the
state $\rho_{AB}$. The conditional mutual information $I\left(  A;B|E\right)
_{\omega}$ in (\ref{eq:sqent}) is the linear combination
\begin{equation}
I\left(  A;B|E\right)  _{\omega}\equiv H\left(  AE\right)  _{\omega}+H\left(
BE\right)  _{\omega}-H\left(  E\right)  _{\omega}-H\left(  ABE\right)
_{\omega},\label{eq:cqmidef}
\end{equation}
with $H(F)_{\sigma}\equiv-$Tr$\{\sigma_{F}\log\sigma_{F}\}$ being the von
Neumann entropy of a state $\sigma_{F}$ on system $F$, where we unambiguously
let $\omega_{E}\equiv\ $Tr$_{AB}\{\omega_{ABE}\}$ denote the reduced density
operator on system $E$, for example. The squashed entanglement
satisfies all the desired properties of an entanglement measure.
It is convex, monotone with respect to local operations and
classical communication (LOCC), non-negative, superadditive in general, and
additive on tensor-product states \cite{CW04}. It is also asymptotically
continuous \cite{AF04}, and faithful (i.e., it takes the value zero if and only
if the state $\rho_{AB}$ is separable)~\cite{CW04,BCY11}. The squashed
entanglement finds application as an upper bound on distillable entanglement
and as a lower bound on entanglement cost \cite{CW04}. The squashed entanglement
of a quantum channel is defined as the maximum squashed entanglement that can
be registered by a sender and a receiver at the input and output of a channel
\cite{TGW13,TGW14}. This quantity is an upper bound on the quantum
communication capacity of any channel assisted by unlimited forward and
backward classical communication as well as the private capacity of a quantum
channel assisted by unlimited forward and backward public classical communication.

The quantum discord of a bipartite state $\rho_{AB}$ is defined as the gap
between the quantum mutual information $I\left(  A;B\right)  _{\rho}\equiv
H\left(  A\right)  _{\rho}+H\left(  B\right)  _{\rho}-H\left(  AB\right)
_{\rho}$ and the mutual information after one of the systems $A$ or $B$ has been
measured, where the latter is maximized over all measurements \cite{Z00,zurek}.
That is, if the measurement occurs on system $A$, then the quantum discord is defined
as%
\begin{equation}
D(\overline{A};B)_{\rho}\equiv I\left(  A;B\right)  _{\rho}-\max
_{\{\Lambda^{x}\}}I(X;B)_{\tau},
\end{equation}
with the overbar denoting the system being measured,
\begin{equation}
\tau_{XB}\equiv\sum_{x}|x\rangle\langle x|_{X}\otimes\operatorname{Tr}
_{A}\{(\Lambda_{A}^{x}\otimes I_{B})\rho_{AB}\},
\end{equation}
$\Lambda^{x}\geq0 \ \forall x$, $\sum_{x}\Lambda^{x}=I$,
and $\{\vert x\rangle\}$ is an orthonormal basis. (It is well known to be sufficient to optimize the quantum discord with respect to rank-one POVMs~\cite{KBCPV12}.) We think that it is an important conceptual realization that the quantum discord can be re-expressed as the following conditional quantum mutual information~\cite{P12}:
\begin{equation}
D(  \overline{A};B)  _{\rho}=\inf_{\left\{  \Lambda^{x}\right\}
}I\left(  E;B|X\right)  _{\tau},
\end{equation}
where the optimization is with respect to all possible POVMs acting on system $A$, with
$X$ being the classical output and $E$ being an environment for the
measurement map, so that
\begin{align}
\tau_{XEB}  &  \equiv U_{A\rightarrow XE}\rho_{AB}U_{A\rightarrow XE}^{\dag
},\\
U_{A\rightarrow XE}|\psi\rangle_{A}  &  \equiv\sum_{x}|x\rangle_{X}%
\otimes\left(  \sqrt{\Lambda^{x}}|\psi\rangle_{A}\otimes|x\rangle\right)
_{E}.
\end{align}
The quantum discord characterizes quantum correlations that are different from those due to entanglement. It is non-negative, invariant with respect to local unitary operations, and equal to zero if and only if the state is classical on the system being measured~\cite{KBCPV12}.

In quantum information theory, there has been interest recently in
generalizing information measures in terms of the R\'{e}nyi entropies. Such
R\'{e}nyi generalizations of information measures enable an improved
understanding of the original von Neumann entropy-based measures and also find applications
in scenarios beyond the traditional independent and identically (i.i.d.)
distributed resource setting \cite{P86,RennerThesis,D09,MH11,T12,MDSFT13,WWY13}. Since quantum information quantities underlie some quantum correlation measures, it is natural to think of R\'{e}nyi generalizations of those correlation measures as well.

In our prior work~\cite{BSW142}, we had proposed a procedure to obtain R\'{e}nyi
generalizations of quantum information measures that are equal to linear combinations
of von Neumann entropies with coefficients chosen from the set $\left\{
-1,0,1\right\}  $. In particular, R\'{e}nyi
generalizations of the conditional quantum mutual information of
(\ref{eq:cqmidef}) were proposed~\cite[Sections 5 and 6]{BSW14}. These
generalizations retain most of the desired properties of the von Neumann
entropy-based conditional quantum mutual information; e.g., they are non-negative, monotone
with respect to local quantum operations on one of the two systems $A$ or $B$, and obey
a duality property for four-party pure states $|\psi\rangle_{ABCD}
$~\cite[Section 7]{BSW14}. There it was left as an open question to determine
whether monotonicity also holds for local operations on the other system, with
numerical evidence supporting a positive answer.

In this work, based on the R\'{e}nyi
conditional quantum mutual informations, we define and explore the properties of
R\'{e}nyi generalizations of the squashed entanglement and the quantum discord.~\footnote{Note that what we define in this paper as `the' R{\'e}nyi squashed entanglement and `the' R{\'e}nyi quantum discord are only one of many possible definitions for the quantities based on the many different R{\'e}nyi conditional mutual informations defined in~\cite{BSW14}.}~\footnote{A different R\'enyi generalization of the quantum discord based on a difference of R{\'e}nyi mutual informations was proposed and
studied in \cite{MBPDS14}.} Additionally, we extend the scope of the procedure
for R\'enyi generalization of quantum information measures prescribed in
our former work to include quantum information measures which are
expressed as differences of relative entropies. The major potential
application of this latter contribution is in finding ``remainder
terms'' for monotonicity of relative entropy, joint convexity of relative
entropy, and the Holevo bound.

%%%%%%%%%%%%%%%%%%%%%%%%%%%%%%%%%%%%%%%%%%%%%%%%%%%%%%%%%%%%%%%%%%%%%%%%%%%%%%%%%

\section{Overview of results}

The main purpose of the present paper is to justify the R\'enyi squashed
entanglement and quantum discord quantities introduced here as
useful measures of quantum correlation. Complementary to this aim, we take the
results here as further justification that \cite{BSW14} has pinned down a
useful R\'enyi generalization of the conditional mutual information.
Furthermore, our work suggests that the axiomatic approach to entanglement
theory and discord does not uniquely single out the von Neumann entropy-based
squashed entanglement or discord (however, further effort is needed to justify
this claim in full). Finally, a potential application for future work is to
use the R\'enyi quantities in order to refine information-theoretic results
concerning the original von Neumann entropy-based quantities.

At the outset, it is important to note that this paper takes as a conjecture
that the R\'{e}nyi conditional quantum mutual information is monotone with respect to
local quantum operations on both systems $A$ and $B$ (see
Conjecture~\ref{conj:monoA}). As already stated in Section~\ref{sec:intro},
the aforementioned monotonicity was proven in \cite{BSW14} for local quantum
operations on one of the two systems $A$ or $B$, and it was left as an open
question to determine if the property holds for local quantum operations on the other system (with numerical evidence supporting a positive answer). Here, we assume the latter to hold. Some of the results presented in this work are based upon this conjecture.

Our contributions are as follows:

\begin{itemize}
\item We show that the R\'{e}nyi squashed entanglement defined here retains
many of the desired properties of the original von Neumann entropy based quantity, such as
monotonicity with respect to LOCC (up to the conjecture on the monotonicity of the
R\'{e}nyi conditional quantum mutual information with respect to local operations), convexity, and sub-additivity on tensor-product states. These properties hold only for particular regimes of
the R\'{e}nyi parameter $\alpha$. The first two properties together qualify
the R\'{e}nyi squashed entanglement as an entanglement monotone, again for
certain regimes of $\alpha$ and up to the conjecture.

\item The squashed entanglement is known to be
bounded from above by the
entanglement of formation \cite{CW04}. Here we define a R\'{e}nyi entanglement
of formation based on the R\'{e}nyi conditional entropy, and show that the
$\alpha$-R\'{e}nyi squashed entanglement is also upper bounded by a $\beta
$-R\'{e}nyi entanglement of formation, where $\beta=(2-\alpha)/\alpha$.

%Similarly, the squashed
%entanglement is known to be an upper bound on distillable entanglement of a
%bipartite state $\rho_{AB}$. We define a R\'enyi classical squashed
%entanglement, where the optimization is restricted to be over all classical
%extensions $\sigma_{ABE}=\sum p\left(  x\right)  \sigma_{AB}^{x}%
%\otimes|x\rangle\langle x|_{E}$, and we use it to show that the von Neumann
%classical squashed entanglement is a strong converse rate for entanglement
%distillation [THIS RESULT IS QUESTIONABLE NOW].

\item We show that the R\'{e}nyi discord is non-negative, invariant with respect to
local unitaries, and equal to zero if the state is a classical-quantum state.
Furthermore, we prove that it suffices to optimize the R\'{e}nyi discord
functional with respect to rank-one measurements. We then discuss an operational
characterization of bipartite quantum states with small von Neumann entropy-based quantum
discord, which would follow from \cite[Conjecture~34]{BSW14} (see just after \eqref{eq:duality} for the formal statement of this conjecture).

\item Finally, we extend the procedure to obtain R\'{e}nyi generalizations of
quantum information measures given in our prior work
\cite{BSW142} by showing how to obtain R\'{e}nyi generalizations of differences of
relative entropies. The main purpose of this last contribution is to study
potential remainder terms for monotonicity of the relative entropy, joint convexity of the relative entropy, and the Holevo bound.
\end{itemize}

%%%%%%%%%%%%%%%%%%%%%%%%%%%%%%%%%%%%%%%%%%%%%%%%%%%%%%%%%%%%%%%%%%%%%%%%%%%%%%%%%

\section{Preliminaries}\label{sec:notation}

\textbf{Norms, states, extensions, channels, and measurements.} Let $\mathcal{B}\left(  \mathcal{H}\right)  $ denote the
algebra of bounded linear operators acting on a Hilbert space $\mathcal{H}$.
We restrict ourselves to finite-dimensional Hilbert spaces throughout this
paper. For $\alpha\geq1$, we define the $\alpha$-norm of an operator $X \in \mathcal{B}(  \mathcal{H})$ as
\begin{equation}
\left\Vert X\right\Vert _{\alpha}\equiv
\left[\text{Tr}\{|X|^{\alpha
}\}\right]^{1/\alpha},\label{eq:a-norm}
\end{equation}
where $|X| \equiv \sqrt{X^{\dag}X}$ and we use the same notation even for the case $\alpha\in(0,1)$, when it is
not a norm. Let $\mathcal{B}\left(  \mathcal{H}\right)  _{+}$ denote the
subset of positive semi-definite operators, and let $\mathcal{B}\left(
\mathcal{H}\right)  _{++}$ denote the subset of positive definite operators.
We also write $X\geq0$ if $X\in\mathcal{B}\left(  \mathcal{H}\right)  _{+}$
and $Y>0$ if $Y\in\mathcal{B}\left(  \mathcal{H}\right)  _{++}$. An\ operator
$\rho$ is in the set $\mathcal{S}\left(  \mathcal{H}\right)  $\ of density
operators (or states) if $\rho\in\mathcal{B}\left(  \mathcal{H}\right)  _{+}$
and Tr$\left\{  \rho\right\}  =1$. An\ operator $\rho$ is in the set
$\mathcal{S}\left(  \mathcal{H}\right)  _{++}$ if it is a density operator and
positive definite. The tensor product of two Hilbert spaces $\mathcal{H}_{A}$
and $\mathcal{H}_{B}$ is denoted by $\mathcal{H}_{A}\otimes\mathcal{H}_{B}$ or
$\mathcal{H}_{AB}$.\ Given a multipartite density operator $\rho_{AB}
\in\mathcal{S}(\mathcal{H}_{A}\otimes\mathcal{H}_{B})$, we unambiguously write
$\rho_{A}=\ $Tr$_{B}\left\{  \rho_{AB}\right\}  $ for the reduced density
operator on system $A$. We use $\rho_{AB}$, $\sigma_{AB}$, $\tau_{AB}$,
$\omega_{AB}$, etc.~to denote general density operators in $\mathcal{S}
(\mathcal{H}_{A}\otimes\mathcal{H}_{B})$, while $\psi_{AB}$, $\varphi_{AB}$,
$\phi_{AB}$, etc.~denote rank-one density operators (pure states) in
$\mathcal{S}(\mathcal{H}_{A}\otimes\mathcal{H}_{B})$ (with it implicit, clear
from the context, and the above convention implying that $\psi_{A}$,
$\varphi_{A}$, $\phi_{A}$ are mixed if $\psi_{AB}$, $\varphi_{AB}$, $\phi
_{AB}$ are pure and entangled).

We also say that pure-state vectors $|\psi\rangle$ in $\mathcal{H}$ are
states. Any bipartite pure state $|\psi\rangle_{AB}$ in $\mathcal{H}_{AB}$ is written in Schmidt form as
\begin{equation}
\left\vert \psi\right\rangle _{AB}\equiv\sum_{i=0}^{d-1}\sqrt{\lambda_{i}
}\left\vert i\right\rangle _{A}\left\vert i\right\rangle _{B},
\end{equation}
where $\{|i\rangle_{A}\}$ and $\{|i\rangle_{B}\}$ form orthonormal bases in
$\mathcal{H}_{A}$ and $\mathcal{H}_{B}$, respectively, $d$ is the Schmidt rank
of the state, and $\sum_{i=0}^{d-1}\lambda_{i}=1$. By a maximally entangled
state, we mean a bipartite pure state of the form
\begin{equation}
\left\vert \Phi\right\rangle _{AB}\equiv\frac{1}{\sqrt{d}}\sum_{i=0}
^{d-1}\left\vert i\right\rangle _{A}\left\vert i\right\rangle _{B}.
\end{equation}

The trace distance between two quantum states $\rho,\sigma\in\mathcal{S}
\left(  \mathcal{H}\right)  $\ is equal to $\left\Vert \rho-\sigma\right\Vert
_{1}$. It has a direct operational interpretation in terms of the
distinguishability of these states. That is, if $\rho$ or $\sigma$ are
prepared with equal probability and the task is to distinguish them via some
quantum measurement, then the optimal success probability in doing so is equal
to $\left(  1+\left\Vert \rho-\sigma\right\Vert _{1}/2\right)  /2$.

Throughout this paper, we take the usual convention that $f\left(  A\right)
=\sum_{i:a_i > 0}f\left(  a_{i}\right)  \left\vert i\right\rangle \left\langle
i\right\vert $ when given a function $f$ and a positive semi-definite operator $A$ with
spectral decomposition $A=\sum_{i}a_{i}\left\vert i\right\rangle \left\langle
i\right\vert $. So this means that $A^{-1}$ is interpreted as a generalized
inverse, so that $A^{-1}=\sum_{i:a_i > 0}a_{i}^{-1}\left\vert i\right\rangle
\left\langle i\right\vert $, $\log\left(  A\right)  =\sum_{i:a_i > 0}\log\left(
a_{i}\right)  \left\vert i\right\rangle \left\langle i\right\vert $,
$\exp\left(  A\right)  =\sum_{i:a_i > 0}\exp\left(  a_{i}\right)  \left\vert
i\right\rangle \left\langle i\right\vert $, etc. The above convention for
$f\left(  A\right)  $ leads to the convention that $A^{0}$ denotes the
projection onto the support of $A$, i.e., $A^{0}=\sum_{i:a_i > 0}\left\vert
i\right\rangle \left\langle i\right\vert $. We employ the shorthand supp$(A)$
to refer to the support of an operator~$A$.

A linear map $\mathcal{N}_{A\rightarrow B}:\mathcal{B}\left(  \mathcal{H}
_{A}\right)  \rightarrow\mathcal{B}\left(  \mathcal{H}_{B}\right)  $\ is
positive if $\mathcal{N}_{A\rightarrow B}\left(  \sigma_{A}\right)
\in\mathcal{B}\left(  \mathcal{H}_{B}\right)  _{+}$ whenever $\sigma_{A}
\in\mathcal{B}\left(  \mathcal{H}_{A}\right)  _{+}$. Let id$_{A}$ denote the
identity map acting on a system $A$. A linear map $\mathcal{N}_{A\rightarrow
B}$ is completely positive if the map id$_{R}\otimes\mathcal{N}_{A\rightarrow
B}$ is positive for a reference system $R$ of arbitrary size. A linear map
$\mathcal{N}_{A\rightarrow B}$ is strictly positive if it takes positive
definite operators on the input to positive definite operators on the output,
and it is a strict completely positive map if id$_{R}\otimes\mathcal{N}
_{A\rightarrow B}$ is strictly positive for a reference system $R$ of
arbitrary size. A linear map $\mathcal{N}_{A\rightarrow B}$ is
trace-preserving if Tr$\left\{  \mathcal{N}_{A\rightarrow B}\left(  \tau
_{A}\right)  \right\}  =\ $Tr$\left\{  \tau_{A}\right\}  $ for all input
operators $\tau_{A}\in\mathcal{B}\left(  \mathcal{H}_{A}\right)  $. If a
linear map is completely positive and trace-preserving (CPTP), we say that it
is a quantum channel or quantum operation. A positive operator-valued measure
(POVM) is a set $\left\{  \Lambda^{m}\right\}  $ of positive semi-definite
operators such that $\sum_{m}\Lambda^{m}=I$.

An extension of a state $\rho_{A}\in\mathcal{S}\left(  \mathcal{H}_{A}\right)
$ is some state $\Omega_{RA}\in\mathcal{S}\left(  \mathcal{H}_{R}
\otimes\mathcal{H}_{A}\right)  $ such that $\mathrm{Tr}_{R}\left\{
\Omega_{RA}\right\}  =\rho_{A}$. An isometric extension $U_{A\rightarrow
BE}^{\mathcal{N}}$ of a channel $\mathcal{N}_{A\rightarrow B}$ acting on a
state $\rho_{A}\in\mathcal{S}(\mathcal{H}_{A})$ is a linear map that satisfies
the following:
\begin{equation}
\mathrm{Tr}_{E}\left\{  U_{A\rightarrow BE}^{\mathcal{N}}\rho_{A}
(U_{A\rightarrow BE}^{\mathcal{N}})^{\dag}\right\}  =\mathcal{N}_{A\rightarrow
B}\left(  \rho_{A}\right)  ,\ \ \ \ \ U_{\mathcal{N}}^{\dagger}U_{\mathcal{N}
}=I_{A},\ \ \ \ \ U_{\mathcal{N}}U_{\mathcal{N}}^{\dagger}=\Pi_{BE},
\end{equation}
where $\Pi_{BE}$ is a projection onto a subspace of the Hilbert space
$\mathcal{H}_{B}\otimes\mathcal{H}_{E}$.

\bigskip{}

\textbf{R\'{e}nyi entropies and information measures.}
The $\alpha$-R\'{e}nyi entropy and the
$\alpha$-R\'{e}nyi divergence, are defined respectively for a parameter $\alpha
\in\left(  0,1\right)  \cup\left(  1,\infty\right)  $ and probability
distributions $p$ and $q$ as%
\begin{align}
H_{\alpha}(p)  &  \equiv\frac{1}{1-\alpha}\log\sum_{x}\left[  p\left(
x\right)  \right]  ^{\alpha},\label{eq:Renyi-ent}\\
D_{\alpha}(p\Vert q)  &  \equiv\frac{1}{\alpha-1}\log\sum_{x}\left[  p\left(
x\right)  \right]  ^{\alpha}\left[  q\left(  x\right)  \right]  ^{1-\alpha},
\label{eq:cl-Renyi-rel-ent}%
\end{align}
where $\log$ denotes the natural logarithm here and throughout the paper. The
Shannon entropy and relative entropy are recovered in the limit as
$\alpha\rightarrow1$:%
\begin{align}
\lim_{\alpha\rightarrow1}H_{\alpha}(p)  &  =H(p)\equiv-\sum_{x}p\left(
x\right)  \log p\left(  x\right)  ,\\
\lim_{\alpha\rightarrow1}D_{\alpha}(p\Vert q)  &  =D(p\Vert q)\equiv\sum
_{x}p\left(  x\right)  \log\frac{p\left(  x\right)  }{q\left(  x\right)  }.
\end{align}
The quantum R\'{e}nyi entropy of a state $\rho$ is defined for $\alpha
\in\left(  0,1\right)  \cup\left(  1,\infty\right)  $ as%
\begin{equation}
H_{\alpha}\left(  \rho\right)  \equiv\frac{1}{1-\alpha}\log\operatorname{Tr}
\{ \rho^{\alpha}\} = \frac{\alpha}{1-\alpha}\log\left\Vert \rho\right\Vert
_{\alpha},
\end{equation}
and reduces to the von Neumann entropy in the limit as $\alpha\rightarrow1$:%
\begin{equation}
\lim_{\alpha\rightarrow1}H_{\alpha}\left(  \rho\right)  =H\left(  \rho\right)
.
\end{equation}

There are at least two ways to generalize the quantum relative entropy.
The R\'{e}nyi
relative entropy of order $\alpha\in\lbrack0,1)\cup(1,\infty)$ is defined as
\cite{P86}
\begin{equation}
D_{\alpha}(\rho\Vert\sigma)\equiv\left\{
\begin{array}
[c]{cc}
\frac{1}{\alpha-1}\log\text{Tr}\left\{  \left[  \text{Tr}\left\{
\rho\right\}  \right]  ^{-1}\rho^{\alpha}\sigma^{1-\alpha}\right\}   &
\text{if supp}\left(  \rho\right)  \subseteq\text{supp}\left(  \sigma\right)
\text{ or (}\alpha\in\lbrack0,1)\text{ and }\rho\not \perp \sigma\text{)}\\
+\infty & \text{otherwise}
\end{array}
\right.  ,\label{eq:Renyi-rel-ent}
\end{equation}
with the support conditions established in \cite{TCR09}. The R\'{e}nyi
relative entropy obeys monotonicity with respect to quantum operations for $\alpha
\in\lbrack0,1)\cup(1,2]$ \cite{P86}, in the sense that $D_{\alpha}(\rho
\Vert\sigma)\geq D_{\alpha}(\mathcal{N}\left(  \rho\right)  \Vert
\mathcal{N}\left(  \sigma\right)  )$ for a quantum operation $\mathcal{N}$.

The sandwiched R\'{e}nyi relative entropy \cite{MDSFT13,WWY13}\ is another
variant of the R\'{e}nyi relative entropy which has found a number of
applications recently in the context of strong converse theorems
\cite{WWY13,MO13,GW13,CMW14,TWW14}. It is defined for $\alpha\in
(0,1)\cup(1,\infty)$ as follows:
\begin{equation}
\widetilde{D}_{\alpha}\left(  \rho\Vert\sigma\right)  \equiv\left\{
\begin{array}
[c]{cc}
\frac{1}{\alpha-1}\log\left[  \left[  \text{Tr}\left\{  \rho\right\}  \right]
^{-1}\text{Tr}\left\{  \left(  \sigma^{\left(  1-\alpha\right)  /2\alpha}%
\rho\sigma^{\left(  1-\alpha\right)  /2\alpha}\right)  ^{\alpha}\right\}
\right]  &
\begin{array}
[c]{c}
\text{if supp}\left(  \rho\right)  \subseteq\text{supp}\left(  \sigma\right)
\text{ or}\\
\text{(}\alpha\in(0,1)\text{ and }\rho\not \perp \sigma\text{)}
\end{array}
\\
+\infty & \text{otherwise}
\end{array}
\right.  . \label{eq:def-sandwiched}
\end{equation}

The R\'{e}nyi conditional entropy for a bipartite state $\rho_{AB}$ is defined
for $\alpha\in(0,1)\cup(1,\infty)$ as
\begin{align}
H_{\alpha}\left(  A|B\right)   &  \equiv-\min_{\sigma_{B}}D_{\alpha}\left(
\rho_{AB}\Vert I_{A}\otimes\sigma_{B}\right)  \label{eq:condent}\\
&  =\frac{\alpha}{1-\alpha}\log\text{Tr}\left\{  \left(  \text{Tr}_{A}\left\{
\rho_{AB}^{\alpha}\right\}  \right)  ^{1/\alpha}\right\}  ,
\end{align}
where $D_{\alpha}$ is the R\'{e}nyi relative entropy defined in
(\ref{eq:Renyi-rel-ent}) and the second equality follows from a Sibson
identity~\cite{SW12}. The R\'{e}nyi quantum mutual information of a bipartite
state $\rho_{AB}$ is defined for $\alpha\in(0,1)\cup(1,\infty)$ as
\begin{align}
{I}_{\alpha}\left(  A;B\right)  _{\rho} &  \equiv\min_{\sigma_{B}}D_{\alpha
}\left(  \rho_{AB}\Vert\rho_{A}\otimes\sigma_{B}\right)  \label{RMI}\\
&  =\frac{\alpha}{\alpha-1}\log\operatorname{Tr}\left\{  \left(
\operatorname{Tr}_{A}\left\{  \rho_{A}^{1-\alpha}\rho_{AB}^{\alpha}\right\}
\right)  ^{\frac{1}{\alpha}\label{RMISibson}}\right\}  ,
\end{align}
where $D_{\alpha}$ is the R\'{e}nyi relative entropy defined in
(\ref{eq:Renyi-rel-ent}) and the second equality follows from a Sibson
identity~\cite[Corollary 8]{GW13}. Throughout this paper, we take our
definition of R\'{e}nyi conditional quantum mutual information for $\alpha
\in(0,1)\cup(1,\infty)$ to be as follows \cite[Definition~7]{BSW14}:
\begin{align}
I_{\alpha}\left(  A;B|E\right)  _{\rho} &  \equiv\inf_{\sigma_{BE}}D_{\alpha
}\left(  \rho_{ABE}\middle\Vert\left(  \rho_{AE}^{\left(  1-\alpha\right)
/2}\rho_{E}^{\left(  \alpha-1\right)  /2}\sigma_{BE}^{1-\alpha}\rho
_{E}^{\left(  \alpha-1\right)  /2}\rho_{AE}^{\left(  1-\alpha\right)
/2}\right)^{\frac{1}{1-\alpha}}  \right)  \label{eq:optrenyicmi}\\
&  =\frac{\alpha}{\alpha-1}\log\text{Tr}\left\{  \left(  \rho_{E}^{\left(
\alpha-1\right)  /2}\text{Tr}_{A}\left\{  \rho_{AE}^{\left(  1-\alpha\right)
/2}\rho_{ABE}^{\alpha}\rho_{AE}^{\left(  1-\alpha\right)  /2}\right\}
\rho_{E}^{\left(  \alpha-1\right)  /2}\right)  ^{1/\alpha}\right\}
,\label{eq:CMI-sibson}
\end{align}
where $D_{\alpha}$ is the R\'{e}nyi relative entropy defined in
(\ref{eq:Renyi-rel-ent}) and the second equality follows from a Sibson
identity~\cite[Proposition 8]{BSW14}. The R\'{e}nyi conditional quantum mutual
information is non-negative and obeys monotonicity with respect to local CPTP maps on
system $B$, i.e., for $\alpha\in(0,1)\cup(1,2]$,
\begin{equation}
I_{\alpha}\left(  A;B|E\right)  _{\rho}\geq I_{\alpha}\left(  A;B^{\prime
}|E\right)  _{\omega},
\end{equation}
where $\omega_{AB^{\prime}E}\equiv\mathcal{N}_{B\rightarrow B^{\prime}}
(\rho_{ABE})$ and $\mathcal{N}_{B\rightarrow B^{\prime}}$ is any CPTP map
acting on system $B$ alone. For a four-party pure state $\phi_{ABCD}$, the
R\'{e}nyi conditional quantum mutual information satisfies a duality property
\cite[Theorem 32]{BSW14} given by
\begin{equation}
I_{\alpha}\left(  A;B|C\right)  _{\phi}=I_{\alpha}\left(  B;A|D\right)
_{\phi}.\label{eq:duality}
\end{equation}
It is conjectured \cite[Conjecture~34]{BSW14} that the R\'enyi conditional mutual information $I_\alpha(A;B|E)$ is monotone non-decreasing with respect to the R\'enyi parameter $\alpha$.

Some results in this paper rely on the following conjecture:

\begin{conjecture}\label{conj:monoA}
The R\'enyi conditional quantum mutual information obeys monotonicity
with respect to CPTP maps on system $A$, i.e., for $\rho_{ABE}\in\mathcal{S}(\mathcal{H}_{ABE})$ and $\alpha\in(0,1)\cup(1,2]$,
\begin{equation}
I_{\alpha}\left(  A;B|E\right)  _{\rho}\geq I_{\alpha}\left(  A^{\prime
};B|E\right)  _{\tau},
\end{equation}
where $\tau_{A^{\prime}BE} \equiv\mathcal{M}_{A\rightarrow A^{\prime}}
(\rho_{ABE})$ and $\mathcal{M}_{A\rightarrow A^{\prime}}$ is a CPTP map acting
on system $A$ alone.
\end{conjecture}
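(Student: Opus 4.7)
My plan is to reduce the conjectured monotonicity on $A$ to the monotonicity on $B$ already proved in~\cite{BSW14}, using Stinespring dilation together with the duality identity~\eqref{eq:duality}. First decompose $\mathcal{M}_{A\to A'}$ as an isometry $V_{A\to A'G}$ followed by $\operatorname{Tr}_G$. Isometric invariance of $I_\alpha(A;B|E)$ under $V$ follows directly from the Sibson form~\eqref{eq:CMI-sibson}: using $V^\dagger V=I_A$ and $(V\rho V^\dagger)^p = V\rho^p V^\dagger$ for $p\in\{\alpha,(1-\alpha)/2\}$, one checks that the inner partial trace $\operatorname{Tr}_{A'G}$ of the conjugated pencil collapses to $\operatorname{Tr}_A$ of the original pencil. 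Hence it suffices to treat the partial-trace case and prove
\[
I_\alpha(A_1 A_2;B|E)_\rho \;\geq\; I_\alpha(A_1;B|E)_{\rho_{A_1 BE}}
\]
where $A=A_1 A_2$. Now introduce a purification $\phi_{A_1 A_2 B E F}$ of $\rho$. Since $I_\alpha(X;Y|Z)$ depends only on the $XYZ$-marginal, applying~\eqref{eq:duality} to two different four-partite groupings of the five systems of $\phi$ gives
\[
I_\alpha(A_1 A_2;B|E)_\phi = I_\alpha(B;A_1 A_2|F)_\phi, \qquad I_\alpha(A_1;B|E)_\phi = I_\alpha(B;A_1|F A_2)_\phi,
\]
so the conjecture is equivalent to the R\'enyi chain-rule bound
\[
I_\alpha(B;A_1 A_2|F)_\phi \;\geq\; I_\alpha(B;A_1|F A_2)_\phi
\]
on the purification $\phi$. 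In the $\alpha\to 1$ limit this is immediate from $I(B;A_1A_2|F)=I(B;A_2|F)+I(B;A_1|FA_2)$ together with non-negativity of the first term, so the conjecture amounts to a one-sided R\'enyi version of this chain rule.

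To attack the chain rule, I would use the variational form~\eqref{eq:optrenyicmi}: pick a near-optimal $\sigma_{BFA_2}$ for the right-hand side and, after a suitable partial trace or Petz-type twirl over $A_2$, construct a feasible $\sigma_{BF}$ for the left-hand side that achieves a value at least as large. The operator-theoretic input should be operator-concavity or operator-monotonicity of $x\mapsto x^p$ in the Ando--Lieb sense applied to the nested Sibson pencil
\[
\phi_F^{(\alpha-1)/2}\operatorname{Tr}_B\bigl\{\sigma_{BF}^{(1-\alpha)/2}\phi_{A_1 A_2 BF}^{\alpha}\sigma_{BF}^{(1-\alpha)/2}\bigr\}\phi_F^{(\alpha-1)/2}
\]
so as to control its trace after the outer power $x\mapsto x^{1/\alpha}$ under the change of $\sigma$; the sign of $\alpha-1$ determines whether we need monotonicity upward or downward, which is the reason one expects the argument to split into the cases $\alpha\in(0,1)$ and $\alpha\in(1,2]$.

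The main obstacle is precisely this chain-rule step. Unlike the von Neumann case, the Sibson expression involves non-commuting powers of several different marginals of $\phi$, so a naive $A_2$-marginalization of the optimizer does not in general preserve the required operator inequalities, and a genuinely new rearrangement seems to be needed for $\alpha\neq 1$. A promising alternative is via recoverability: a R\'enyi strengthening of the Fawzi--Renner lower bound on $I(A;B|E)$ --- for instance, a bound of the form $I_\alpha(A;B|E)_\rho \geq -\log F\bigl(\rho_{ABE},\mathcal{R}_{E\to AE}(\rho_{BE})\bigr)$ for some fidelity-type quantity $F$ and a CPTP recovery map $\mathcal{R}_{E\to AE}$ depending only on $\rho_{AE}$ --- would immediately imply monotonicity on $A$, because the right-hand side is manifestly monotone under CPTP maps on $A$ (they commute with $\mathcal{R}_{E\to AE}$ and fidelity is monotone under CPTP). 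Any progress toward such bounds along the lines of the techniques in \cite{BSW14,WWY13} should therefore settle Conjecture~\ref{conj:monoA}.
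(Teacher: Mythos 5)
Your proposal does not establish the statement, and it is important to be clear that nothing in the paper does either: Conjecture~\ref{conj:monoA} is precisely that---a conjecture, supported only by the numerical evidence reported in \cite{BSW14}---so there is no ``paper proof'' for your argument to match, and any purported proof must be judged entirely on its own completeness. Your reductions are sound as far as they go: isometric invariance of the Sibson expression \eqref{eq:CMI-sibson} under $V_{A\rightarrow A^{\prime}G}$ holds exactly as you say (the generalized-inverse convention gives $(VXV^{\dag})^{p}=VX^{p}V^{\dag}$ for an isometry $V$, and $\operatorname{Tr}_{A^{\prime}G}\left\{ \left(  V\otimes I_{BE}\right)  Y\left(  V^{\dag}\otimes I_{BE}\right)  \right\} =\operatorname{Tr}_{A}\left\{  Y\right\}$), and your two applications of the duality \eqref{eq:duality} to the five-party purification correctly convert the partial-trace case into the chain-rule inequality $I_{\alpha}\left(  B;A_{1}A_{2}|F\right)  _{\phi}\geq I_{\alpha}\left(  B;A_{1}|FA_{2}\right)  _{\phi}$. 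But note that this target is \emph{not} the $B$-monotonicity proved in \cite{BSW14}: that result would only give $I_{\alpha}\left(  B;A_{1}A_{2}|F\right)  \geq I_{\alpha}\left(  B;A_{1}|F\right)$, with $A_{2}$ discarded, whereas you need $A_{2}$ \emph{moved into the conditioning system}, which is exactly where the non-commutativity of the several marginals in the Sibson pencil bites. So the dilation-plus-duality step merely puts the open problem into a normal form; your own text concedes that the decisive chain-rule step requires ``a genuinely new rearrangement,'' and no such argument is supplied.

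The more serious defect is that your proposed recoverability shortcut is a non sequitur. A bound of the form $I_{\alpha}\left(  A;B|E\right)  _{\rho}\geq-\log F\left(  \rho_{ABE},\mathcal{R}_{E\rightarrow AE}\left(  \rho_{BE}\right)  \right)$ is a \emph{lower} bound. Even granting that its right-hand side is monotone under CPTP maps on $A$ (true, since $\mathcal{M}_{A\rightarrow A^{\prime}}$ composes with the recovery map and fidelity is monotone), applying the bound to the processed state $\tau_{A^{\prime}BE}$ gives $I_{\alpha}\left(  A^{\prime};B|E\right)  _{\tau}\geq-\log F\left(  \tau_{A^{\prime}BE},\ldots\right)$, with the inequality pointing the wrong way: since any Fawzi--Renner-type bound (cf.\ \eqref{eq:FR} and \cite{FR14}) places $-\log F$ \emph{below} the conditional mutual information, you obtain no comparison between $I_{\alpha}\left(  A;B|E\right)  _{\rho}$ and $I_{\alpha}\left(  A^{\prime};B|E\right)  _{\tau}$ whatsoever. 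A monotone lower bound on a quantity never implies that the quantity itself is monotone; for that you would need an exact variational characterization of $I_{\alpha}$ as an optimization over recovery maps in which monotonicity is manifest, which is a much stronger (and likewise unproven) statement. In short: the preparatory reductions are correct and potentially useful, but the proposal contains no proof of the conjecture, and its one claimed ``immediate'' implication is logically invalid.
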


\section{R\'enyi Squashed Entanglement}

In this section, we explore a R\'{e}nyi squashed entanglement,
assuming Conjecture~\ref{conj:monoA}. We show various properties
of the quantity which justify it is a valid entanglement measure (again
assuming Conjecture~\ref{conj:monoA}). We also establish a relation between the R\'{e}nyi squashed entanglement and the R\'{e}nyi entanglement of formation. We end the section with Table~\ref{sqentprops}, which summarizes the properties of the R\'enyi squashed entanglement and draws a comparison with those of the von Neumann entropy based squashed entanglement.

\begin{definition}
The R\'enyi squashed entanglement of $\rho_{AB}\in\mathcal{S}(\mathcal{H}_{AB})$ is defined
for $\alpha\in(0,1)\cup(1,\infty)$ as\label{def:RSE}
\begin{equation}
E_{\alpha}^{\operatorname{sq}}\left(  A;B\right)  _{\rho}\equiv\frac{1}{2}
\inf_{\omega_{ABE}}\left\{  I_{\alpha}\left(  A;B|E\right)  _{\omega}
:\rho_{AB}=\operatorname{Tr}_{E}\left\{  \omega_{ABE}\right\}  \right\}  ,
\label{eq:renyisqent}
\end{equation}
where the infimum is with respect to all extensions $\omega_{ABE}$ of the
state $\rho_{AB}$ and $I_{\alpha}\left(  A;B|E\right)  _{\omega}$ is the
R\'enyi conditional quantum mutual information from \eqref{eq:CMI-sibson}.
\end{definition}

%%%%%%%%%%%%%%%%%%%%%%%%%%%%%%%%%%%%%%%%%%%%%%%%%%%%%%%%%%%%%%%%%%%%%%%%%%%%%%%%%

\subsection{Entanglement monotone}

Assuming Conjecture \ref{conj:monoA}, we now show that the R\'{e}nyi squashed
entanglement of Definition \ref{def:RSE} is an entanglement monotone for $\alpha\in\left(  0,1\right)$ (see~\cite{PV07, HHHH09} for a review of the
axiomatic approach to entanglement measures). We say that a quantity is an entanglement monotone if it is an LOCC monotone and convex~\cite{Vida00}. We prove that the R\'{e}nyi squashed entanglement is an LOCC monotone for $\alpha
\in\left(  0,1\right)  \cup(1,2]$ (assuming Conjecture~\ref{conj:monoA}) and
convex for $\alpha\in\left(  0,1\right)$. Before beginning,
we recall that local operations and classical communication
consists of compositions of the following operations:

\begin{enumerate}
\item Alice performs a quantum instrument, which has both a quantum and
classical output. She forwards the classical output to Bob, who then performs
a quantum operation conditioned on the classical data received.

\item The situation is reversed, with Bob performing the initial instrument,
who forwards the classical data to Alice, who then performs a quantum
operation conditioned on the classical data.
\end{enumerate}
Thus, in order to establish a measure as an LOCC\ monotone, it suffices to
show that it is non-increasing with respect to local operations and invariant with respect to the
sharing of classical data.

\begin{consequence}
Assuming Conjecture \ref{conj:monoA}, the R\'{e}nyi squashed entanglement
$E_{\alpha}^{\operatorname{sq}}\left(  A;B\right)  _{\rho}$ is monotone with respect to
local operations for $\alpha\in(0,1)\cup(1,2]$:\label{lem:localmono}
\begin{equation}
E_{\alpha}^{\operatorname{sq}}\left(  A;B\right)  _{\rho}\geq E_{\alpha
}^{\operatorname{sq}}(A^{\prime};B^{\prime})_{\sigma},
\end{equation}
where $\rho_{AB}\in\mathcal{S}(\mathcal{H}_{AB})$, $\sigma_{A^{\prime}B^{\prime}}\equiv\left(  \mathcal{N}_{A\rightarrow
A^{\prime}}\otimes\mathcal{M}_{B\rightarrow B^{\prime}}\right)  \left(
\rho_{AB}\right)  $, and $\mathcal{N}_{A\rightarrow A^{\prime}}$, $\mathcal{M}_{B\rightarrow B^{\prime}}$ are CPTP maps.
\end{consequence}

\begin{proof}
This follows directly from monotonicity with respect to local operations of the
R\'{e}nyi conditional mutual information $I_{\alpha}\left(  A;B|E\right)
_{\rho}$.
\end{proof}

\begin{consequence}
Assuming Conjecture~\ref{conj:monoA}, the R\'{e}nyi squashed entanglement
$E_{\alpha}^{\operatorname{sq}}\left(  A;B\right)  _{\rho}$ is invariant with respect to
classical communication for $\alpha\in(0,1)\cup(1,2]$, in the sense
that\label{lem:ccmono}
\begin{equation}
E_{\alpha}^{\operatorname{sq}}\left(  AX_{A};B\right)  _{\rho}=E_{\alpha
}^{\operatorname{sq}}\left(  AX_{A};BX_{B}\right)  _{\rho}=E_{\alpha
}^{\operatorname{sq}}\left(  A;BX_{B}\right)  _{\rho},
\end{equation}
where
\begin{equation}
\rho_{X_{A}X_{B}AB}\equiv\sum_{x}p_{X}\left(  x\right)  \left\vert
x\right\rangle \left\langle x\right\vert _{X_{A}}\otimes\left\vert
x\right\rangle \left\langle x\right\vert _{X_{B}}\otimes\rho_{AB}^{x}.
\end{equation}
\end{consequence}

\begin{proof}
Let $\rho_{X_{A}ABE}$ be any extension of $\rho_{X_{A}AB}$, taking the form
\begin{equation}
\sum_{x}p_{X}\left(  x\right)  \left\vert x\right\rangle \left\langle
x\right\vert _{X_{A}}\otimes\rho_{ABE}^{x},
\end{equation}
where $\rho_{ABE}^{x}$ is an extension of $\rho_{AB}^{x}$. Let $\left\vert
\varphi^{\rho}\right\rangle _{X_{A}X_{R}ABER}$ be a purification of
$\rho_{X_{A}ABE}$, taken as
\begin{equation}
\left\vert \varphi^{\rho}\right\rangle _{X_{A}X_{R}ABER}\equiv\sum_{x}
\sqrt{p_{X}\left(  x\right)  }\left\vert x\right\rangle _{X_{A}}\left\vert
x\right\rangle _{X_{R}}\left\vert \varphi^{\rho_{x}}\right\rangle _{ABER},
\end{equation}
where $\left\vert \varphi^{\rho_{x}}\right\rangle _{ABER}$ purifies
$\rho_{ABE}^{x}$. Let $\overline{\varphi}_{X_{A}X_{B}X_{R}ABER}^{\rho}$ be
defined as
\begin{equation}
\overline{\varphi}_{X_{A}X_{B}X_{R}ABER}^{\rho}\equiv\sum_{x}p_{X}\left(
x\right)  \left\vert x\right\rangle \left\langle x\right\vert _{X_{A}}
\otimes\left\vert x\right\rangle \left\langle x\right\vert _{X_{B}}
\otimes\left\vert x\right\rangle \left\langle x\right\vert _{X_{R}}
\otimes\varphi_{ABER}^{\rho_{x}}.
\end{equation}
Let $\theta_{X_{A}X_{B}X_{R}ABERF}$ be a purification of $\overline{\varphi
}_{X_{A}X_{B}X_{R}ABER}^{\rho}$. We then have that
\begin{align}
I_{\alpha}\left(  AX_{A};B|E\right)  _{\rho} &  =I_{\alpha}\left(
B;AX_{A}|X_{R}R\right)  _{\varphi^{\rho}}\\
&  \geq I_{\alpha}\left(  B;AX_{A}|X_{R}R\right)  _{\overline{\varphi}^{\rho}
}\\
&  =\frac{\alpha}{\alpha-1}\log\sum_{x}p_{X}\left(  x\right)  \exp\left\{
\left(  \frac{\alpha-1}{\alpha}\right)  I_{\alpha}\left( B ;AX_{A}|R\right)
_{\left\vert x\right\rangle \left\langle x\right\vert \otimes\varphi^{\rho
_{x}}}\right\}  \\
&  =\frac{\alpha}{\alpha-1}\log\sum_{x}p_{X}\left(  x\right)  \exp\left\{
\left(  \frac{\alpha-1}{\alpha}\right)  I_{\alpha}\left(  
BX_{B};AX_{A}|R\right)  _{\left\vert x\right\rangle \left\langle x\right\vert
\otimes\left\vert x\right\rangle \left\langle x\right\vert \otimes
\varphi^{\rho_{x}}}\right\}  \\
&  =I_{\alpha}\left(  BX_{B};AX_{A}|X_{R}R\right)  _{\overline{\varphi}^{\rho
}}\\
&  =I_{\alpha}\left(  AX_{A};BX_{B}|EF\right)  _{\theta}\\
&  \geq2E_{\alpha}^{\operatorname{sq}}\left(  AX_{A};BX_{B}\right)  _{\rho}.
\end{align}
The first equality follows from duality of the R\'{e}nyi conditional quantum
mutual information given in (\ref{eq:duality}). The first inequality is a
result of monotonicity with respect to a local dephasing operation $\sum_{x}\left\vert
x\right\rangle \left\langle x\right\vert _{X_{A}}\left(  \cdot\right)
\left\vert x\right\rangle \left\langle x\right\vert _{X_{A}}$. The second equality follows from
Lemma~\ref{lem:classical-cond} given in Appendix \ref{sub:CononC}. The third
equality is an application of Lemma~\ref{lem:tensor-prod} given in Appendix
\ref{sub:Invaritens}. The fourth equality is from another application of
Lemma~\ref{lem:classical-cond}. The fifth equality is from another application
of duality given in (\ref{eq:duality}). The final inequality is a result of
Definition \ref{def:RSE}. Since the inequality holds for any extension of
$\rho_{X_{A}AB}$, it follows that
\begin{equation}
E_{\alpha}^{\operatorname{sq}}\left(  AX_{A};B\right)  _{\rho}\geq E_{\alpha
}^{\operatorname{sq}}\left(  AX_{A};BX_{B}\right)  _{\rho}.\label{eq:ineq1}
\end{equation}
By a similar line of reasoning (which relies on
Conjecture \ref{conj:monoA}), it follows that
\begin{equation}
E_{\alpha}^{\operatorname{sq}}\left(  A;BX_{B}\right)  _{\rho}\geq E_{\alpha
}^{\operatorname{sq}}\left(  AX_{A};BX_{B}\right)  _{\rho}.
\end{equation}
Finally, from monotonicity of $E_{\alpha}^{\operatorname{sq}}$ with respect to local
operations (which relies on
Conjecture \ref{conj:monoA}), we can conclude that
\begin{align}
E_{\alpha}^{\operatorname{sq}}\left(  AX_{A};BX_{B}\right)  _{\rho} &  \geq
E_{\alpha}^{\operatorname{sq}}\left(  AX_{A};B\right)  _{\rho},\\
E_{\alpha}^{\operatorname{sq}}\left(  AX_{A};BX_{B}\right)  _{\rho} &  \geq
E_{\alpha}^{\operatorname{sq}}\left(  A;BX_{B}\right)  _{\rho}
.\label{eq:ineq4}
\end{align}
So (\ref{eq:ineq1})-(\ref{eq:ineq4}) give the statement of the lemma.
\end{proof}

\begin{consequence}
Assuming Conjecture \ref{conj:monoA}, the R\'enyi squashed entanglement
$E_{\alpha}^{\operatorname{sq}}\left(  A;B\right)  _{\rho}$ is an LOCC
monotone for $\alpha\in(0,1)\cup(1,2]$. \label{thm:loccmono}
\end{consequence}

\begin{proof}
This consequence follows from Consequences \ref{lem:localmono} and \ref{lem:ccmono}.
\end{proof}

\begin{proposition}\label{prop:convexity}
The R\'{e}nyi squashed entanglement $E_{\alpha
}^{\operatorname{sq}}\left(  A;B\right)  _{\rho}$ is convex for $\alpha
\in(0,1)$, in the sense that
\begin{equation}
\sum_{x}p_{X}\left(  x\right)  E_{\alpha}^{\operatorname{sq}}\left(
A;B\right)  _{\rho^{x}}\geq E_{\alpha}^{\operatorname{sq}}\left(  A;B\right)
_{\overline{\rho}},
\end{equation}
where
\begin{equation}
\overline{\rho}_{AB}\equiv\sum_{x}p_{X}\left(  x\right)  \rho_{AB}^{x}.
\end{equation}
\end{proposition}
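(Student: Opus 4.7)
The plan is to mimic the standard convexity proof for squashed entanglement: paste together near-optimal extensions of the $\rho^{x}$ into a single extension of $\overline{\rho}$ that carries an extra classical register labelling $x$, and then show that the flag cannot increase the R\'enyi conditional mutual information when $\alpha\in(0,1)$.

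Concretely, I would fix $\epsilon>0$ and, for each $x$, pick an extension $\omega^{x}_{ABE}$ of $\rho^{x}_{AB}$ with $\tfrac{1}{2}I_\alpha(A;B|E)_{\omega^x}\le E_\alpha^{\operatorname{sq}}(A;B)_{\rho^x}+\epsilon$ (padding if necessary to a common $E$). The natural joint extension of $\overline{\rho}_{AB}$ is
\[
\omega_{ABEX}\equiv\sum_x p_X(x)\,\omega_{ABE}^x\otimes|x\rangle\langle x|_X,
\]
with extension system $EX$, so Definition~\ref{def:RSE} gives $2E_\alpha^{\operatorname{sq}}(A;B)_{\overline{\rho}}\le I_\alpha(A;B|EX)_\omega$. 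Since $\omega_{ABEX}$, $\omega_{AEX}$, and $\omega_{EX}$ are all block-diagonal in $X$ with blocks weighted by $p_X(x)$, feeding them into the Sibson expression \eqref{eq:CMI-sibson} makes all powers separate block by block; the weights inside the outer $1/\alpha$-power collect to $p_X(x)^\alpha$ and hence reduce to $p_X(x)$ after the root. This is precisely the classical-conditioning identity (Lemma~\ref{lem:classical-cond}) already invoked in the proof of Consequence~\ref{lem:ccmono}, and yields
\[
I_\alpha(A;B|EX)_\omega=\frac{\alpha}{\alpha-1}\log\sum_x p_X(x)\exp\!\Bigl\{\tfrac{\alpha-1}{\alpha}\,I_\alpha(A;B|E)_{\omega^x}\Bigr\}.
\]

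The final move is Jensen. Setting $\beta\equiv(\alpha-1)/\alpha$ and $a_x\equiv I_\alpha(A;B|E)_{\omega^x}$, convexity of $t\mapsto e^{\beta t}$ gives $\sum_x p_X(x)e^{\beta a_x}\ge\exp\!\bigl(\beta\sum_x p_X(x)a_x\bigr)$; logging and dividing by $\beta$ reverses the inequality once $\beta<0$, producing $\beta^{-1}\log\sum_x p_X(x)e^{\beta a_x}\le\sum_x p_X(x)a_x$. Chaining all the inequalities and sending $\epsilon\to 0$ concludes the proof. The main (essentially only) obstacle is the sign of $\beta$: the Jensen step goes the right way exactly when $(\alpha-1)/\alpha<0$, i.e., $\alpha\in(0,1)$. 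For $\alpha>1$ the log-sum-exp sits above the arithmetic mean of the $a_x$, so this route does not give convexity there; everything else in the argument—pasting extensions, reading off the classical-block Sibson identity, and Jensen itself—is routine and does not require Conjecture~\ref{conj:monoA}.
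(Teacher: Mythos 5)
Your proposal is correct and follows essentially the same route as the paper's proof: flag the extensions $\omega^x_{ABE}$ with a classical register $X$ to obtain an extension of $\overline{\rho}_{AB}$, evaluate $I_\alpha(A;B|EX)$ via the classical-conditioning identity (Lemma~\ref{lem:classical-cond}), and apply Jensen, which works precisely because $(\alpha-1)/\alpha<0$ for $\alpha\in(0,1)$ (the paper phrases this as convexity of $-\log$ rather than of $t\mapsto e^{\beta t}$, an equivalent dual formulation, and works with arbitrary extensions plus an infimum instead of your $\epsilon$-near-optimal ones). Your observation that the Jensen step reverses for $\alpha>1$, and that Conjecture~\ref{conj:monoA} is not needed, also matches the paper.
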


\begin{proof}
Let $\rho_{ABE}^{x}$ be any extension of $\rho_{AB}^{x}$. Let
\begin{equation}
\rho_{XABE}\equiv\sum_{x}p_{X}\left(  x\right)  \left\vert x\right\rangle
\left\langle x\right\vert _{X}\otimes\rho_{ABE}^{x}.
\end{equation}
Observe that $\rho_{XABE}$ is an extension of $\overline{\rho}_{AB}$, where
the extension systems are $X$ and $E$. So then
\begin{align}
2E_{\alpha}^{\operatorname{sq}}\left(  A;B\right)  _{\overline{\rho}}  &  \leq
I_{\alpha}\left(  A;B|EX\right) \\
&  =\frac{\alpha}{\alpha-1}\log\sum_{x}p_{X}\left(  x\right)  \exp\left\{
\left(  \frac{\alpha-1}{\alpha}\right)  I_{\alpha}\left(  A;B|E\right)
_{\rho^{x}}\right\} \\
&  \leq\sum_{x}p_{X}\left(  x\right)  \left[  \frac{\alpha}{\alpha-1}\log
\exp\left\{  \left(  \frac{\alpha-1}{\alpha}\right)  I_{\alpha}\left(
A;B|E\right)  _{\rho^{x}}\right\}  \right] \\
&  =\sum_{x}p_{X}\left(  x\right)  I_{\alpha}\left(  A;B|E\right)  _{\rho^{x}
}.
\end{align}
The first inequality follows from the definition of $E_{\alpha}
^{\operatorname{sq}}\left(  A;B\right)  _{\overline{\rho}}$, and the second
inequality follows from convexity of $-\log$. Since the inequality holds for
any extension of each $\rho_{ABE}^{x}$, we can conclude the statement of the proposition.
\end{proof}

\begin{consequence}
Assuming Conjecture \ref{conj:monoA}, the R\'enyi squashed entanglement
$E_{\alpha}^{\operatorname{sq}}\left(  A;B\right)  _{\rho}$ is an entanglement
monotone for $\alpha\in\left(  0,1\right)  $.
\end{consequence}

\begin{proof}
This consequence follows from Consequence \ref{thm:loccmono} and Proposition
\ref{prop:convexity}.
\end{proof}

%%%%%%%%%%%%%%%%%%%%%%%%%%%%%%%%%%%%%%%%%%%%%%%%%%%%%%%%%%%%%%%%%%%%%%%%%%%%%%%%%

\subsection{Other properties}

Having shown that the R\'enyi squashed entanglement is an entanglement
monotone (up to Conjecture~\ref{conj:monoA}), we now prove more properties of the quantity that bolster it as a valid entanglement measure.

\begin{proposition}\label{faithprop}
For separable states $\rho_{AB}\in\mathcal{S}(\mathcal{H}_{AB})$ the R\'enyi squashed entanglement $E_{\alpha}^{\operatorname{sq}}\left(  A;B\right)  _{\rho}$ vanishes for $\alpha\in(0,1)\cup(1,\infty)$.
\end{proposition}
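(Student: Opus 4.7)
The plan is to exhibit an explicit extension that drives $I_\alpha(A;B|E)_\omega$ to zero, and then invoke non-negativity of the R\'enyi conditional mutual information (stated in the preliminaries) to conclude $E_\alpha^{\operatorname{sq}}(A;B)_\rho = 0$.

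Since $\rho_{AB}$ is separable, write a decomposition
\begin{equation}
\rho_{AB} = \sum_x p_X(x)\, \sigma_A^x \otimes \tau_B^x,
\end{equation}
and introduce the classical flag extension
\begin{equation}
\omega_{ABE} \equiv \sum_x p_X(x)\, \sigma_A^x \otimes \tau_B^x \otimes |x\rangle\langle x|_E,
\end{equation}
which clearly satisfies $\operatorname{Tr}_E\{\omega_{ABE}\} = \rho_{AB}$. The key observation is that $\omega_E$, $\omega_{AE}$, and $\omega_{ABE}$ are all block-diagonal in the orthonormal basis $\{|x\rangle_E\}$, so every fractional power appearing in the Sibson form \eqref{eq:CMI-sibson} acts block-wise.

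I would then plug $\omega_{ABE}$ directly into \eqref{eq:CMI-sibson}. The inner operator is
\begin{equation}
\omega_{AE}^{(1-\alpha)/2}\, \omega_{ABE}^{\alpha}\, \omega_{AE}^{(1-\alpha)/2} = \sum_x p_X(x)\, \sigma_A^x \otimes (\tau_B^x)^{\alpha} \otimes |x\rangle\langle x|_E,
\end{equation}
where the powers of $p_X(x)$ add to $1$ and the three $\sigma_A^x$ factors recombine to $\sigma_A^x$ on its support (using the generalized-inverse convention from the preliminaries). Tracing over $A$ uses $\operatorname{Tr}\{\sigma_A^x\}=1$, and then conjugation by $\omega_E^{(\alpha-1)/2}$ replaces $p_X(x)$ with $p_X(x)^{\alpha}$. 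Taking the $1/\alpha$ power and the trace gives $\sum_x p_X(x) = 1$, so
\begin{equation}
I_\alpha(A;B|E)_\omega = \frac{\alpha}{\alpha-1}\log 1 = 0.
\end{equation}
Combined with $I_\alpha \geq 0$ (hence $E_\alpha^{\operatorname{sq}} \geq 0$), the infimum in Definition~\ref{def:RSE} is attained at zero.

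There is no real obstacle here beyond careful bookkeeping of fractional powers; the block-diagonal structure in $E$ is what makes every exponent combine additively, and classicality of $E$ plus product structure on each branch is what kills the $A$--$B$ correlations. One small detail worth noting in the write-up is that the calculation is insensitive to whether $\alpha<1$ or $\alpha>1$, since the sign flip in the prefactor $\alpha/(\alpha-1)$ multiplies $\log 1 = 0$ in either case; hence the conclusion covers all $\alpha \in (0,1)\cup(1,\infty)$ uniformly.
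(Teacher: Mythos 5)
Your proof is correct, and it follows the same basic route as the paper's---a classical flag extension built from a separable decomposition---but the execution differs in two ways. First, the paper decomposes $\rho_{AB}$ into \emph{pure} product states $|\psi_x\rangle\langle\psi_x|_A\otimes|\phi_x\rangle\langle\phi_x|_B$, whereas you allow a general mixed product decomposition $\sigma_A^x\otimes\tau_B^x$; your block computation shows purity is never needed, so your argument is marginally more general (though every separable state admits a pure product decomposition, so nothing essential hinges on this). Second, the paper modularizes: it invokes Lemma~\ref{lem:classical-cond} to reduce $I_\alpha(A;B|X)$ to an average of R\'enyi mutual informations on the branches, and then kills each branch by noting that the quantity in \eqref{RMI} vanishes on product states by non-negativity of the R\'enyi relative entropy. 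You instead inline everything as a single block-diagonal evaluation of the Sibson form \eqref{eq:CMI-sibson}; your exponent bookkeeping (the powers of $p_X(x)$ summing to $1$, becoming $\alpha$ after conjugation by $\omega_E^{(\alpha-1)/2}$, and returning to $1$ after the $1/\alpha$ power, with the generalized-inverse convention handling $\alpha>1$) checks out and is in effect a re-derivation of the relevant special case of Lemma~\ref{lem:classical-cond} fused with the product-state evaluation. The paper's route buys reusability, since that lemma is applied several more times elsewhere; yours buys a self-contained computation. One shared caveat: both arguments conclude $E_\alpha^{\operatorname{sq}}=0$ rather than merely $E_\alpha^{\operatorname{sq}}\le 0$ by appealing to non-negativity of $I_\alpha$, which \cite{BSW14} establishes for $\alpha\in(0,1)\cup(1,2]$; for $\alpha>2$ this step is exactly as implicit in your write-up as it is in the paper's own proof, so it is not a defect of your proposal relative to the paper.
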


\begin{proof}
We can take a convex decomposition of any separable state as follows:
\begin{equation}
\rho_{AB}=\sum_{x}p_{X}\left(  x\right)  \left\vert \psi_{x}\right\rangle
\left\langle \psi_{x}\right\vert _{A}\otimes\left\vert \phi_{x}\right\rangle
\left\langle \phi_{x}\right\vert _{B} .
\end{equation}
Then an extension is
\begin{equation}
\rho_{ABX}=\sum_{x}p_{X}\left(  x\right)  \left\vert \psi_{x}\right\rangle
\left\langle \psi_{x}\right\vert _{A}\otimes\left\vert \phi_{x}\right\rangle
\left\langle \phi_{x}\right\vert _{B}\otimes\left\vert x\right\rangle
\left\langle x\right\vert _{X}.
\end{equation}
In this case, we apply Lemma~\ref{lem:classical-cond}\ to find that
\begin{equation}
I_{\alpha}\left(  A;B|X\right)  =\frac{\alpha}{\alpha-1}\log\sum_{x}
p_{X}(  x)  \exp\left\{\left(  \frac{\alpha-1}{\alpha}\right)  I_{\alpha
}\left(  A;B\right)  _{\psi_{x}\otimes\phi_{x}} \right\} .
\end{equation}
The R\'{e}nyi mutual information is equal to zero for any product state. This
is easily seen from (\ref{RMI}) and the fact that the R\'{e}nyi relative
entropy is non-negative. This implies that
\begin{align}
I_{\alpha}\left(  A;B|X\right)=\frac{\alpha}{\alpha-1}\log\sum_{x}
p_{X}\left(  x\right)=0.
\end{align}

\end{proof}
\begin{remark}
Assuming~\cite[Conjecture 34]{BSW14}, it follows from the faithfulness of the original von Neumann entropy-based squashed entanglement \cite{BCY11} and Proposition~\ref{faithprop} that the R\'enyi squashed entanglement $E_{\alpha}^{\operatorname{sq}}\left(  A;B\right)  _{\rho}$ is faithful for $\alpha\in(1,\infty)$ (i.e., equals zero iff the state $\rho_{AB}$ is separable).
\end{remark}

\begin{proposition}
On tensor-product states $\rho_{A_{1}A_{2}B_{1}B_{2}}\equiv\sigma_{A_{1}B_{1}}\otimes\tau_{A_{2}B_{2}}\in\mathcal{S}(\mathcal{H}_{A_{1}A_{2}B_{1}B_{2}})$ the R\'enyi squashed entanglement is subadditive for $\alpha\in(0,1)\cup(1,\infty)$, in the sense that
\begin{equation}
E_{\alpha}^{\operatorname{sq}}\left(  A_{1}A_{2};B_{1}B_{2}\right)  _{\rho
}\leq E_{\alpha}^{\operatorname{sq}}\left(  A_{1};B_{1}\right)  _{\sigma}+
E_{\alpha}^{\operatorname{sq}}\left(  A_{2};B_{2}\right)  _{\tau}.
\end{equation}
\end{proposition}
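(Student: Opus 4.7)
The plan is to use the standard subadditivity-via-product-extension argument: exhibit a single extension of $\rho_{A_1 A_2 B_1 B_2}$ built as the tensor product of arbitrary extensions of $\sigma_{A_1 B_1}$ and $\tau_{A_2 B_2}$, and then appeal to tensor-product additivity of the R\'enyi conditional quantum mutual information from the Sibson-style expression \eqref{eq:CMI-sibson}.

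First, let $\omega^{1}_{A_1 B_1 E_1}$ be any extension of $\sigma_{A_1 B_1}$ and $\omega^{2}_{A_2 B_2 E_2}$ be any extension of $\tau_{A_2 B_2}$. Then
\begin{equation}
\omega_{A_1 A_2 B_1 B_2 E_1 E_2} \equiv \omega^{1}_{A_1 B_1 E_1} \otimes \omega^{2}_{A_2 B_2 E_2}
\end{equation}
is a valid extension of $\rho_{A_1 A_2 B_1 B_2} = \sigma_{A_1 B_1} \otimes \tau_{A_2 B_2}$, with combined conditioning system $E_1 E_2$. By Definition~\ref{def:RSE}, we immediately get
\begin{equation}
2\, E_{\alpha}^{\operatorname{sq}}(A_1 A_2; B_1 B_2)_{\rho} \leq I_{\alpha}(A_1 A_2; B_1 B_2 \mid E_1 E_2)_{\omega}.
\end{equation}

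The key step is then to establish the additivity identity
\begin{equation}
I_{\alpha}(A_1 A_2; B_1 B_2 \mid E_1 E_2)_{\omega^{1} \otimes \omega^{2}} = I_{\alpha}(A_1; B_1 \mid E_1)_{\omega^{1}} + I_{\alpha}(A_2; B_2 \mid E_2)_{\omega^{2}}.
\end{equation}
This follows by direct computation from \eqref{eq:CMI-sibson}: all relevant marginals factorize as $\omega_{E_1 E_2} = \omega^{1}_{E_1} \otimes \omega^{2}_{E_2}$ and $\omega_{A_1 A_2 E_1 E_2} = \omega^{1}_{A_1 E_1} \otimes \omega^{2}_{A_2 E_2}$, the full state satisfies $\omega^{\alpha} = (\omega^{1})^{\alpha} \otimes (\omega^{2})^{\alpha}$, and the partial trace over $A_1 A_2$ splits across the factors. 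Consequently the operator inside the outer trace in \eqref{eq:CMI-sibson} is a tensor product, its $1/\alpha$-th power is a tensor product, and the trace factorizes into a product. Taking the logarithm converts the product into a sum, yielding the stated additivity.

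Combining these two observations and then taking the infimum over $\omega^{1}$ and over $\omega^{2}$ independently gives
\begin{equation}
2\, E_{\alpha}^{\operatorname{sq}}(A_1 A_2; B_1 B_2)_{\rho} \leq 2\, E_{\alpha}^{\operatorname{sq}}(A_1; B_1)_{\sigma} + 2\, E_{\alpha}^{\operatorname{sq}}(A_2; B_2)_{\tau},
\end{equation}
from which the claim follows after dividing by two. There is no real obstacle here: the argument is purely structural, and the only genuine work is verifying the tensor-product additivity of $I_{\alpha}$ from \eqref{eq:CMI-sibson}, which is a routine algebraic check made transparent by the Sibson form.
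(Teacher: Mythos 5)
Your proof is correct and follows essentially the same route as the paper: both take an arbitrary product extension $\omega^{1}_{A_1B_1E_1}\otimes\omega^{2}_{A_2B_2E_2}$ of $\rho_{A_1A_2B_1B_2}$, invoke the additivity of $I_{\alpha}$ on tensor products via direct substitution into the Sibson expression \eqref{eq:CMI-sibson}, and conclude by taking the infimum over the two extensions independently. The only difference is that you spell out the factorization of the marginals, powers, and partial trace explicitly, whereas the paper compresses this into the phrase ``follows by direct substitution.''
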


\begin{proof}
Let $\sigma_{A_{1}B_{1}E_{1}}$ be some extension of $\sigma_{A_{1}B_{1}}$, and
let $\tau_{A_{2}B_{2}E_{2}}$ be some extension of $\tau_{A_{2}B_{2}}$. Then
\begin{equation}
\omega_{A_{1}B_{1}E_{1}A_{2}B_{2}E_{2}}\equiv\sigma_{A_{1}B_{1}E_{1}}
\otimes\tau_{A_{2}B_{2}E_{2}}
\end{equation}
is an extension of $\rho_{A_{1}A_{2}B_{1}B_{2}}$, so that
\begin{align}
2E_{\alpha}^{\operatorname{sq}}\left(  A_{1}A_{2};B_{1}B_{2}\right)  _{\rho}
&  \leq I_{\alpha}\left(  A_{1}A_{2};B_{1}B_{2}|E_{1}E_{2}\right)  _{\omega}\\
&  =I_{\alpha}\left(  A_{1};B_{1}|E_{1}\right)  _{\sigma}+I_{\alpha}\left(
A_{2};B_{2}|E_{2}\right)  _{\tau}.
\end{align}
The inequality is from the definition of R\'enyi squashed entanglement, and
the equality follows by direct substitution into (\ref{eq:CMI-sibson}). Since
the inequality is independent of the particular extensions of $\sigma
_{A_{1}B_{1}}$ and $\tau_{A_{2}B_{2}}$, the statement of the proposition follows.
\end{proof}

%%%%%%%%%%%%%%%%%%%%%%%%%%%%%%%%%%%%%%%%%%%%%%%%%%%%%%%%%%%%%%%%%%%%%%%%%%%%%%%%%

\subsection{Relations to R\'enyi entropy of entanglement and R\'enyi entanglement of formation}

The entropy of entanglement~\cite{BBPS96} and the entanglement of
formation~\cite{BDSW96} are among the earliest proposed measures of
entanglement for pure and mixed bipartite states, respectively. While the
former is defined as the von Neumann entropy of either reduced density
operators of a bipartite pure state, the latter is based upon the entropy of
entanglement via an extended convex roof construction. The entanglement of
formation of a bipartite state $\rho_{AB}$ is defined as
\begin{equation}
E^{F}(A;B)_{\rho}\equiv\min_{\left\{  p_{X}\left(  x\right)  ,\left\vert
\psi^{x}\right\rangle _{AB}\right\}  }\left\{  \sum_{x}p_{X}(x)E\left(
\psi_{AB}^{x}\right)  :\rho_{AB}=\sum_{x}p_{X}\left(  x\right)  \left\vert
\psi^{x}\right\rangle \left\langle \psi^{x}\right\vert _{AB}\right\}
,\label{eq:EF}
\end{equation}
where $E\left(  \psi_{AB}^{x}\right)  =H(A)_{\psi^{x}}$ is the entropy of
entanglement of pure state $|\psi^{x}\rangle_{AB}$. Note that the expression
in (\ref{eq:EF}) can be written more concisely in terms of the conditional
entropy $H\left(  A|X\right)  _{\sigma}$, where $\sigma_{XAB}$ is a
classical-quantum state of the following form:
\begin{equation}
\sigma_{XAB}\equiv\sum_{x}p_{X}\left(  x\right)  \left\vert x\right\rangle
\left\langle x\right\vert _{X}\otimes\left\vert \psi^{x}\right\rangle
\left\langle \psi^{x}\right\vert _{AB}.\label{cqstate1}
\end{equation}
That is,
\begin{equation}
E^{F}(A;B)_{\rho}=\min_{\left\{  p_{X}\left(  x\right)  ,\left\vert \psi
^{x}\right\rangle _{AB}\right\}  }\left\{  H(A|X)_{\sigma}:\rho_{AB}=\sum
_{x}p_{X}\left(  x\right)  \left\vert \psi^{x}\right\rangle \left\langle
\psi^{x}\right\vert _{AB}\right\}.
\end{equation}
We now consider a R\'{e}nyi entropy of entanglement and a R\'{e}nyi
entanglement of formation defined based upon the R\'{e}nyi entropy and the
R\'{e}nyi conditional entropy of (\ref{eq:condent}), respectively. We also
derive some relationships between the R\'{e}nyi squashed entanglement and
these quantities. While the R\'{e}nyi entropy of entanglement of a pure state
$\psi_{AB}^{x}$ is simply the R\'{e}nyi entropy $H_{\alpha}(A)_{\psi}$, the
R\'{e}nyi entanglement of formation of a state $\rho_{AB}$ is defined as follows.

\begin{definition}
The R\'enyi entanglement of formation of $\rho_{AB}\in\mathcal{S}(\mathcal{H}_{AB})$ is defined for $\alpha\in\left(  0,1\right)  \cup\left(  1,\infty\right)  $ as
\begin{equation}
E_{\alpha}^{F}\left(  A;B\right)  _{\rho}\equiv\inf_{\left\{  p_{X}\left(
x\right)  ,\left\vert \psi^{x}\right\rangle _{AB}\right\}  }\left\{
H_{\alpha}\left(  A|X\right)  _{\sigma}:\rho_{AB}=\sum_{x}p_{X}\left(
x\right)  \left\vert \psi^{x}\right\rangle \left\langle \psi^{x}\right\vert
_{AB}\right\}  , \label{eq:rentform}
\end{equation}
where $\sigma_{XAB}$ is a classical-quantum state of the form as in \eqref{cqstate1}.
\end{definition}

\begin{proposition}
The R\'{e}nyi entanglement of formation of $\rho_{AB}\in\mathcal{S}(\mathcal{H}_{AB})$ can be written as
\begin{equation}
E_{\alpha}^{F}\left(  A;B\right)  _{\rho}=\min_{\left\{  p_{X}\left(
x\right)  ,\left\vert \psi^{x}\right\rangle _{AB}\right\}  }\left\{
\frac{\alpha}{1-\alpha}\log\sum_{x}p_{X}\left(  x\right)  \left[
\operatorname{Tr}\left\{  \left(  \psi_{A}^{x}\right)  ^{\alpha}\right\}
^{1/\alpha}\right]  :\rho_{AB}=\sum_{x}p_{X}\left(  x\right)  \left\vert
\psi^{x}\right\rangle \left\langle \psi^{x}\right\vert _{AB}\right\}
.\label{eq:rentformclosed}
\end{equation}
\end{proposition}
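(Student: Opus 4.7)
The plan is to reduce the inf over decompositions in the definition of $E_\alpha^F$ to a direct evaluation of $H_\alpha(A|X)_\sigma$ on the classical-quantum state $\sigma_{XAB}$ of \eqref{cqstate1}, using the Sibson identity form of the R\'enyi conditional entropy given in \eqref{eq:condent}:
\begin{equation}
H_\alpha(A|X)_\sigma = \frac{\alpha}{1-\alpha}\log\operatorname{Tr}\!\left\{\left(\operatorname{Tr}_A\{\sigma_{XA}^{\alpha}\}\right)^{1/\alpha}\right\}.
\end{equation}
Everything then reduces to computing the right-hand side explicitly for a classical-quantum state, which is straightforward because the classical register $X$ block-diagonalizes all the operators involved.

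The key computational steps, in order, are as follows. First I trace out $B$ to obtain $\sigma_{XA}=\sum_x p_X(x)\,|x\rangle\langle x|_X\otimes \psi_A^x$, where $\psi_A^x=\operatorname{Tr}_B\{|\psi^x\rangle\langle\psi^x|_{AB}\}$. Since the blocks indexed by $x$ are mutually orthogonal on $X$, I can apply any scalar function block-wise; in particular
\begin{equation}
\sigma_{XA}^{\alpha}=\sum_x p_X(x)^{\alpha}\,|x\rangle\langle x|_X\otimes(\psi_A^x)^{\alpha}.
\end{equation}
Next, partial-tracing over $A$ yields an operator that is diagonal in the $|x\rangle$ basis:
\begin{equation}
\operatorname{Tr}_A\{\sigma_{XA}^{\alpha}\}=\sum_x p_X(x)^{\alpha}\operatorname{Tr}\{(\psi_A^x)^{\alpha}\}\,|x\rangle\langle x|_X.
\end{equation}
Because this is diagonal, raising it to the power $1/\alpha$ is done entry-wise, and taking the trace gives
\begin{equation}
\operatorname{Tr}\!\left\{\left(\operatorname{Tr}_A\{\sigma_{XA}^{\alpha}\}\right)^{1/\alpha}\right\}=\sum_x p_X(x)\,\bigl[\operatorname{Tr}\{(\psi_A^x)^{\alpha}\}\bigr]^{1/\alpha}.
\end{equation}
Plugging this back into the Sibson identity expression for $H_\alpha(A|X)_\sigma$ gives precisely the objective on the right-hand side of \eqref{eq:rentformclosed}, and inserting this into \eqref{eq:rentform} yields the claimed expression.

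There is no real obstacle here beyond bookkeeping; the only subtlety worth remarking on is the replacement of $\inf$ by $\min$, which is justified by continuity of the objective in $(p_X(x),|\psi^x\rangle_{AB})$ together with the fact that Carath\'eodory's theorem allows one to restrict to decompositions of bounded size, so the feasible set can be taken compact and the infimum is attained.
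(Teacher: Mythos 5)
Your proposal is correct and follows essentially the same route as the paper's own proof: both apply the Sibson identity for $H_{\alpha}(A|X)_{\sigma}$, exploit the block-diagonal structure of the classical-quantum state to evaluate the trace expression explicitly, and invoke Carath\'eodory's theorem together with continuity and compactness to replace the infimum by a minimum. Your block-wise computation matches the paper's chain of equalities step for step, so there is nothing to add.
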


\begin{proof}
By applying the Sibson identity mentioned in~\cite{SW12} to $H_{\alpha}\left(
A|X\right)  _{\sigma}$ in (\ref{eq:rentform}), we find that
\begin{align}
H_{\alpha}\left(  A|X\right)  _{\sigma} &  =\frac{\alpha}{1-\alpha}
\log\text{Tr}\left\{  \left(  \text{Tr}_{A}\left\{  \sigma_{AX}^{\alpha
}\right\}  \right)  ^{1/\alpha}\right\}  \\
&  =\frac{\alpha}{1-\alpha}\log\text{Tr}\left\{  \left(  \text{Tr}_{A}\left\{
\sum_{x}p_{X}^{\alpha}\left(  x\right)  \left\vert x\right\rangle \left\langle
x\right\vert _{X}\otimes\left(  \psi_{A}^{x}\right)  ^{\alpha}\right\}
\right)  ^{1/\alpha}\right\}  \\
&  =\frac{\alpha}{1-\alpha}\log\text{Tr}\left\{  \left(  \left\{  \sum
_{x}p_{X}^{\alpha}\left(  x\right)  \left\vert x\right\rangle \left\langle
x\right\vert _{X}\text{Tr}\left\{  \left(  \psi_{A}^{x}\right)  ^{\alpha
}\right\}  \right\}  \right)  ^{1/\alpha}\right\}  \\
&  =\frac{\alpha}{1-\alpha}\log\sum_{x}p_{X}\left(  x\right)  \left[
\text{Tr}\left\{  \left(  \psi_{A}^{x}\right)  ^{\alpha}\right\}  ^{1/\alpha
}\right]  .
\end{align}
Finally, by an application of the Carath\'{e}odory theorem, we know that the
infimum can be replaced by a minimum because $2^{\frac{1-\alpha}{\alpha
}H_{\alpha}\left(  A|X\right)  _{\sigma}}$ can be written as a convex
combination of Tr$\left\{  \left(  \psi_{A}^{x}\right)  ^{\alpha}\right\}
^{1/\alpha}$, which is a continuous function of the marginals of the elements
of a convex decomposition of the state $\rho_{AB}$ and these marginals are
elements of a compact set.
\end{proof}

\begin{proposition}\label{prop:squashed-pure}
For a pure state $\phi_{AB}\in\mathcal{S}(\mathcal{H}_{AB})$, the R\'enyi squashed entanglement is related to the R\'enyi entropy of entanglement as
\begin{equation}
E_{\alpha}^{\operatorname{sq}}\left(  A;B\right)  _{\phi}=H_{\left(
2-\alpha\right)  /\alpha}\left(  A\right)  _{\phi},
\end{equation}
for $\alpha\in(0,1)\cup(1,2]$.
\end{proposition}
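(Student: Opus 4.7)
The plan is to exploit the fact that any extension of a pure state is a tensor product, which trivializes the infimum in the definition of $E_\alpha^{\operatorname{sq}}$, and then to evaluate the R\'enyi conditional mutual information explicitly via the Sibson form in \eqref{eq:CMI-sibson}.

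First I would observe that if $\phi_{AB}$ is pure and $\omega_{ABE}$ is any extension, then via purifying $\omega_{ABE}$ on an additional system $E'$ and invoking the uniqueness of purifications up to an isometry on the purifying system, one concludes $\omega_{ABE}=\phi_{AB}\otimes\sigma_{E}$ for some state $\sigma_{E}$. This reduces the infimum in Definition~\ref{def:RSE} to an infimum over the single variable $\sigma_{E}$.

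Next I would plug $\omega_{ABE}=\phi_{AB}\otimes\sigma_{E}$ into the Sibson expression \eqref{eq:CMI-sibson}. Because $\phi_{AB}$ is a rank-one projector, $\phi_{AB}^{\alpha}=\phi_{AB}$, and the tensor structure factors completely across $A,B$ and $E$. Writing the Schmidt decomposition $|\phi\rangle_{AB}=\sum_{i}\sqrt{\lambda_{i}}|i\rangle_{A}|i\rangle_{B}$, so that $\phi_{A}^{(1-\alpha)/2}=\sum_{i}\lambda_{i}^{(1-\alpha)/2}|i\rangle\langle i|_{A}$, one computes
\begin{equation}
\operatorname{Tr}_{A}\!\left\{\phi_{A}^{(1-\alpha)/2}\,\phi_{AB}\,\phi_{A}^{(1-\alpha)/2}\right\}=\sum_{i}\lambda_{i}^{2-\alpha}\,|i\rangle\langle i|_{B}.
\end{equation}
Conjugation by $\sigma_{E}^{(\alpha-1)/2}$ then just restores $\sigma_{E}$ on the $E$ factor, and raising to the power $1/\alpha$ and tracing yields $\operatorname{Tr}\{\sigma_{E}\}\cdot\sum_{i}\lambda_{i}^{(2-\alpha)/\alpha}=\sum_{i}\lambda_{i}^{(2-\alpha)/\alpha}$, independent of $\sigma_{E}$. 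Therefore
\begin{equation}
I_{\alpha}(A;B|E)_{\omega}=\frac{\alpha}{\alpha-1}\log\sum_{i}\lambda_{i}^{(2-\alpha)/\alpha}.
\end{equation}

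Finally I would recognize the right-hand side as $2H_{(2-\alpha)/\alpha}(A)_{\phi}$: indeed, with $\beta=(2-\alpha)/\alpha$ one has $1-\beta=2(\alpha-1)/\alpha$, so $\tfrac{\alpha}{\alpha-1}=\tfrac{2}{1-\beta}$, which yields exactly twice the R\'enyi entropy of $\phi_{A}$ of order $\beta$. Since the value is independent of $\sigma_{E}$, the infimum is attained (for every choice of $\sigma_{E}$), and dividing by $2$ gives $E_{\alpha}^{\operatorname{sq}}(A;B)_{\phi}=H_{(2-\alpha)/\alpha}(A)_{\phi}$, valid for all $\alpha\in(0,1)\cup(1,\infty)$, and in particular on the stated range. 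There is no real obstacle here beyond bookkeeping with the operator powers; the only slightly delicate point is the reduction to product extensions, which relies on the standard purification argument for pure states.
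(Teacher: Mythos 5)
Your proof is correct and takes essentially the same route as the paper: the paper likewise reduces to product extensions $\phi_{AB}\otimes\omega_{E}$, invokes its Lemma~\ref{lem:tensor-prod} to drop the conditioning system (which is exactly the computation you inline directly in the Sibson formula \eqref{eq:CMI-sibson}), and then evaluates the resulting R\'enyi mutual information of the pure state via the same algebra $\phi_{AB}^{\alpha}=\phi_{AB}$ and the Schmidt-coefficient bookkeeping, arriving at $2H_{(2-\alpha)/\alpha}(A)_{\phi}$. The only caveat is your closing claim of validity for all $\alpha\in(0,1)\cup(1,\infty)$: for $\alpha>2$ the order $\beta=(2-\alpha)/\alpha$ is negative, outside the range on which the paper defines the R\'enyi entropy $H_{\beta}$, which is why the proposition restricts to $\alpha\in(0,1)\cup(1,2]$ (with $\alpha=2$ understood via the $\beta=0$, log-rank convention).
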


\begin{proof}
Consider that any extension of a pure state $\phi_{AB}$ is of the form
$\phi_{AB}\otimes\omega_{E}$. So applying Lemma~\ref{lem:tensor-prod}, we find
that
\begin{equation}
I_{\alpha}\left(  A;B|E\right)  _{\phi\otimes\omega}=I_{\alpha}\left(
A;B\right)  _{\phi} .
\end{equation}
So now it is a matter of evaluating the R\'{e}nyi mutual information of a pure
state. Consider that
\begin{align}
I_{\alpha}\left(  A;B\right)  _{\phi} &  =\frac{\alpha}{\alpha-1}\log
\text{Tr}\left\{  \left(  \text{Tr}_{A}\left\{  \phi_{A}^{1-\alpha}\phi
_{AB}^{\alpha}\right\}  \right)  ^{1/\alpha}\right\}  \\
&  =\frac{\alpha}{\alpha-1}\log\text{Tr}\left\{  \left(  \text{Tr}_{A}\left\{
\phi_{A}^{1-\alpha}\phi_{AB}\right\}  \right)  ^{1/\alpha}\right\}  \\
&  =\frac{\alpha}{\alpha-1}\log\text{Tr}\left\{  \left(  \text{Tr}_{A}\left\{
\phi_{B}^{1-\alpha}\phi_{AB}\right\}  \right)  ^{1/\alpha}\right\}  \\
&  =\frac{\alpha}{\alpha-1}\log\text{Tr}\left\{  \left(  \phi_{B}^{1-\alpha
}\phi_{B}\right)  ^{1/\alpha}\right\}  \\
&  =\frac{\alpha}{\alpha-1}\log\text{Tr}\left\{  \phi_{B}^{\left(
2-\alpha\right)  /\alpha}\right\}  \\
&  =2\frac{1}{1-\left(  2-\alpha\right)  /\alpha}\log\text{Tr}\left\{
\phi_{B}^{\left(  2-\alpha\right)  /\alpha}\right\}  \\
&  =2H_{\left(  2-\alpha\right)  /\alpha}\left(  B\right)  _{\phi}\\
&  =2H_{\left(  2-\alpha\right)  /\alpha}\left(  A\right)  _{\phi}.
\end{align}
The first equality follows from applying (\ref{RMISibson}). The second
equality follows because $\phi_{AB}^{\alpha}=\phi_{AB}$ for a pure state
$\phi$. The third equality follows because $\phi_{A}^{1-\alpha}\left\vert
\phi\right\rangle _{AB}=\phi_{B}^{1-\alpha}\left\vert \phi\right\rangle _{AB}$
for a pure bipartite state $\phi$. The fourth equality follows by taking the
partial trace over system $A$. The rest of the equalities are straightforward,
by applying the definition of the R\'{e}nyi entropy.
\end{proof}

%discuss operational interpretation in terms of an error exponent for entanglement concentration. We consider  which implies that . The interpretation is there in (113) of quant-ph/0206097, Hayashi et al.

\begin{corollary}
The R\'{e}nyi squashed entanglement is normalized on maximally entangled
states, in the sense that for $\alpha\in(0,1)\cup(1,2]$
\begin{equation}
E_{\alpha}^{\operatorname{sq}}\left(  A;B\right)  _{\Phi}=\log d,
\end{equation}
where $d$ is the Schmidt rank of the maximally entangled state $\Phi_{AB}$.
\end{corollary}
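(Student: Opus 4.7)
The plan is to apply Proposition~\ref{prop:squashed-pure} directly, since the maximally entangled state $\Phi_{AB}$ is a pure bipartite state. First I would invoke Proposition~\ref{prop:squashed-pure} to get
\begin{equation}
E_{\alpha}^{\operatorname{sq}}\left(A;B\right)_{\Phi} = H_{(2-\alpha)/\alpha}\left(A\right)_{\Phi}
\end{equation}
for $\alpha\in(0,1)\cup(1,2]$, thereby reducing the problem to a single R\'enyi entropy evaluation on the reduced state $\Phi_A$.

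Next I would observe that for the maximally entangled state $|\Phi\rangle_{AB} = \frac{1}{\sqrt{d}}\sum_{i=0}^{d-1}|i\rangle_A|i\rangle_B$, tracing out $B$ yields the maximally mixed state $\Phi_A = I_A/d$. Then I would compute the R\'enyi entropy of the maximally mixed state at the order $\beta \equiv (2-\alpha)/\alpha$. Using the definition,
\begin{equation}
H_{\beta}(I_A/d) = \frac{1}{1-\beta}\log\operatorname{Tr}\left\{(I_A/d)^{\beta}\right\} = \frac{1}{1-\beta}\log\left(d^{1-\beta}\right) = \log d,
\end{equation}
which holds for every $\beta \neq 1$ (and the limit $\beta \to 1$ also gives $\log d$, although this value of $\beta$ does not arise for $\alpha\in(0,1)\cup(1,2]$, since $\beta = 1$ would force $\alpha = 1$). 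Combining the two steps yields $E_{\alpha}^{\operatorname{sq}}(A;B)_{\Phi} = \log d$.

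There is no real obstacle here — the only mild subtlety is to check that the parameter $\beta = (2-\alpha)/\alpha$ stays in the range for which the R\'enyi entropy is defined when $\alpha\in(0,1)\cup(1,2]$, and that the formula $H_\beta(I_A/d) = \log d$ is independent of $\beta$. Both are immediate, so the corollary follows as a one-line consequence of Proposition~\ref{prop:squashed-pure} together with the basis-independence of the R\'enyi entropy of the maximally mixed state.
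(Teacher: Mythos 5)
Your proposal is correct and follows exactly the same route as the paper: the paper's proof likewise consists of applying Proposition~\ref{prop:squashed-pure} and then noting that the R\'enyi entropy of a maximally mixed state of dimension $d$ equals $\log d$ for any order. Your explicit computation of $H_{\beta}(I_A/d)$ and the check that $\beta=(2-\alpha)/\alpha\neq 1$ on the stated range just spell out details the paper leaves implicit.
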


\begin{proof}
This follows from applying Proposition~\ref{prop:squashed-pure} and from the
fact that the R\'enyi entropy is equal to $\log d$ for a maximally mixed state
of dimension $d$.
\end{proof}

\begin{table*}
\centering
\begin{tabular}{|c|c|c|}
\hline 
Property & Squashed Ent. of (\ref{eq:sqent}) & R\'enyi Squashed Ent. of (\ref{eq:renyisqent}) \tabularnewline
\hline 
\hline 
Normalized & \checkmark & \checkmark
\tabularnewline
& & for $\alpha\in(0,1)\cup(1,2]$
\tabularnewline
\hline 
LOCC Monotonicity & \checkmark & if Conj.~\ref{conj:monoA} true,
\tabularnewline
&&  then true for $\alpha\in(0,1)\cup(1,2]$
\tabularnewline
\hline 
Convexity & \checkmark & \checkmark 
\tabularnewline
&& for $\alpha\in \left(0,1\right)$
\tabularnewline
\hline
Faithfulness & \checkmark & vanishing on sep. states for $\alpha\in(0,1)\cup(1,\infty)$;
\tabularnewline
&& if \cite[Conj.~34]{BSW14} true,
\tabularnewline
&& then true for $\alpha\in(1,\infty)$
\tabularnewline
\hline 
Additivity & \checkmark & subadditive for $\alpha\in(0,1)\cup(1,\infty)$
\tabularnewline
\hline
Monogamy & \checkmark & ? \tabularnewline
\hline
Non-lockability & \XSolidBrush & ? \tabularnewline
\hline 
Asymptotic Continuity & \checkmark & ? \tabularnewline
\hline
\end{tabular}
\caption{Properties of the R\'enyi squashed entanglement in comparison to those of the original von Neumann entropy based squashed entanglement.
%The * indicates that the proof for this property relies on Conjecture~\ref{conj:monoA}, while the $\dagger$ indicates dependence on~\cite[Conjecture 34]{BSW14}.
The question marks indicate properties that remain open for the R\'enyi squashed entanglement of (\ref{eq:renyisqent}). Two of the properties for the R\'enyi squashed entanglement rely on conjectures. Conj.~\ref{conj:monoA} is the statement that the R\'enyi conditional quantum mutual information obeys monotonicity with respect to quantum operations on system $A$, i.e., for $\rho_{ABE}\in\mathcal{S}(\mathcal{H}_{ABE})$ and $\alpha\in(0,1)\cup(1,2]$, $I_{\alpha}\left(  A;B|E\right)  _{\rho}\geq I_{\alpha}\left(  A^{\prime};B|E\right)  _{\tau}$, where $\tau_{A^{\prime}BE} \equiv\mathcal{M}_{A\rightarrow A^{\prime}}(\rho_{ABE})$ and $\mathcal{M}_{A\rightarrow A^{\prime}}$ is a CPTP map acting on system $A$ alone. \cite[Conj.~34]{BSW14} is the statement that the R\'enyi conditional quantum mutual information is monotone non-decreasing in the R\'enyi parameter, i.e., $I_\alpha(A;B|C)\leq I_\beta(A;B|C)$ for $0\leq\alpha\leq\beta$.}
\label{sqentprops}
\end{table*}

\begin{proposition}
For $\rho_{AB}\in\mathcal{S}(\mathcal{H}_{AB})$, the R\'enyi squashed
entanglement is bounded from above by the R\'enyi entanglement of formation for
$\alpha\in\left(  0,1\right)  \cup\left(  1,2\right)  $ as
\begin{equation}
E_{\left(  2-\alpha\right)  /\alpha}^{F}\left(  A;B\right)  _{\rho}\geq
E_{\alpha}^{\operatorname{sq}}\left(  A;B\right)  _{\rho}.
\end{equation}
\end{proposition}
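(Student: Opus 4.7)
Set $\beta := (2-\alpha)/\alpha$, so that $\beta\in(1,\infty)$ when $\alpha\in(0,1)$ and $\beta\in(0,1)$ when $\alpha\in(1,2)$. The plan is constructive: given any pure-state convex decomposition $\rho_{AB}=\sum_{x}p_{X}(x)\,|\psi^{x}\rangle\langle\psi^{x}|_{AB}$ appearing in the variational definition of $E_{\beta}^{F}(A;B)_{\rho}$, build the classical extension
\begin{equation}
\omega_{ABX} \equiv \sum_{x} p_{X}(x)\,|\psi^{x}\rangle\langle\psi^{x}|_{AB}\otimes|x\rangle\langle x|_{X},
\end{equation}
which is an extension of $\rho_{AB}$ with conditioning system $X$. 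The goal is to show that $I_{\alpha}(A;B|X)_{\omega}\leq 2H_{\beta}(A|X)_{\sigma}$ for this extension, after which taking the infimum over decompositions on the right-hand side (which equals $2E_{\beta}^{F}(A;B)_{\rho}$) and recalling $2E_{\alpha}^{\operatorname{sq}}(A;B)_{\rho}\leq I_{\alpha}(A;B|X)_{\omega}$ yields the claim.

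The first main step is to evaluate $I_{\alpha}(A;B|X)_{\omega}$ in closed form. Since the conditioning system $X$ is classical, Lemma~\ref{lem:classical-cond} gives
\begin{equation}
I_{\alpha}(A;B|X)_{\omega} = \frac{\alpha}{\alpha-1}\log\sum_{x}p_{X}(x)\exp\!\left\{\frac{\alpha-1}{\alpha}\,I_{\alpha}(A;B)_{\psi^{x}}\right\}.
\end{equation}
The computation inside the proof of Proposition~\ref{prop:squashed-pure} already shows $I_{\alpha}(A;B)_{\psi^{x}}=2H_{\beta}(A)_{\psi^{x}}$. Substituting, using the arithmetic identities $\alpha/(\alpha-1)=2/(1-\beta)$ and $2(\alpha-1)/\alpha=1-\beta$, and invoking $(1-\beta)H_{\beta}(\psi_{A}^{x})=\log\operatorname{Tr}\{(\psi_{A}^{x})^{\beta}\}$, collapses the exponentials to
\begin{equation}
I_{\alpha}(A;B|X)_{\omega} = \frac{2}{1-\beta}\log\sum_{x}p_{X}(x)\operatorname{Tr}\{(\psi_{A}^{x})^{\beta}\}.
\end{equation}

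The second main step is to compare this against $2H_{\beta}(A|X)_{\sigma}$, which by the closed-form expression in \eqref{eq:rentformclosed} equals
\begin{equation}
2H_{\beta}(A|X)_{\sigma}=\frac{2}{1-\beta}\log\!\left[\sum_{x}p_{X}(x)\big(\operatorname{Tr}\{(\psi_{A}^{x})^{\beta}\}\big)^{1/\beta}\right]^{\beta}.
\end{equation}
Writing $a_{x}:=[\operatorname{Tr}\{(\psi_{A}^{x})^{\beta}\}]^{1/\beta}$, the desired inequality reduces to
\begin{equation}
\frac{2}{1-\beta}\log\sum_{x}p_{X}(x)\,a_{x}^{\beta} \;\leq\; \frac{2}{1-\beta}\log\!\left[\sum_{x}p_{X}(x)\,a_{x}\right]^{\beta}.
\end{equation}
This is precisely Jensen's inequality for the power function $t\mapsto t^{\beta}$: for $\beta\in(0,1)$ the function is concave and $\sum p_{x}a_{x}^{\beta}\leq(\sum p_{x}a_{x})^{\beta}$ with the positive prefactor $2/(1-\beta)>0$; for $\beta>1$ the function is convex and the Jensen inequality reverses, but the prefactor $2/(1-\beta)<0$ flips it back. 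Either way the bound holds.

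The argument above requires nothing beyond Jensen's inequality together with Lemma~\ref{lem:classical-cond} and the pure-state identity from Proposition~\ref{prop:squashed-pure}; there is no genuine obstacle. The only point one has to be careful about is bookkeeping the sign of $(1-\beta)$ in both the regime $\alpha\in(0,1)$ and the regime $\alpha\in(1,2)$, which is what forces the exclusion of $\alpha=2$ (where $\beta=0$ and the Rényi entropy is ill-defined in the form we are using). Taking the infimum over convex decompositions of $\rho_{AB}$ in the above chain then completes the proof.
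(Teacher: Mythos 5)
Your proposal is correct and follows essentially the same route as the paper's proof: the same classical extension $\omega_{ABX}$ built from a pure-state decomposition, the same use of Lemma~\ref{lem:classical-cond} together with the pure-state identity $I_{\alpha}(A;B)_{\psi^{x}}=2H_{(2-\alpha)/\alpha}(A)_{\psi^{x}}$ from Proposition~\ref{prop:squashed-pure}, and the same concavity/convexity (Jensen) step for $t\mapsto t^{\beta}$ with the sign of $1/(1-\beta)$ handled in both regimes. The only difference is presentational — you evaluate $I_{\alpha}(A;B|X)_{\omega}$ in closed form and then compare to $2H_{\beta}(A|X)_{\sigma}$, whereas the paper bounds $H_{\beta}(A|X)$ from below and matches it to $\tfrac{1}{2}I_{\alpha}(A;B|X)$ — which is the identical argument in a different order.
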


\begin{proof}
Consider an ensemble $\left\{  p_{X}\left(  x\right)  ,\left\vert \psi
^{x}\right\rangle _{AB}\right\}  $ which realizes $\rho_{AB}$, in the sense
that
\begin{equation}
\rho_{AB}=\sum_{x}p_{X}\left(  x\right)  \left\vert \psi^{x}\right\rangle
\left\langle \psi^{x}\right\vert _{AB}.
\end{equation}
We then form the classical-quantum state $\sigma_{XAB}$:
\begin{equation}
\sigma_{XAB}=\sum_{x}p_{X}\left(  x\right)  \left\vert x\right\rangle
\left\langle x\right\vert _{X}\otimes\left\vert \psi^{x}\right\rangle
\left\langle \psi^{x}\right\vert _{AB}. \label{eq:EoF-ext}
\end{equation}
Consider that
\begin{align}
H_{\beta}\left(  A|X\right)   &  =\frac{\beta}{1-\beta}\log\sum_{x}
p_{X}\left(  x\right)  \left[  \text{Tr}\left\{  \left(  \psi_{A}^{x}\right)
^{\beta}\right\}  ^{1/\beta}\right] \\
&  =\frac{1}{1-\beta}\log\left[  \sum_{x}p_{X}\left(  x\right)  \left[
\text{Tr}\left\{  \left(  \psi_{A}^{x}\right)  ^{\beta}\right\}  ^{1/\beta
}\right]  \right]  ^{\beta}.
\end{align}
For $\beta\in\left(  0,1\right)  \cup\left(  1,\infty\right)  $, we have from
concavity / convexity of the function $x^{\beta}$ that
\begin{equation}
\label{xtotheb}H_{\beta}\left(  A|X\right)  \geq\frac{1}{1-\beta}\log\sum
_{x}p_{X}\left(  x\right)  \text{Tr}\left\{  \left(  \psi_{A}^{x}\right)
^{\beta}\right\}  .
\end{equation}
Let $\beta=\left(  2-\alpha\right)  /\alpha$. Then, from (\ref{xtotheb}), for
$\alpha\in\left(  0,1\right)  \cup\left(  1,2\right)  $, we have that
\begin{equation}
\label{lhspart}H_{\left(  2-\alpha\right)  /\alpha}\left(  A|X\right)
\geq\frac{1}{1-\left(  2-\alpha\right)  /\alpha}\log\sum_{x}p_{X}\left(
x\right)  \text{Tr}\left\{  \left(  \psi_{A}^{x}\right)  ^{\left(
2-\alpha\right)  /\alpha}\right\}  .
\end{equation}

Now consider that
\begin{align}
\frac{1}{2}I_{\alpha}\left(  A;B|X\right)  _{\sigma} &  =\frac{1}{2}
\frac{\alpha}{\alpha-1}\log\sum_{x}p_{X}\left(  x\right)  2^{\left(
\frac{\alpha-1}{\alpha}\right)  I_{\alpha}\left(  A;B\right)  _{\psi^{x}}}\\
&  =\frac{1}{2}\frac{\alpha}{\alpha-1}\log\sum_{x}p_{X}\left(  x\right)
2^{\left(  \frac{\alpha-1}{\alpha}\right)  \frac{\alpha}{\alpha-1}
\log\text{Tr}\left\{  \left(  \psi_{A}^{x}\right)  ^{\left(  2-\alpha\right)
/\alpha}\right\}  }\\
&  =\frac{1}{2}\frac{\alpha}{\alpha-1}\log\sum_{x}p_{X}\left(  x\right)
\text{Tr}\left\{  \left(  \psi_{A}^{x}\right)  ^{\left(  2-\alpha\right)
/\alpha}\right\}  \\
&  =\frac{1}{1-\left(  2-\alpha\right)  /\alpha}\log\sum_{x}p_{X}\left(
x\right)  \text{Tr}\left\{  \left(  \psi_{A}^{x}\right)  ^{\left(
2-\alpha\right)  /\alpha}\right\}  .\label{rhspart}
\end{align}
The first equality follows by applying Lemma~\ref{lem:classical-cond}. The
second equality follows from some steps given in the proof of
Proposition~\ref{prop:squashed-pure}. The last two equalities are straightforward.

From (\ref{lhspart}) and (\ref{rhspart}), we have that
\begin{equation}
H_{\left(  2-\alpha\right)  /\alpha}\left(  A|X\right)  \geq\frac{1}
{2}I_{\alpha}\left(  A;B|X\right)  . \label{eq:cond-ent-CMI}
\end{equation}
The statement of the proposition follows from (\ref{eq:cond-ent-CMI}),
because
\begin{equation}
\frac{1}{2}I_{\alpha}\left(  A;B|X\right)  \geq E_{\alpha}^{\operatorname{sq}
}\left(  A;B\right)  _{\rho}, \label{eq:rel-renyi-squash}
\end{equation}
which in turn follows because the state in (\ref{eq:EoF-ext}) is a particular
extension of the state $\rho_{AB}$ and by definition, the R\'enyi squashed
entanglement is equal to the infimum over all such extensions. So putting
together (\ref{eq:cond-ent-CMI}) and (\ref{eq:rel-renyi-squash}) gives the
statement of the proposition.
\end{proof}

%%%%%%%%%%%%%%%%%%%%%%%%%%%%%%%%%%%%%%%%%%%%%%%%%%%%%%%%%%%%%%%%%%%%%%%%%%%%%%%%%

\section{R\'enyi Quantum Discord}

In this section, we define a R\'enyi quantum discord and explore some properties of it.
We also give an expression for the R\'enyi discord of pure bipartite states. Further, using the R\'enyi discord, we discuss an operational characterization of quantum states with small von Neumann entropy-based discord, which follows from \cite[Conjecture 34]{BSW14}.
\begin{definition}
\label{redis-def}
The R\'{e}nyi quantum discord of $\rho_{AB}\in\mathcal{S}(\mathcal{H})$ is defined for
$\alpha\in\left(  0,1)\cup(1,2\right]  $ as
\begin{equation}
D^{\alpha}(\overline{A};B)_{\rho}\equiv\inf_{\left\{  \Lambda_{x}\right\}
}I_{\alpha}\left(  E;B|X\right)  _{\omega},\label{renyidiscord}
\end{equation}
where
\begin{equation}
\omega_{EXB}\equiv U_{A\rightarrow EX}\rho_{AB}U_{A\rightarrow EX}^{\dag},
\end{equation}
with $U_{A\rightarrow EX}$ being an isometric extension of the measurement map
$\sigma\rightarrow\sum_{x}\operatorname{Tr}\left\{  \Lambda_{x}\sigma\right\}
\left\vert x\right\rangle \left\langle x\right\vert _{X}$ acting on system $A$.
\end{definition}

%%%%%%%%%%%%%%%%%%%%%%%%%%%%%%%%%%%%%%%%%%%%%%%%%%%%%%%%%%%%%%%%%%%%%%%%%%%%%%%%%

\subsection{Properties of the R\'enyi discord}

We now prove that the R\'enyi quantum discord for $\alpha\in\left(
0,1)\cup(1,2\right]  $ is non-negative, is invariant with respect to local unitaries,
vanishes on the set of classical-quantum states, and is optimized by a
rank-one POVM.

\begin{proposition}
The R\'enyi quantum discord $D^{\alpha}(\overline{A};B)_{\rho}$ is non-negative for $\alpha\in\left(0,1)\cup(1,2\right]$.
\end{proposition}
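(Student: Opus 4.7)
The proof plan is essentially immediate from the definition and a property already recorded earlier in the paper. By Definition~\ref{redis-def}, the R\'enyi quantum discord is an infimum of the R\'enyi conditional mutual information $I_{\alpha}(E;B|X)_{\omega}$ over POVMs $\{\Lambda_{x}\}$ on $A$. The excerpt already notes (in the paragraph immediately following equation~\eqref{eq:CMI-sibson}) that the R\'enyi conditional quantum mutual information is non-negative for $\alpha\in(0,1)\cup(1,2]$. So the proof is a single line: the infimum of a non-negative quantity is non-negative.

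More concretely, I would unpack this as follows. For any POVM $\{\Lambda_{x}\}$ with isometric extension $U_{A\to EX}$, the tripartite state $\omega_{EXB}$ defined in the statement of Definition~\ref{redis-def} lies in $\mathcal{S}(\mathcal{H}_{EXB})$, so the quantity $I_{\alpha}(E;B|X)_{\omega}$ is well-defined and non-negative in the stated $\alpha$-range. Taking the infimum over all such POVMs on system $A$ preserves non-negativity, yielding $D^{\alpha}(\overline{A};B)_{\rho}\geq 0$.

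The only conceivable subtlety would be to note where the non-negativity of $I_{\alpha}$ comes from, namely it is exactly the case of the R\'enyi relative entropy being non-negative on normalized states, together with the Sibson-type reformulation in~\eqref{eq:CMI-sibson}; but since this is already cited from~\cite{BSW14} earlier in the paper, no new work is needed here. There is no real obstacle: the result is a direct corollary of a previously-quoted fact and the definition, and does not rely on Conjecture~\ref{conj:monoA} or on~\cite[Conjecture~34]{BSW14}.
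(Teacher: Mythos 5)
Your proposal is correct and matches the paper's own proof, which likewise observes in one line that the discord is an infimum of $I_{\alpha}(E;B|X)_{\omega}$ and that this quantity is non-negative for $\alpha\in(0,1)\cup(1,2]$ by \cite[Corollary 16]{BSW14}. Your additional remarks on well-definedness and the source of the non-negativity are accurate but not needed beyond what the paper states.
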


\begin{proof}
This follows easily from the fact that the R\'enyi conditional mutual
information is non-negative for $\alpha\in\left(  0,1)\cup(1,2\right]  $
\cite[Corollary 16]{BSW14}.
\end{proof}

\begin{proposition}
For a classical-quantum state $\rho_{AB}\in\mathcal{S}(\mathcal{H}_{AB})$ the R\'enyi quantum discord is equal to zero for $\alpha\in(0,1)\cup
(1,\infty)$.
\end{proposition}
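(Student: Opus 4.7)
The plan is to exhibit an explicit POVM that achieves $I_\alpha(E;B|X)_\omega = 0$ and then combine this upper bound with non-negativity of the R\'enyi discord to conclude. Expand the classical-quantum state as $\rho_{AB} = \sum_{x} p_X(x)\,|x\rangle\langle x|_A \otimes \rho_B^x$ and take the rank-one projective POVM $\Lambda^x = |x\rangle\langle x|_A$ aligned with the classical register of $A$ (if these projectors do not form a complete POVM on $\mathcal{H}_A$, pad with additional rank-one projectors whose outcomes have probability zero under $\rho_A$). Since $\sqrt{\Lambda^x}=|x\rangle\langle x|_A$, the isometric extension $U_{A\to EX}$ from Definition~\ref{redis-def} acts as $|x\rangle_A \mapsto |x\rangle_X \otimes |x\rangle_A \otimes |x\rangle_{E'}$, so a short direct calculation gives
\begin{equation}
\omega_{EXB} \,=\, \sum_{x} p_X(x)\,|x\rangle\langle x|_X \otimes \phi^x_E \otimes \rho_B^x,
\end{equation}
where $E = AE'$ and $\phi^x_E \equiv |x\rangle\langle x|_A \otimes |x\rangle\langle x|_{E'}$ is a pure product state on $E$.

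This state has exactly the classically-conditioned form to which Lemma~\ref{lem:classical-cond} (Appendix~\ref{sub:CononC}) applies, producing
\begin{equation}
I_\alpha(E;B|X)_\omega \,=\, \frac{\alpha}{\alpha-1}\log\sum_{x} p_X(x)\exp\!\left\{\left(\tfrac{\alpha-1}{\alpha}\right) I_\alpha(E;B)_{\phi^x_E \otimes \rho_B^x}\right\}.
\end{equation}
Each conditional state $\phi^x_E \otimes \rho_B^x$ is a product across $E$ and $B$, so $I_\alpha(E;B)_{\phi^x \otimes \rho^x} = 0$ for every $x$; this is immediate from \eqref{RMI} by taking $\sigma_B = \rho_B^x$ so that the R\'enyi relative entropy evaluates to zero, together with the non-negativity of $D_\alpha$. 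The right-hand side therefore collapses to $\frac{\alpha}{\alpha-1}\log\sum_x p_X(x) = 0$.

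Consequently this POVM witnesses $D^\alpha(\overline{A};B)_\rho \leq 0$, and combined with non-negativity of $D^\alpha(\overline{A};B)_\rho$ from the immediately preceding proposition (which rests on \cite[Corollary~16]{BSW14}) we obtain $D^\alpha(\overline{A};B)_\rho = 0$. The only step that merits any attention is the application of the classical-conditioning lemma, and it is clean here because $X$ is a genuine classical register and the conditionals factorize exactly across $E$ and $B$, so no limit or perturbative argument is required; the remaining manipulations are routine substitutions. The mild asymmetry between the range $(0,1)\cup(1,2]$ for which non-negativity is known and the range $(0,1)\cup(1,\infty)$ in the statement simply means that for $\alpha>2$ the explicit POVM above still delivers the upper bound of zero, while matching the lower bound is the only part that needs non-negativity of $I_\alpha$ on the specific states produced by measuring a classical-quantum state.
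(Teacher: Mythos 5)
Your proof is correct and follows essentially the same route as the paper's: measure in the classical basis of $A$ (the paper uses the dilation $|x\rangle_A \to |x\rangle_X |x\rangle_E$, which agrees with your $U_{A\to EX}$ up to an isometry on the environment, under which $I_\alpha$ is invariant), apply Lemma~\ref{lem:classical-cond}, note that the R\'enyi mutual information vanishes on product states via \eqref{RMI}, and combine the resulting upper bound of zero with non-negativity of the discord. Your closing observation about the range is apt and in fact slightly more careful than the paper itself, which claims $\alpha\in(0,1)\cup(1,\infty)$ while silently invoking a non-negativity statement (\cite[Corollary~16]{BSW14}) established only for $\alpha\in(0,1)\cup(1,2]$, so that for $\alpha>2$ both you and the paper obtain only the upper bound of zero unconditionally.
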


\begin{proof}
Consider a classical-quantum state $\rho_{AB}$:
\begin{equation}
\rho_{AB}=\sum_{x}p_{X}\left(  x\right)  \left\vert x\right\rangle
\left\langle x\right\vert _{A}\otimes\rho_{B}^{x}.
\end{equation}
Let the dilation of a von Neumann measurement of system $A$ be $\left\vert x\right\rangle
_{A}\rightarrow\left\vert x\right\rangle _{X}\left\vert x\right\rangle _{E}$,
so that it produces
\begin{equation}
\rho_{XEB}\equiv\sum_{x}p\left(  x\right)  \left\vert x\right\rangle
\left\langle x\right\vert _{X}\otimes\left\vert x\right\rangle \left\langle
x\right\vert _{E}\otimes\rho_{B}^{x}.
\end{equation}
So the conditioning system $X$ is classical. Applying
Lemma~\ref{lem:classical-cond}, we find that
\begin{align}
I_{\alpha}\left(  E;B|X\right)=\frac{\alpha}{\alpha-1}\log\sum_{x} p_{X}\left(  x\right)  2^{\left(  \frac{\alpha-1}{\alpha}\right)  I_{\alpha}\left(  E;B\right)  _{\left\vert x\right\rangle \left\langle x\right\vert\otimes \rho^{x} }}=\frac{\alpha}{\alpha-1}\log\sum_{x}p_{X}\left(  x\right)=0.
\end{align}
Combining this with the previous proposition, we see that it is equal to zero
for classical-quantum states.
\end{proof}

\begin{proposition}
The R\'enyi quantum discord $D^{\alpha}(\overline{A};B)_{\rho}$ is invariant with respect to local unitaries for $\alpha
\in(0,1)\cup(1,\infty)$.
\end{proposition}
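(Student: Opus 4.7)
The plan is to prove invariance under $U_A \otimes V_B$ in two separate pieces, one for each side, and combine them since $U_A \otimes V_B = (U_A \otimes I_B)(I_A \otimes V_B)$.

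For invariance under a local unitary $V_B$ on $B$, let $\sigma_{AB} \equiv (I_A \otimes V_B)\rho_{AB}(I_A \otimes V_B)^\dagger$. For any POVM $\{\Lambda_x\}$ with isometric extension $U_{A \to EX}$, the extension acts trivially on $B$, so one computes directly that
\begin{equation}
\omega_{EXB}^{\sigma} = (I_{EX} \otimes V_B)\, \omega_{EXB}^{\rho}\, (I_{EX} \otimes V_B)^{\dag}.
\end{equation}
By monotonicity of $I_\alpha(E;B|X)$ under CPTP maps on $B$ (already proven, i.e., not requiring Conjecture~\ref{conj:monoA}), applied in both directions to $V_B$ and $V_B^{\dag}$, one obtains equality $I_\alpha(E;B|X)_{\omega^{\sigma}} = I_\alpha(E;B|X)_{\omega^{\rho}}$. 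Since this holds for every POVM, taking the infimum preserves the equality.

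For invariance under a local unitary $U_A$ on $A$, let $\sigma_{AB} \equiv (U_A \otimes I_B)\rho_{AB}(U_A \otimes I_B)^\dagger$. Given any POVM $\{\Lambda_x\}$ on $A$ with isometric extension $W_{A\to EX}$, I would observe that $\widetilde W_{A \to EX} \equiv W_{A \to EX}\, U_A$ is itself an isometry, and it serves as an isometric extension of the POVM $\{U_A^{\dag}\Lambda_x U_A\}$, since
\begin{equation}
\widetilde W_{A\to EX}^{\dag}(I_E \otimes |x\rangle\langle x|_X)\widetilde W_{A\to EX} = U_A^{\dag}\Lambda_x U_A.
\end{equation}
Therefore the state produced by measuring $\sigma_{AB}$ with $\{\Lambda_x\}$ coincides with the state produced by measuring $\rho_{AB}$ with $\{U_A^{\dag}\Lambda_x U_A\}$. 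Because $U_A$ is unitary, the map $\{\Lambda_x\} \mapsto \{U_A^{\dag}\Lambda_x U_A\}$ is a bijection on the set of POVMs on $A$, so the infimum in the definition of $D^\alpha(\overline{A};B)$ ranges over the same family of values, giving $D^\alpha(\overline{A};B)_{\sigma} = D^\alpha(\overline{A};B)_{\rho}$.

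There is no real obstacle here, and in particular the argument does not invoke Conjecture~\ref{conj:monoA}: the $B$-side relies only on the already established monotonicity under CPTP maps on $B$, and the $A$-side is handled by a change-of-variables on the optimization set rather than by any data-processing on system $E$. The only point requiring care is to make explicit that the isometric extension of a POVM is unique up to isometries on $E$, so that the identification $\widetilde W = W U_A$ can be used without ambiguity in the expression of $\omega_{EXB}$.
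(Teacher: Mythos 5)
Your argument for the $A$-side is essentially the paper's own: the paper also handles the unitary on $A$ by the change of variables $\{\Lambda_x\}\mapsto\{U_A^{\dag}\Lambda_x U_A\}$, and your explicit treatment via $\widetilde{W}=W U_A$, together with the remark that dilations are unique up to isometries on $E$, fills in details the paper leaves implicit. That part is fine and works for all $\alpha\in(0,1)\cup(1,\infty)$.

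The gap is on the $B$-side, and it is a gap in the range of $\alpha$. The proposition claims invariance for $\alpha\in(0,1)\cup(1,\infty)$, but the monotonicity of $I_{\alpha}$ with respect to CPTP maps on system $B$ that you invoke (twice, with $V_B$ and $V_B^{\dag}$) is only established for $\alpha\in(0,1)\cup(1,2]$ — it descends from monotonicity of the R\'enyi relative entropy $D_{\alpha}$, which fails in general for $\alpha>2$. So your two-sided data-processing argument proves equality only on $(0,1)\cup(1,2]$ and says nothing for $\alpha\in(2,\infty)$. The paper avoids this by observing that unitary invariance of $I_{\alpha}$ holds \emph{directly from the definition} for all $\alpha$: in the closed form \eqref{eq:CMI-sibson}, conjugation by $I\otimes V_B$ leaves the marginals $\rho_{EX}$ and $\rho_X$ untouched, commutes through their fractional powers, passes through $\operatorname{Tr}_E$ and the outer $1/\alpha$ power via $f(V X V^{\dag})=V f(X) V^{\dag}$, and disappears under the final trace. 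The same direct computation also settles, for all $\alpha$, the dilation-freedom point you flag (unitaries on $E$ leave the formula invariant, since $\operatorname{Tr}_E$ removes the conjugation). Replacing your data-processing step by this direct verification repairs the proof and recovers the full claimed range; as written, however, your proof establishes a strictly weaker statement than the proposition.
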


\begin{proof}
First, from the definition of $I_{\alpha}\left(  A;B|E\right)  $, we can see
that this quantity is invariant with respect to local unitaries. It then immediately
follows that the R\'enyi discord is invariant with respect to local unitaries
$U_{A}\otimes V_{B}$. That is, if the optimal measurement is given by
$\left\{  \Lambda_{A}^{x}\right\}  $, then the optimal measurement with respect to a
local unitary $U_{A}$ can be taken as $\left\{  U_{A}^{\dag}\Lambda_{A}
^{x}U_{A}\right\}  $.
\end{proof}

\begin{proposition}\label{prop:rank-one-discord} 
It suffices to optimize the R\'enyi quantum discord (Definition~\ref{redis-def}) with respect to rank-one POVMs
for $\alpha\in\left(  0,1\right)  \cup\left(  1,2\right]  $.
\end{proposition}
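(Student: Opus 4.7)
The plan is to show that any POVM $\{\Lambda^x\}$ can be spectrally refined into a rank-one POVM $\{\Lambda^{x,y}\}_{x,y}$ with $\Lambda^x = \sum_y \Lambda^{x,y}$ whose associated state achieves a value of the objective functional $I_\alpha(E;B|X)_\omega$ no larger than that of the original POVM. Since every rank-one POVM is a POVM, this will establish the proposition.

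First, I would fix a purification $|\psi\rangle_{ABR}$ of $\rho_{AB}$ and set up three Stinespring dilations of related measurement channels on $A$: the isometry $U_{A\to XE}$ of Definition~\ref{redis-def} for $\{\Lambda^x\}$, producing $|\omega\rangle_{XEBR} = U|\psi\rangle_{ABR}$; an alternative dilation $\tilde U_{A\to XE'}$ of the same coarse-grained measurement channel in which the environment $E'$ additionally records the refinement label $y$ coherently, producing $|\tilde\omega\rangle_{XE'BR} = \tilde U|\psi\rangle_{ABR}$; and the isometric extension $U'_{A\to XX_2E'}$ of the refined POVM $\{\Lambda^{x,y}\}$, which furthermore copies $y$ into a classical register $X_2$, producing $|\omega'\rangle_{XX_2E'BR} = U'|\psi\rangle_{ABR}$. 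Since $U$ and $\tilde U$ are Stinespring dilations of the same channel, they are related by a local isometry $V:\mathcal{H}_E\to\mathcal{H}_{E'}$ on the environment, so invariance of $I_\alpha$ under local isometries yields
\begin{equation*}
I_\alpha(E;B|X)_\omega \;=\; I_\alpha(E';B|X)_{\tilde\omega}.
\end{equation*}

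Next, I would apply the duality identity \eqref{eq:duality} to the four-party pure states $|\tilde\omega\rangle_{XE'BR}$ and $|\omega'\rangle_{XX_2E'BR}$, transporting the conditioning system from the classical measurement register to the reference $R$:
\begin{equation*}
I_\alpha(E';B|X)_{\tilde\omega} = I_\alpha(B;E'|R)_{\tilde\omega}, \qquad I_\alpha(E';B|XX_2)_{\omega'} = I_\alpha(B;E'|R)_{\omega'}.
\end{equation*}
The key observation is that the two reduced states on $BE'R$ are related by a completely dephasing channel $\mathcal{D}_{E'}$ acting in the $\{|x,y\rangle\}$-basis of the ancilla component of $E'$. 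Indeed, in $\tilde\omega$ the refinement label $y$ lives only inside $E'$, while in $\omega'$ it is additionally copied classically into $X_2$; tracing out $XX_2$ then kills exactly the $(x,y)$-$(x,y')$ off-diagonal terms in $E'$. A short direct calculation confirms $\omega'_{BE'R} = \mathcal{D}_{E'}(\tilde\omega_{BE'R})$.

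Finally, I would invoke monotonicity of $I_\alpha(B;E'|R)$ under the local CPTP map $\mathcal{D}_{E'}$ on $E'$. Since $E'$ occupies the second (``$B$-position'') argument of the conditional mutual information, this is the monotonicity already established in \cite{BSW14} for $\alpha\in(0,1)\cup(1,2]$, and crucially it does \emph{not} rely on Conjecture~\ref{conj:monoA}. Chaining the inequalities then yields $I_\alpha(E';B|XX_2)_{\omega'} \leq I_\alpha(E;B|X)_\omega$, as required, and the proposition follows by taking the infimum. The main obstacle is the bookkeeping in step two---carefully constructing the intermediate dilation $\tilde U$ so that, after passage to $R$ via duality, the reduced states on $BE'R$ differ exactly by a local dephasing; once that identification is in hand, the conclusion reduces to a single application of the already-known $B$-side monotonicity of $I_\alpha$.
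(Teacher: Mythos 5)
Your proof is correct and follows essentially the same route as the paper's: spectrally refine the POVM into a rank-one POVM, use the duality relation \eqref{eq:duality} to move the conditioning system to the purifying reference $R$, apply the already-established (non-conjectural) monotonicity of $I_{\alpha}$ under local CPTP maps on the second, non-conditioning argument for $\alpha\in(0,1)\cup(1,2]$, and dualize back. The only difference is bookkeeping: the paper realizes the loss of the refinement label $y$ as a partial trace over the register $Y$ inside a single isometric extension that simultaneously dilates both the coarse and the refined measurement maps, whereas you realize it as a dephasing of the label inside the environment $E'$ together with a separate dilation $U'$ (plus an explicit isometric-invariance step relating dilations) --- interchangeable implementations of the same step.
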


\begin{proof}
Consider an arbitrary measurement map
\begin{equation}
\mathcal{M}\left(  \sigma\right)  \equiv\sum_{x}\text{Tr}\left\{  \Lambda
_{x}\sigma\right\}  \left\vert x\right\rangle \left\langle x\right\vert _{X},
\end{equation}
and a spectral decomposition of each $\Lambda_{x}$:
\begin{equation}
\Lambda_{x}=\sum_{y}\mu_{xy}\left\vert \phi_{x,y}\right\rangle \left\langle
\phi_{x,y}\right\vert .
\end{equation}
For a fixed $x$, the set $\left\{  \left\vert \phi_{x,y}\right\rangle
\right\}  _{y}$ is orthonormal. Furthermore, the set $\left\{  \mu_{xy}\left\vert \phi_{x,y}\right\rangle \left\langle \phi_{x,y}\right\vert
\right\}  _{x,y}$ forms a POVM (a rank-one refinement of the original POVM).
We can then rewrite the measurement map as
\begin{equation}
\mathcal{M}\left(  \sigma\right)  =\sum_{x,y}\text{Tr}\left\{  \mu_{xy}\left\vert \phi_{x,y}\right\rangle \left\langle \phi_{x,y}\right\vert
\sigma\right\}  \left\vert x\right\rangle \left\langle x\right\vert _{X}.
\end{equation}
In this way, we can already see that there is some loss of information when
discarding the outcome $y$. An isometric extension of the original measurement
map is specified by
\begin{equation}
U_{A\rightarrow EX_{E}YX}^{\mathcal{M}}\left\vert \psi\right\rangle _{A}
\equiv\sum_{x,y}\sqrt{\mu_{xy}}\left(  \left\vert \phi_{x,y}\right\rangle
_{E}\left\langle \phi_{x,y}\right\vert _{A}\right)  \left\vert \psi
\right\rangle _{A}\otimes\left\vert x\right\rangle _{X_{E}}\left\vert
y\right\rangle _{Y}\otimes\left\vert x\right\rangle _{X}.
\end{equation}
This is because the original map is recovered by tracing over the
environmental systems $EX_{E}Y$:
\begin{align}
&  \text{Tr}_{EX_{E}Y}\left\{  U_{A\rightarrow EX_{E}YX}^{\mathcal{M}}
\sigma\left(  U_{A\rightarrow EX_{E}YX}^{\mathcal{M}}\right)  ^{\dag}\right\}
\nonumber\\
&  =\text{Tr}_{EX_{E}Y}\left\{  \sum_{x,y,x^{\prime},y^{\prime}}\sqrt{\mu_{xy}
}\sqrt{\mu_{x^\prime y^{\prime}}}\left(  \left\vert \phi_{x,y}\right\rangle
_{E}\left\langle \phi_{x,y}\right\vert _{A}\right)  \sigma\left\vert
\phi_{x^{\prime},y^{\prime}}\right\rangle _{A}\left\langle \phi_{x^{\prime
},y^{\prime}}\right\vert _{E}\otimes\left\vert x\right\rangle \left\langle
x^{\prime}\right\vert _{X_{E}}\otimes\left\vert y\right\rangle \left\langle
y^{\prime}\right\vert _{Y}\otimes\left\vert x\right\rangle \left\langle
x^{\prime}\right\vert _{X}\right\}  \\
&  =\sum_{x,y}\mu_{xy}\left(  \left\langle \phi_{x,y}\right\vert _{E}\left\vert
\phi_{x,y}\right\rangle _{E}\left\langle \phi_{x,y}\right\vert _{A}\right)
\sigma\left\vert \phi_{x,y}\right\rangle _{A}\otimes\left\vert x\right\rangle
\left\langle x\right\vert _{X}\\
&  =\sum_{x,y}\text{Tr}\left\{  \mu_{xy}\left\vert \phi_{x,y}\right\rangle
\left\langle \phi_{x,y}\right\vert \sigma\right\}  \left\vert x\right\rangle
\left\langle x\right\vert _{X}\\
&  =\mathcal{M}\left(  \sigma\right)  .
\end{align}
Let $\psi_{RAB}^{\rho}$ be a purification of the state $\rho_{AB}$ on which we
are evaluating the R\'{e}nyi discord. Then let $\omega_{REX_{E}YXB}$ be the
following pure state:
\begin{equation}
\omega_{REX_{E}YXB}\equiv U_{A\rightarrow EX_{E}YX}^{\mathcal{M}}\psi
_{RAB}^{\rho}\left(  U_{A\rightarrow EX_{E}YX}^{\mathcal{M}}\right)  ^{\dag}.
\end{equation}
Consider the following chain of inequalities:
\begin{align}
I_{\alpha}\left( EX_{E}Y;B|X\right)_{\omega}=I_{\alpha}\left(
B;EX_{E}Y|R\right)  _{\omega}\geq I_{\alpha}\left(  B;EX_{E}|R\right)  _{\omega}=I_{\alpha}\left(  EX_{E};B|XY\right)  _{\omega}.
\end{align}
The first equality is by duality of (\ref{eq:duality}). The second inequality is from
monotonicity with respect to local quantum operation. The last equality is again from duality. But now consider that the last quantity corresponds to the quantum discord for the following refined rank-one measurement map:
\begin{equation}
\sigma\rightarrow\sum_{x,y}\text{Tr}\left\{  \mu_{xy}\left\vert \phi
_{x,y}\right\rangle \left\langle \phi_{x,y}\right\vert \sigma\right\}
\left\vert x\right\rangle \left\langle x\right\vert _{X}\otimes\left\vert
y\right\rangle \left\langle y\right\vert _{Y}.
\end{equation}
Furthermore, the systems $EX_{E}$ above play the role of the environment of
the refined rank-one measurement map from $A$ to $X$ and $Y$. This is because
\begin{align}
&  \text{Tr}_{EX_{E}}\left\{  U_{A\rightarrow EX_{E}YX}^{\mathcal{M}}
\sigma\left(  U_{A\rightarrow EX_{E}YX}^{\mathcal{M}}\right)  ^{\dag}\right\}
\nonumber\\
&  =\text{Tr}_{EX_{E}}\left\{  \sum_{x,y,x^{\prime},y^{\prime}}\sqrt{\mu_{xy}
}\sqrt{\mu_{x^\prime y^{\prime}}}\left(  \left\vert \phi_{x,y}\right\rangle
_{E}\left\langle \phi_{x,y}\right\vert _{A}\right)  \sigma\left\vert
\phi_{x^{\prime},y^{\prime}}\right\rangle _{A}\left\langle \phi_{x^{\prime
},y^{\prime}}\right\vert _{E}\otimes\left\vert x\right\rangle \left\langle
x^{\prime}\right\vert _{X_{E}}\otimes\left\vert y\right\rangle \left\langle
y^{\prime}\right\vert _{Y}\otimes\left\vert x\right\rangle \left\langle
x^{\prime}\right\vert _{X}\right\}  \\
&  =\sum_{x,y,y^{\prime}}\mu_{xy}\left(  \left\langle \phi_{x,y^{\prime}
}\right\vert _{E}\left\vert \phi_{x,y}\right\rangle _{E}\left\langle
\phi_{x,y}\right\vert _{A}\right)  \sigma\left\vert \phi_{x,y^{\prime}
}\right\rangle _{A}\otimes\left\vert y\right\rangle \left\langle y^{\prime
}\right\vert _{Y}\otimes\left\vert x\right\rangle \left\langle x\right\vert
_{X}\\
&  =\sum_{x,y,y^{\prime}}\mu_{xy}\left\langle \phi_{x,y}\right\vert _{A}
\sigma\left\vert \phi_{x,y}\right\rangle _{A}\otimes\left\vert y\right\rangle
\left\langle y\right\vert _{Y}\otimes\left\vert x\right\rangle \left\langle
x\right\vert _{X}\\
&  =\sum_{x,y}\text{Tr}\left\{  \mu_{xy}\left\vert \phi_{x,y}\right\rangle
\left\langle \phi_{x,y}\right\vert \sigma\right\}  \left\vert x\right\rangle
\left\langle x\right\vert _{X}\otimes\left\vert y\right\rangle \left\langle
y\right\vert _{Y}.
\end{align}
Since a rank-one POVM\ always achieves a lower value of the R\'{e}nyi quantum
discord, it suffices to optimize over only these kinds of POVMs when
calculating it.
\end{proof}
\bigskip

At this point we would like to note that a different R\'enyi generalization of the quantum discord,
%$\mathcal{D}^{\alpha}(\overline{A}; B)_\rho$,
proposed in \cite{MBPDS14}, is optimized by a rank-one POVM as well. This quantity was defined
based upon an optimization over rank-one projective measurements, but
can be extended to an optimization over all POVMs. Consider the
definition for the R\'enyi discord of a bipartite state $\rho_{AB}$
proposed in \cite[Eq.~(26)]{MBPDS14} (with projective measurements replaced by POVMs $\{\Lambda^x\}$):
\begin{align}
\mathfrak{D}^{\alpha}(\overline{A}; B)_\rho &\equiv \min_{\sigma_A, \sigma_B} D_\alpha(\rho_{AB}\Vert \sigma_A\otimes\sigma_B)-\max_{\{\Lambda^x\}} \min_{\sigma_X, \sigma_B} D_\alpha(\rho_{XB}\Vert \sigma_X\otimes\sigma_B)\label{AP1}.
\end{align}
The sufficiency of rank-one POVMs $\{\Lambda^x\}$ in the above definition can be proven along the following lines. Similar to the proof of Proposition~\ref{prop:rank-one-discord}, consider a spectral decomposition of the POVM elements $\{\Lambda^x\}$:
\begin{equation}
\Lambda_{x}=\sum_{y}\mu_{xy}\left\vert \phi_{x,y}\right\rangle \left\langle
\phi_{x,y}\right\vert .
\end{equation}
Once again, for a fixed $x$, the set $\left\{  \left\vert \phi_{x,y}\right\rangle
\right\}  _{y}$ is orthonormal, and the set $\left\{  \mu
_{xy}\left\vert \phi_{x,y}\right\rangle \left\langle \phi_{x,y}\right\vert
\right\}  _{x,y}$ forms a rank-one refinement of the original POVM. The output state corresponding to the measurement consisting of these above rank-one refinements  can be written as
\begin{align}
\rho_{XYB}=\sum_{x,y}\left\vert x\right\rangle \left\langle x\right\vert _{X}\otimes \left\vert y\right\rangle \left\langle y\right\vert _{Y}\otimes \mu_{xy}\left\langle \phi_{x,y}\right\vert_A\rho_{AB}\left\vert \phi_{x,y}\right\rangle_A.
\end{align}
Due to the monotonicity of the R\'enyi relative entropy with respect to local quantum operations, we have that
\begin{align}
\min_{\sigma_{XY}, \sigma_B} D_\alpha(\rho_{XYB}\Vert \sigma_{XY}\otimes\sigma_B)  &\geq \min_{\sigma_X, \sigma_B} D_\alpha(\rho_{XB}\Vert \sigma_{X}\otimes\sigma_B).
\end{align}
This implies
\begin{align}
\max_{\left\{  \mu
_{y}\left\vert \phi_{x,y}\right\rangle \left\langle \phi_{x,y}\right\vert
\right\}  _{x,y}}\min_{\sigma_{XY}, \sigma_B} D_\alpha(\rho_{XYB}\Vert \sigma_{XY}\otimes\sigma_B)\geq \max_{\left\{  \mu
_{y}\left\vert \phi_{x,y}\right\rangle \left\langle \phi_{x,y}\right\vert
\right\}  _{x,y}}\min_{\sigma_X, \sigma_B} D_\alpha(\rho_{XB}\Vert \sigma_X\otimes\sigma_B).
\end{align}
Hence, the second term in (\ref{AP1}) can be replaced with 
\begin{equation}\max_{\left\{  \mu
_{y}\left\vert \phi_{x,y}\right\rangle \left\langle \phi_{x,y}\right\vert
\right\}  _{x,y}}\min_{\sigma_{XY}, \sigma_B} D_\alpha(\rho_{XYB}\Vert \sigma_{XY}\otimes\sigma_B).
\end{equation}
Therefore, it suffices to optimize the R\'{e}nyi quantum discord defined in~\cite[Eq.~(26)]{MBPDS14} over only rank-one POVMs.

\begin{table*}
\centering
\begin{tabular}{|c|c|c|c|}
\hline 
Property & Discord of (\ref{eq:sqent}) & R\'enyi Discord of~\eqref{renyidiscord}  & R\'enyi Discord of~\eqref{AP1}\tabularnewline
& & & as in~\cite{MBPDS14}\tabularnewline
\hline 
\hline 
Non-negative & \checkmark & \checkmark &
 \checkmark 
\tabularnewline
& & for $\alpha\in(0,1)\cup(1,2]$ & for $\alpha\in[1/2,1)\cup(1,\infty)$
\tabularnewline
\hline 
Vanishing on cq-states & \checkmark & \checkmark &
\checkmark 
\tabularnewline
& & for $\alpha\in(0,1)\cup(1,\infty)$ & for $\alpha\in(0,1)\cup(1,\infty)$
\tabularnewline
\hline 
Unitary invariance & \checkmark & \checkmark &
\checkmark
\tabularnewline
& & for $\alpha\in(0,1)\cup(1,\infty)$ & for $\alpha\in(0,1)\cup(1,\infty)$
\tabularnewline
\hline 
Rank-1 POVM optimal & \checkmark &  \checkmark &
\checkmark 
\tabularnewline
& & for $\alpha\in(0,1)\cup(1,2]$ & for $\alpha\in[1/2,1)\cup(1,\infty)$
\tabularnewline
\hline 
Monotone in $\alpha$ & N/A & if \cite[Conj. 34]{BSW14} & \XSolidBrush
\tabularnewline
&& true, then true &
\tabularnewline
\hline 
\end{tabular}
\caption{Properties of the R\'enyi quantum discord of (\ref{renyidiscord}) in comparison to those of the original von Neumann entropy based quantum discord and the R\'enyi quantum discord proposed in~\cite[Eq.~(26)]{MBPDS14}. In the first column, cq-states refers to classical-quantum states. The property of monotonicity in $\alpha$ for the R\'enyi quantum discord of (\ref{renyidiscord}) relies on \cite[Conj.~34]{BSW14}, which is the statement that the R\'enyi conditional quantum mutual information is monotone non-decreasing in the R\'enyi parameter, i.e., $I_\alpha(A;B|C)\leq I_\beta(A;B|C)$ for $0\leq\alpha\leq\beta$.
%the $\dagger$ indicates dependence on~\cite[Conjecture 34]{BSW14}.
The R\'enyi discord of \cite{MBPDS14} is not monotone in $\alpha$ because it is equal to a difference of two R\'enyi relative entropies.} 
%The * indicates that the proof for this property relies on Conjecture~\ref{conj:monoA}.
\label{discordprops}
\end{table*}

%%%%%%%%%%%%%%%%%%%%%%%%%%%%%%%%%%%%%%%%%%%%%%%%%%%%%%%%%%%%%%%%%%%%%%%%%%%%%%%%%

\subsection{R\'enyi quantum discord for pure bipartite states}

We now give an expression for the R\'{e}nyi discord of pure bipartite states.

\begin{proposition}
The R\'{e}nyi quantum discord of a pure bipartite state $\psi_{AB}\in\mathcal{S}(\mathcal{H}_{AB})$ for $\alpha\in(0,1)\cup(1,2]$ is given by
\begin{equation}
D^{\alpha}(  \overline{A};B)  _{\psi}=\inf_{\{\vert\varphi_x\rangle :
\sum_x \vert \varphi_x \rangle \langle \varphi_x \vert = I\}}\frac{\alpha}{\alpha-1}\log\sum_{x}p\left(  x\right) \left\langle\xi_x\right\vert_B\psi_B^{1-\alpha}\left\vert\xi_x\right\rangle^{1/\alpha},
%H_{1/\alpha}(B)_{\psi}.
\label{eq:purestdis}
\end{equation}
where 
\begin{align}\left\vert\xi_x\right\rangle_B=
\frac{\langle\varphi_x\vert_A\vert\psi\rangle_{AB}}{\sqrt{p(x)}},\quad p(x)=\Vert\langle\varphi_x\vert_A\vert\psi\rangle_{AB}\Vert_2^2,
\end{align}
and $\{\vert\varphi_x\rangle :
\sum_x \vert \varphi_x \rangle \langle \varphi_x \vert = I\}$ denotes a rank-one POVM acting on system $A$.
\end{proposition}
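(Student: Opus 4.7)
The plan is to use Proposition \ref{prop:rank-one-discord} to restrict the infimum in the definition of $D^{\alpha}(\overline{A};B)_\psi$ to rank-one POVMs $\{|\varphi_x\rangle\langle\varphi_x|\}$ satisfying $\sum_x |\varphi_x\rangle\langle\varphi_x| = I$. For such a POVM, the isometric extension of the measurement map sends $|\psi\rangle_{AB}$ to the pure state
\begin{equation}
|\omega\rangle_{XEB} = \sum_x \sqrt{p(x)}\, |x\rangle_X \otimes |\eta_x\rangle_E \otimes |\xi_x\rangle_B,
\end{equation}
where $|\eta_x\rangle_E \equiv |\varphi_x\rangle_{E_1}/\|\varphi_x\| \otimes |x\rangle_{E_2}$ are \emph{orthonormal} on $E$ due to the classical register $E_2$ in the dilation, and $|\xi_x\rangle_B$ and $p(x)$ are as defined in the statement. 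A direct verification using the completeness relation $\sum_x |\varphi_x\rangle\langle\varphi_x| = I$ shows that the marginal on $B$ satisfies $\omega_B = \sum_x p(x)\,|\xi_x\rangle\langle\xi_x| = \psi_B$, a fact I will need later.

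The key simplification comes from the duality relation \eqref{eq:duality}. Viewing $|\omega\rangle_{XEB}$ as a pure state on four systems by appending a trivial fourth system $F$, one has $I_\alpha(E;B|X)_\omega = I_\alpha(B;E|F)_\omega$; a short check using the Sibson form \eqref{eq:CMI-sibson} with $\omega_F$ trivial and the cyclicity of the partial trace then confirms that $I_\alpha(B;E|F)_\omega$ coincides with the unconditional R\'enyi mutual information $I_\alpha(B;E)_\omega$ via \eqref{RMISibson}. Thus the problem reduces to computing $I_\alpha(B;E)_\omega$ for the marginal
\begin{equation}
\omega_{EB} = \sum_x p(x)\, |\eta_x\rangle\langle\eta_x|_E \otimes |\xi_x\rangle\langle\xi_x|_B .
\end{equation}

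Because the $|\eta_x\rangle$ are orthonormal, $\omega_{EB}$ has spectral decomposition with eigenvalues $p(x)$ and eigenvectors $|\eta_x\rangle|\xi_x\rangle$, so $\omega_{EB}^\alpha = \sum_x p(x)^\alpha |\eta_x\rangle\langle\eta_x|\otimes|\xi_x\rangle\langle\xi_x|$. Substituting this and $\omega_B = \psi_B$ into the Sibson identity \eqref{RMISibson}, one computes
\begin{equation}
\text{Tr}_B\!\left[\omega_B^{1-\alpha}\omega_{EB}^{\alpha}\right] = \sum_x p(x)^\alpha\, \langle\xi_x|\psi_B^{1-\alpha}|\xi_x\rangle\, |\eta_x\rangle\langle\eta_x|_E,
\end{equation}
which is diagonal in the orthonormal set $\{|\eta_x\rangle\}$, so taking the $1/\alpha$ power and the trace yields $\sum_x p(x)\, \langle\xi_x|\psi_B^{1-\alpha}|\xi_x\rangle^{1/\alpha}$. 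The desired expression \eqref{eq:purestdis} follows after taking the infimum over rank-one POVMs.

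I expect the only delicate point to be justifying the clean reduction $I_\alpha(B;E|F)_\omega = I_\alpha(B;E)_\omega$ when $F$ is trivial, together with bookkeeping in setting up the dilated state $|\omega\rangle_{XEB}$ and verifying that $\langle\eta_x|\eta_{x'}\rangle = \delta_{x,x'}$. Once these are in place, all other manipulations are routine consequences of Lemma~\ref{lem:classical-cond}-type reasoning, the Sibson identity, and the spectral structure induced by orthogonality of the $|\eta_x\rangle$.
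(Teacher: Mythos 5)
Your proof is correct and follows essentially the same route as the paper's: reduction to rank-one POVMs via Proposition~\ref{prop:rank-one-discord}, dilation of the measurement to a tripartite pure state, the duality relation \eqref{eq:duality} (with a trivial fourth system) to reduce $I_{\alpha}\left(E;B|X\right)$ to the unconditional R\'enyi mutual information, and then the Sibson identity \eqref{RMISibson} applied to the classical-quantum marginal with orthonormal environment flags. The only differences are cosmetic: you use the canonical two-register dilation $\left(\left\vert\varphi_x\right\rangle/\left\Vert\varphi_x\right\Vert\right)\otimes\left\vert x\right\rangle$ on $E$ in place of the paper's compact isometry $\sum_x\left\vert x\right\rangle_E\left\vert x\right\rangle_X\left\langle\varphi_x\right\vert_A$ (the two differ by an isometry on $E$ and yield the same spectral structure), and you spell out the checks ($\omega_B=\psi_B$ and the trivial-conditioning reduction via cyclicity of the partial trace) that the paper asserts implicitly.
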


\begin{proof}
We begin by recalling Proposition~\ref{prop:rank-one-discord}, i.e., that it
suffices to optimize the R\'{e}nyi quantum discord over rank-one POVMs. Let
$\left\{  \left\vert \varphi_{x}\right\rangle \left\langle \varphi
_{x}\right\vert _{A}\right\}  _{x}$ denote such a POVM,\ so that $\sum
_{x}\left\vert \varphi_{x}\right\rangle \left\langle \varphi_{x}\right\vert
_{A}=I_{A}$. 
Consider a bipartite pure state
\begin{equation}
|\psi\rangle_{AB}=\sum_{y=0}^{d-1}\sqrt{\lambda(y)
}\left\vert \tilde{\psi}_y\right\rangle _{A}\left\vert \psi_y\right\rangle _{B},
\end{equation}
where $\left\vert \tilde{\psi}_y\right\rangle$ and $\left\vert \psi_y\right\rangle$ are orthonormal bases in $\mathcal{H}_A$ and $\mathcal{H}_B$.
The post measurement tripartite state is given by
\begin{equation}
\omega_{BEX}\equiv U_{A\rightarrow EX}\psi_{AB}U_{A\rightarrow EX}^{\dag},
\end{equation}
where $U_{A\rightarrow EX}$ is an isometric extension of the aforementioned
rank-one measurement:
\begin{equation}
U_{A\rightarrow EX}\equiv\sum_{x}\left\vert x\right\rangle _{E}\left\vert
x\right\rangle _{X}\left\langle \varphi_{x}\right\vert _{A}.
\end{equation}
The above state, and the reduced states on systems $BE$ and $B$ can thus be equivalently written as
\begin{align}
\label{purediseqns}
\omega_{BEX}&\equiv\sum_{x,y}\sqrt{p\left(  x\right)p\left(  y\right)}  \left\vert x\right\rangle
\left\langle y\right\vert _{X}\otimes\left\vert x\right\rangle \left\langle
y\right\vert _{E}\otimes\left\vert\xi_x\right\rangle\left\langle\xi_y\right\vert_B,\\
\omega_{BE}&\equiv\sum_{x}p\left(  x\right)\left\vert x\right\rangle \left\langle
x\right\vert _{E}\otimes\left\vert\xi_x\right\rangle\left\langle\xi_x\right\vert_B,
%\omega_{B}&\equiv\sum_{x}p\left(  x\right)\left\vert\xi_x\right\rangle\left\langle\xi_x\right\vert_B,
\end{align}
where $p(x)={\rm Tr}\{\vert\varphi_x\rangle\langle\varphi_x\vert_A\vert\psi\rangle\langle\psi\vert_{AB}\}$ and $\left\vert\xi_x\right\rangle\left\langle\xi_x\right\vert_B={\rm Tr}_A\{\vert\varphi_x\rangle\langle\varphi_x\vert_A\vert\psi\rangle\langle\psi\vert_{AB}\}/p(x)$.
The R\'{e}nyi conditional quantum mutual information of $\omega_{BEX}$ is thereby given by
\begin{align}
I_{\alpha}\left(  E;B|X\right)  _{\omega} &  =I_{\alpha}\left(  B;E\right)
_{\omega}\\
&  =\frac{\alpha}{\alpha-1}\log{\rm Tr}\left\{{\rm Tr}_B\left\{\omega_{BE}^\alpha\omega_B^{1-\alpha}\right\}^{1/\alpha}\right\}\\
&  =\frac{\alpha}{\alpha-1}\log{\rm Tr}\left\{{\rm Tr}_B\left\{\left(\sum_{x}p\left(  x\right)^\alpha  \left\vert x\right\rangle \left\langle
x\right\vert _{E}\otimes\left\vert\xi_x\right\rangle\left\langle\xi_x\right\vert_B\right)\omega_B^{1-\alpha}\right\}^{1/\alpha}\right\}\\
&  =\frac{\alpha}{\alpha-1}\log{\rm Tr}\left\{\sum_{x}p\left(  x\right) \left\vert x\right\rangle \left\langle
x\right\vert _{E} {\rm Tr}_B\left\{\left\vert\xi_x\right\rangle\left\langle\xi_x\right\vert_B\omega_B^{1-\alpha}\right\}^{1/\alpha}\right\}\\
&  =\frac{\alpha}{\alpha-1}\log{\rm Tr}\left\{\sum_{x}p\left(  x\right) {\rm Tr}_B\left\{\left\vert\xi_x\right\rangle\left\langle\xi_x\right\vert_B\omega_B^{1-\alpha}\right\}^{1/\alpha}\right\}\\
&  =\frac{\alpha}{\alpha-1}\log\sum_{x}p\left(  x\right) \left\langle\xi_x\right\vert_B\omega_B^{1-\alpha}\left\vert\xi_x\right\rangle^{1/\alpha}.\label{purerendis1}
\end{align}
The first equality follows from application of duality of (\ref{eq:duality})
along with the fact that $\omega_{BEX}$ is a pure state. The second and third equalities
follow from (\ref{RMISibson}), the fact that the system $E$ is classical, and
the fact that the post-measurement states on system $B$ are pure whenever a
rank-one POVM is performed on system $A$. The fourth one follows from tracing over the $E$ system. The fifth and sixth equalities are straightforward.
%The fourth equality follows from applying the trace over $E$ system first. (This can be done since $E$ is classical.) The expression in (\ref{purerendis1}) takes its least value when the states $\psi_B^x$ are orthogonal. This happens when the POVMs $\{\Lambda^x\}$ are in fact the orthogonal projectors corresponding to the Schmidt basis of the state $|\psi\rangle_{AB}$, and the value would be
%\begin{multline}
%\inf_{\left\{  \Lambda_{x}\right\}}\frac{\alpha}{\alpha-1}\log{\rm Tr}\left\{\left(\left(\sum_{x}p\left(  x\right)^\alpha \psi_{B}^{x}\right)\left(\sum_{x}p\left(  x\right)^{(1-\alpha)}\psi_{B}^{x}\right)\right)^{1/\alpha}\right\}\\
%= \frac{\alpha}{\alpha-1}\log\sum_{x}\lambda_x^{1/\alpha} {\rm Tr}\left\{\psi_{B}^{x}\right\}= \frac{\alpha}{\alpha-1}\log\sum_{x}\lambda_x^{1/\alpha}=H_{1/\alpha}(B)_{\psi}.
%\end{multline}
\end{proof}

\begin{corollary}
The R\'enyi quantum discord of a maximally entangled state $\Phi_{AB}\in\mathcal{S}(\mathcal{H}_{AB})$ simplifies to
\begin{equation}
D^{\alpha}\left(  \overline{A};B\right)  _{\Phi}=\log |A|.
\end{equation}
\end{corollary}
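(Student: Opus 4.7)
The plan is to specialize the pure-state formula \eqref{eq:purestdis} to $|\Phi\rangle_{AB} = \frac{1}{\sqrt{d}}\sum_{i=0}^{d-1}|i\rangle_A|i\rangle_B$ with $d = |A|$, and observe that the integrand becomes independent of the measurement, rendering the infimum trivial.

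First, I would record the two simplifications that the maximally entangled state provides. Since $\Phi_B = I_B/d$, we have $\Phi_B^{1-\alpha} = d^{\alpha-1} I_B$. Second, for any rank-one POVM element $|\varphi_x\rangle\langle\varphi_x|_A$ satisfying $\sum_x |\varphi_x\rangle\langle\varphi_x|_A = I_A$, the unnormalized conditional vector is
\begin{equation}
\langle\varphi_x|_A|\Phi\rangle_{AB} = \frac{1}{\sqrt{d}}\sum_i \langle\varphi_x|i\rangle_A |i\rangle_B = \frac{1}{\sqrt{d}}|\varphi_x^{*}\rangle_B,
\end{equation}
where $|\varphi_x^{*}\rangle_B \equiv \sum_i \langle\varphi_x|i\rangle |i\rangle_B$. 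Its squared norm is $\langle\varphi_x|\varphi_x\rangle/d$, which equals $p(x)$, and therefore $|\xi_x\rangle_B = |\varphi_x^{*}\rangle_B/\sqrt{\langle\varphi_x|\varphi_x\rangle}$ is a unit vector in $\mathcal{H}_B$.

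Next, I would plug these into the expectation appearing in \eqref{eq:purestdis}:
\begin{equation}
\langle\xi_x|_B \Phi_B^{1-\alpha}|\xi_x\rangle_B = d^{\alpha-1}\langle\xi_x|\xi_x\rangle = d^{\alpha-1},
\end{equation}
independent of $x$ and of the choice of POVM. Therefore
\begin{equation}
\frac{\alpha}{\alpha-1}\log \sum_x p(x)\, \langle\xi_x|\Phi_B^{1-\alpha}|\xi_x\rangle^{1/\alpha} = \frac{\alpha}{\alpha-1}\log\!\left( d^{(\alpha-1)/\alpha}\sum_x p(x) \right) = \log d,
\end{equation}
using $\sum_x p(x) = 1$. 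The infimum over POVMs in \eqref{eq:purestdis} is then vacuous, yielding $D^{\alpha}(\overline{A};B)_{\Phi} = \log d = \log |A|$.

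There is essentially no obstacle; the only mild point to verify is that $|\xi_x\rangle$ is a genuine unit vector (so that $\Phi_B^{1-\alpha}$ contributes exactly $d^{\alpha-1}$), which follows directly from the Schmidt decomposition of $|\Phi\rangle_{AB}$. The result should also be recoverable as the $\alpha \to 1$ limit of the von Neumann entropy based discord of a maximally entangled state, providing a consistency check.
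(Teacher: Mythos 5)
Your proof is correct and follows essentially the same route as the paper: specialize the pure-state expression \eqref{eq:purestdis} to $\Phi_{AB}$ and use $\Phi_B = I_B/d$ to make the summand measurement-independent. If anything, your version is slightly more careful than the paper's, which asserts $p(x)=1/|A|$ (true only for orthonormal-basis measurements, since rank-one POVM vectors $|\varphi_x\rangle$ may be sub-normalized), whereas you correctly avoid this by observing that $\langle\xi_x|\Phi_B^{1-\alpha}|\xi_x\rangle = d^{\alpha-1}$ for every $x$ and invoking only $\sum_x p(x)=1$.
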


\begin{proof}
For a maximally entangled state, $p(x)$ in (\ref{purediseqns}) is equal to $1/|A|$, and the reduced state on system $B$ is maximally mixed. The result then follows from (\ref{purerendis1}).
\end{proof}

\subsection{Conjectured remainder terms for non-negativity of quantum discord}

In \cite[Conjecture
34]{BSW14}, it was conjectured that a R\'{e}nyi conditional quantum mutual information defined based upon the sandwiched R\'enyi relative entropy of (\ref{eq:def-sandwiched}) is also monotone in the R\'{e}nyi parameter. That is, for a
tripartite state $\rho_{ABC}$ and $0\leq\alpha\leq\beta$, it was conjectured
that
\begin{align}
\widetilde{I}_{\alpha}\left(  A;B|C\right)  _{\rho} &  \leq\widetilde
{I}_{\beta}\left(  A;B|C\right)  _{\rho},\label{eq:mono-alpha-2}
\end{align}
where the \textquotedblleft sandwiched\textquotedblright\ R\'{e}nyi conditional mutual information is defined as \cite[Section 6]{BSW14}
\begin{equation}
\widetilde{I}_{\alpha}\left(  A;B|C\right)  _{\rho}\equiv\frac{1}
{\alpha-1}\log\left\Vert \rho_{ABC}^{1/2}\rho_{AC}^{(1-\alpha)/2\alpha}
\rho_{C}^{(\alpha-1)/2\alpha}\rho_{BC}^{(1-\alpha)/2\alpha}\right\Vert
_{2\alpha}^{2\alpha}.
\end{equation}
Proofs were given for this conjectured inequality in \cite[Section 8]{BSW14} in some special cases, e.g.,
when the R\'{e}nyi parameter $\alpha$ is in a neighborhood of one, and when $1/\alpha+1/\beta=2$. Furthermore, implications of the conjectured inequality for understanding states with small conditional quantum mutual
information were discussed. In particular, it was pointed out that the following lower bounds on the conditional quantum mutual
information hold as consequences of the conjectured inequality in (\ref{eq:mono-alpha-2}):
%and the fact that the following inequality holds for $\alpha\in(0,1)\cup
%(1,\infty)$ \cite{DL13}:%
%\begin{equation}
%\widetilde{D}_{\alpha}\left(  \rho\Vert\sigma\right)  \leq D_{\alpha}\left(
%\rho\Vert\sigma\right),
%\end{equation}
\begin{align}
I\left(  A;B|C\right)  _{\rho} &  \geq -\log F\left(  \rho_{ABC},\mathcal{R}_{C\rightarrow AC}^{P}\left(\rho_{BC}\right)\right),\label{pmap1}\\
%\geq\frac{1}{4}\left\Vert \rho_{ABC}-\mathcal{R}_{C\rightarrow AC}^{P}\left(  \rho_{BC}\right)  \right\Vert_{1}^{2}
I\left(  A;B|C\right)  _{\rho} &\geq -\log F\left(  \rho_{ABC},\mathcal{R}_{C\rightarrow BC}^{P}\left(\rho_{AC}\right)  \right),\label{pmap2}
%\geq\frac{1}{4}\left\Vert \rho_{ABC}-\mathcal{R}_{C\rightarrow BC}^{P}\left(  \rho_{AC}\right)  \right\Vert_{1}^{2}
\end{align}
where $\mathcal{R}_{C\rightarrow AC}^{P}$ and $\mathcal{R}_{C\rightarrow
BC}^{P}$ are Petz recovery maps~\cite{HJPW04}:
\begin{align}
\mathcal{R}_{C\rightarrow AC}^{P}(\cdot) &  \equiv\rho_{AC}^{1/2}\rho
_{C}^{-1/2}(\cdot)\rho_{C}^{-1/2}\rho_{AC}^{1/2},\\
\mathcal{R}_{C\rightarrow BC}^{P}(\cdot) &  \equiv\rho_{BC}^{1/2}\rho
_{C}^{-1/2}(\cdot)\rho_{C}^{-1/2}\rho_{BC}^{1/2}.
\end{align}
%Similar bounds were conjectured previously in~\cite{K13conj}.

Since the quantum discord is based upon the conditional quantum mutual
information, we now examine the implications of this conjecture for the
quantum discord, by writing down a corresponding lower bound on it. This lower bound if true would provide a characterization for states with small quantum discord (the von Neumann entropy based quantity that is). It has the interpretation of quantifying how far a quantum state is
from being a fixed point of an entanglement-breaking channel. In case the discord
is equal to zero we can conclude that the state is a fixed point of an entanglement-breaking channel. In this case, we can apply~\cite[Theorem 5.3]{FNW14} to conclude the known result that any zero-discord state is in fact a classical-quantum state.

\begin{consequence}
Assuming \cite[Conjecture 34]{BSW14}, the following lower bound holds for the
quantum discord of $\rho_{AB}\in\mathcal{S}(\mathcal{H}_{AB})$:
\begin{align}
%D(  \overline{A};B)  _{\rho} &  \geq\min_{\left\{
%\left\vert \phi_{x}\right\rangle :\sum_{x}\left\vert \phi_{x}\right\rangle
%\left\langle \phi_{x}\right\vert =I\right\}  }
%-\log F\left(\rho_{AB},\sum_{x,y}\operatorname{Tr}\left\{  \left\vert \phi_{y}\right\rangle \left\langle\phi_{x}\right\vert _{A}\rho_{AB}\right\}  \left\vert \phi_{x}\right\rangle
%\left\langle \phi_{y}\right\vert _{A}\otimes\left(  \rho_{B}^{x}\right)
%^{1/2}\left(  \rho_{B}^{y}\right)  ^{1/2}\right),\label{discbound1}\\
D(  \overline{A};B)  _{\rho} &  \geq\min_{\left\{
\left\vert \phi_{x}\right\rangle :\sum_{x}\left\vert \phi_{x}\right\rangle
\left\langle \phi_{x}\right\vert =I\right\}  }
-\log F\left(\rho_{AB},\mathcal{E}_A(\rho_{AB})\right),\label{discbound2}
\end{align}
where 
\begin{equation}\rho_{B}^{x}\equiv\frac{1}{\operatorname{Tr}\left\{  \left\vert \phi_{x}\right\rangle
\left\langle \phi_{x}\right\vert _{A}\rho_{AB}\right\}  }\left\langle \phi
_{x}\right\vert _{A}\rho_{AB}\left\vert \phi_{x}\right\rangle _{A},\end{equation}
and $\mathcal{E}_A$ is the following entanglement-breaking channel:
\begin{equation}
\mathcal{E}_A(\sigma_A) \equiv \sum_{x}
\left\langle \phi_{x}\right\vert _{A}\sigma_{A}\left\vert \phi_{x}
\right\rangle _{A}
\frac{\rho_{A}^{1/2}\left\vert \phi_{x}\right\rangle \left\langle \phi
_{x}\right\vert _{A}\rho_{A}^{1/2}}{\operatorname{Tr}\left\{  \left\vert \phi
_{x}\right\rangle \left\langle \phi_{x}\right\vert _{A}\rho_{A}\right\}
}.
\end{equation}
\label{cons:discord}
\end{consequence}

\begin{proof}
Consider a rank-one measurement $\left\{  \left\vert \phi_{x}\right\rangle
\left\langle \phi_{x}\right\vert \right\}  $, and its isometric extension
\begin{equation}
U_{A\rightarrow XE}=\sum_{x}\left\vert x\right\rangle _{X}\left\vert
x\right\rangle _{E}\left\langle \phi_{x}\right\vert _{A}.
\end{equation} 
For a given state $\rho_{AB}$, the state relevant for discord becomes
\begin{align}
U_{A\rightarrow XE}\rho_{AB}U_{A\rightarrow XE}^{\dag} &  =\sum_{x,y}
\left\vert x\right\rangle _{X}\left\vert x\right\rangle _{E}\left\langle
\phi_{x}\right\vert _{A}\rho_{AB}\left\vert \phi_{y}\right\rangle
_{A}\left\langle y\right\vert _{X}\left\langle y\right\vert _{E}\\
&  =\sum_{x,y}\left\langle \phi_{x}\right\vert _{A}\rho_{AB}\left\vert
\phi_{y}\right\rangle _{A}\otimes\left\vert x\right\rangle \left\langle
y\right\vert _{X}\otimes\left\vert x\right\rangle \left\langle y\right\vert
_{E}.
\end{align}
The lower bound in (\ref{pmap1}) for this state takes the form:
\begin{equation}
I\left(  E;B|X\right)  \geq-\log F\left(\rho_{BEX},\rho_{EX}
^{1/2}\rho_{X}^{-1/2}\rho_{BX}\rho_{X}^{-1/2}\rho_{EX}^{1/2}\right).\label{eq:discord-ineq}
\end{equation}
So, we need to calculate $\rho_{BX}$, $\rho_{X}$, and $\rho_{EX}$:
\begin{align}
\rho_{BX} &  =\sum_{x}\left\langle \phi_{x}\right\vert _{A}\rho_{AB}\left\vert
\phi_{x}\right\rangle _{A}\otimes\left\vert x\right\rangle \left\langle
x\right\vert _{X},\\
\rho_{X} &  =\sum_{x}\text{Tr}\left\{  \left\vert \phi_{x}\right\rangle
\left\langle \phi_{x}\right\vert _{A}\rho_{AB}\right\}  \left\vert
x\right\rangle \left\langle x\right\vert _{X},\\
\rho_{EX} &  =\sum_{x,y}\text{Tr}\left\{  \left\vert \phi_{y}\right\rangle
\left\langle \phi_{x}\right\vert _{A}\rho_{AB}\right\}  \left\vert
x\right\rangle \left\langle y\right\vert _{X}\otimes\left\vert x\right\rangle
\left\langle y\right\vert _{E}.
\end{align}
Let
\begin{equation}
\rho_{B}^{x}\equiv\frac{1}{\text{Tr}\left\{  \left\vert \phi_{x}\right\rangle
\left\langle \phi_{x}\right\vert _{A}\rho_{AB}\right\}  }\left\langle \phi
_{x}\right\vert _{A}\rho_{AB}\left\vert \phi_{x}\right\rangle _{A}.
\end{equation}
Then we have that
\begin{align}
&  \rho_{EX}^{1/2}\rho_{X}^{-1/2}\rho_{BX}\rho_{X}^{-1/2}\rho_{EX}
^{1/2}\nonumber\\
&  =\rho_{EX}^{1/2}\left(  \sum_{x}\left\vert x\right\rangle \left\langle
x\right\vert _{X}\otimes\rho_{B}^{x}\right)  \rho_{EX}^{1/2}\\
&  =\left(  U_{A\rightarrow XE}\rho_{A}U_{A\rightarrow XE}^{\dag}\right)
^{1/2}\left(  \sum_{x}\left\vert x\right\rangle \left\langle x\right\vert
_{X}\otimes\rho_{B}^{x}\right)  \left(  U_{A\rightarrow XE}\rho_{A}
U_{A\rightarrow XE}^{\dag}\right)  ^{1/2}\\
&  =U_{A\rightarrow XE}\rho_{A}^{1/2}U_{A\rightarrow XE}^{\dag}\left(
\sum_{x}\left\vert x\right\rangle \left\langle x\right\vert _{X}\otimes
\rho_{B}^{x}\right)  U_{A\rightarrow XE}\rho_{A}^{1/2}U_{A\rightarrow
XE}^{\dag}.
\end{align}
Sandwiching by $U_{A\rightarrow XE}^{\dag}\left(  \cdot\right)
U_{A\rightarrow XE}$ then gives
\begin{align}
&  \rho_{A}^{1/2}U_{A\rightarrow XE}^{\dag}\left(  \sum_{x}\left\vert
x\right\rangle \left\langle x\right\vert _{X}\otimes\rho_{B}^{x}\right)
U_{A\rightarrow XE}\rho_{A}^{1/2}\nonumber\\
&  =\rho_{A}^{1/2}\left(  \sum_{z}\left\vert \phi_{z}\right\rangle
_{A}\left\langle z\right\vert _{X}\left\langle z\right\vert _{E}\right)
\left(  \sum_{x}\left\vert x\right\rangle \left\langle x\right\vert
_{X}\otimes\rho_{B}^{x}\right)  \left(  \sum_{z^{\prime}}\left\vert z^{\prime
}\right\rangle _{X}\left\vert z^{\prime}\right\rangle _{E}\left\langle
\phi_{z^{\prime}}\right\vert _{A}\right)  \rho_{A}^{1/2}\\
&  =\rho_{A}^{1/2}\left(  \sum_{z,x,z^{\prime}}\left\vert \phi_{z}
\right\rangle _{A}\left\langle \phi_{z^{\prime}}\right\vert _{A}\left\langle
z\right\vert \left\vert z^{\prime}\right\rangle _{E}\left\langle z\right\vert
\left\vert x\right\rangle \left\langle x\right\vert \left\vert z^{\prime
}\right\rangle _{X}\otimes\rho_{B}^{x}\right)  \rho_{A}^{1/2}\\
&  =\rho_{A}^{1/2}\left(  \sum_{x}\left\vert \phi_{x}\right\rangle
\left\langle \phi_{x}\right\vert _{A}\otimes\rho_{B}^{x}\right)  \rho
_{A}^{1/2}\\
&  =\sum_{x}\rho_{A}^{1/2}\left\vert \phi_{x}\right\rangle \left\langle
\phi_{x}\right\vert _{A}\rho_{A}^{1/2}\otimes\rho_{B}^{x},
\end{align}
which we can see is a density operator on systems $A$ and $B$.
This establishes that $\rho_{EX}^{1/2}\rho_{X}^{-1/2}\rho_{BX}\rho_{X}^{-1/2}\rho_{EX}
^{1/2}$ is in the subspace onto which
$U_{A\rightarrow XE} U_{A\rightarrow XE}^{\dag}$ projects,
and since this is true also for the state $\rho_{BXE}$, we find that
\begin{equation}
F\left(\rho_{BEX},\rho_{EX}
^{1/2}\rho_{X}^{-1/2}\rho_{BX}\rho_{X}^{-1/2}\rho_{EX}^{1/2}\right)
=F\left(\rho_{AB},\sum_{x}\rho_{A}^{1/2}\left\vert \phi_{x}\right\rangle \left\langle
\phi_{x}\right\vert _{A}\rho_{A}^{1/2}\otimes\rho_{B}^{x}\right)
\end{equation}
which is the bound in (\ref{discbound2}).
\end{proof}

%%%%%%%%%%%%%%%%%%%%%%%%%%%%%%%%%%%%%%%%%%%%%%%%%%%%%%%%%%%%%%%%%%%%%%%%%%%%%%%%%

\section{R\'enyi generalizations of relative entropy differences}

In our prior work, we detailed a procedure to obtain R\'{e}nyi generalizations of
quantum information measures that are equal to linear combinations of von
Neumann entropies with coefficients chosen from the set $\{-1,0,1\}$. The
procedure relied on writing any given measure as a relative entropy and
subsequently using the generalized Lie-Trotter product formula \cite{S85} and
the R\'{e}nyi relative entropy \cite{P86,MDSFT13,WWY13} or the sandwiched
R\'{e}nyi relative entropy \cite{MDSFT13,WWY13}. In this section, we show that
the same procedure can be used to obtain R\'{e}nyi generalizations of quantum
information measures that are equal to differences of relative entropies. Further, we conjecture that the proposed R\'enyi generalizations of relative entropy differences are monotone in the R\'enyi parameter. Based on this conjecture, we then suggest remainder terms for two central properties of the relative entropy, namely monotonicity with respect to quantum operations and joint convexity. We also conjecture a remainder term for the Holevo bound.

Consider the following difference of relative entropies:
\begin{equation}
D\left(  \rho\Vert\sigma\right)  -D\left(  \mathcal{N}\left(  \rho\right)
\Vert\mathcal{N}\left(  \sigma\right)  \right)  ,\label{eq:rel-entropy-diff}
\end{equation}
where $\rho,\sigma\in\mathcal{S}\left(  \mathcal{H}\right)  _{++}$, and
$\mathcal{N}$ is a strict CPTP map. The above measure is non-negative, which
is a consequence of the monotonicity of relative entropy \cite{L74,Lindblad1975}. Writing
the adjoint of a CPTP map $\mathcal{N}$ as $\mathcal{N}^{\dag}$, consider that
the relative entropy difference in (\ref{eq:rel-entropy-diff}) can be written
as \cite{CL14,Z14}
\begin{align}
&  D\left(  \rho\Vert\sigma\right)  -D\left(  \mathcal{N}\left(  \rho\right)
\Vert\mathcal{N}\left(  \sigma\right)  \right)  \nonumber\\
&  =\text{Tr}\left\{  \rho\left[  \log\rho-\log\sigma\right]  \right\}
-\text{Tr}\left\{  \mathcal{N}\left(  \rho\right)  \left[  \log\mathcal{N}
\left(  \rho\right)  -\log\mathcal{N}\left(  \sigma\right)  \right]  \right\}
\\
&  =\text{Tr}\left\{  \rho\left[  \log\rho-\log\sigma\right]  \right\}
-\text{Tr}\left\{  \rho\left[  \mathcal{N}^{\dag}\left(  \log\mathcal{N}
\left(  \rho\right)  \right)  -\mathcal{N}^{\dag}\left(  \log\mathcal{N}
\left(  \sigma\right)  \right)  \right]  \right\}  \\
&  =\text{Tr}\left\{  \rho\log\rho\right\}  -\text{Tr}\left\{  \rho\left[
\log\sigma+\mathcal{N}^{\dag}\left(  \log\mathcal{N}\left(  \rho\right)
-\log\mathcal{N}\left(  \sigma\right)  \right)  \right]  \right\} \label{vonNreldiff} \\
&  =D\left(  \rho\middle\Vert\exp\left\{  \log\sigma+\mathcal{N}^{\dag}\left(
\log\mathcal{N}\left(  \rho\right)  -\log\mathcal{N}\left(  \sigma\right)
\right)  \right\}  \right).\label{rengenreldif1}
\end{align}
In order to find R\'{e}nyi generalizations of the above quantity, we consider
the following lemma:

\begin{lemma}\label{LTBoost}
Let $\rho,\sigma\in\mathcal{S}\left(  \mathcal{H}\right)
_{++}$ and $\mathcal{N}$ be a strict CPTP map. Then
\begin{multline}
\lim_{p\rightarrow0}\left[  \sigma^{p/2}\mathcal{N}^{\dag}\left(  \left[
\mathcal{N}\left(  \sigma\right)  \right]  ^{-p/2}\left[  \mathcal{N}\left(
\rho\right)  \right]  ^{p}\left[  \mathcal{N}\left(  \sigma\right)  \right]
^{-p/2}\right)  \sigma^{p/2}\right]  ^{1/p}\label{eq:limit-renyi-gen}\\
=\exp\left\{  \log\sigma+\mathcal{N}^{\dag}\left(  \log\mathcal{N}\left(
\rho\right)  -\log\mathcal{N}\left(  \sigma\right)  \right)  \right\} .
\end{multline}
\end{lemma}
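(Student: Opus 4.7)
The plan is to treat this as a direct application of Suzuki's generalized Lie--Trotter product formula \cite{S85}, which (applied in the form most useful here) says that for any $X(p)$ of the form $X(p) = I + pZ + O(p^{2})$ with $Z$ Hermitian, we have $\lim_{p \to 0} X(p)^{1/p} = \exp(Z)$. So the entire task reduces to expanding the bracketed expression to first order in $p$ around $p = 0$ and reading off the coefficient of $p$.

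First I would record the elementary expansion $A^{p} = e^{p \log A} = I + p \log A + O(p^{2})$, valid uniformly for any positive definite $A$ on our finite-dimensional space (this is where strict positivity of $\rho$, $\sigma$, and $\mathcal{N}(\rho)$, $\mathcal{N}(\sigma)$ is used, so that all the logarithms are bounded operators and the Taylor expansion has a well-controlled remainder). Applying this to $[\mathcal{N}(\sigma)]^{-p/2}$ and $[\mathcal{N}(\rho)]^{p}$ and multiplying out, the middle factor becomes
\begin{equation}
[\mathcal{N}(\sigma)]^{-p/2}[\mathcal{N}(\rho)]^{p}[\mathcal{N}(\sigma)]^{-p/2} = I + p\bigl[\log \mathcal{N}(\rho) - \log \mathcal{N}(\sigma)\bigr] + O(p^{2}),
\end{equation}
where the two $-p/2$ contributions from $\log \mathcal{N}(\sigma)$ combine to give $-p \log \mathcal{N}(\sigma)$ regardless of operator ordering, since all higher-order cross-terms are absorbed into $O(p^{2})$.

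Next I would apply $\mathcal{N}^{\dag}$. Linearity of $\mathcal{N}^{\dag}$ together with unitality $\mathcal{N}^{\dag}(I) = I$ (which holds because $\mathcal{N}$ is trace-preserving) gives
\begin{equation}
\mathcal{N}^{\dag}\!\left([\mathcal{N}(\sigma)]^{-p/2}[\mathcal{N}(\rho)]^{p}[\mathcal{N}(\sigma)]^{-p/2}\right) = I + p\,\mathcal{N}^{\dag}\!\bigl(\log \mathcal{N}(\rho) - \log \mathcal{N}(\sigma)\bigr) + O(p^{2}).
\end{equation}
Sandwiching this by $\sigma^{p/2} = I + (p/2)\log \sigma + O(p^{2})$ on both sides, and again collecting only terms linear in $p$, yields
\begin{equation}
\sigma^{p/2}\mathcal{N}^{\dag}\!\left(\cdots\right)\sigma^{p/2} = I + p\Bigl[\log \sigma + \mathcal{N}^{\dag}\!\bigl(\log \mathcal{N}(\rho) - \log \mathcal{N}(\sigma)\bigr)\Bigr] + O(p^{2}).
\end{equation}

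Finally, I would invoke the generalized Lie--Trotter formula (or just the continuity of the matrix exponential/logarithm together with $\lim_{p \to 0}\frac{1}{p}\log(I + pZ + O(p^{2})) = Z$) to conclude that the $1/p$-th power of the right-hand side tends to the exponential of the bracketed operator, which is precisely the right-hand side of \eqref{eq:limit-renyi-gen}. The main obstacle is really just bookkeeping: one must verify that the $O(p^{2})$ remainders are genuinely operator-norm $O(p^{2})$ and survive unchanged after being sandwiched and after $\mathcal{N}^{\dag}$ is applied (both are bounded operations in finite dimension), so that the Lie--Trotter limit actually applies. Given strict positivity of $\rho$, $\sigma$, $\mathcal{N}(\rho)$, $\mathcal{N}(\sigma)$, all Taylor expansions are convergent power series in $p$ with bounded operator coefficients, so these error estimates are routine and the conclusion follows.
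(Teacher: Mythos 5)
Your proposal is correct and follows essentially the same route as the paper's own proof: a first-order Taylor expansion $A^{p}=I+p\log A+O(p^{2})$ of each factor, linearity and unitality of $\mathcal{N}^{\dag}$ to collect the coefficient of $p$, and then $\lim_{p\to 0}\frac{1}{p}\log\left(I+pZ+O(p^{2})\right)=Z$ followed by exponentiation. The only cosmetic difference is that you track $O(p^{2})$ remainders where the paper uses $o(p)$; both are adequate in finite dimensions with all operators positive definite.
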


\begin{proof}
The method of proof is the same as that given for \cite[Eq.~(2.2)]{AH12}.
Consider that
\begin{align}
& \sigma^{p/2}\mathcal{N}^{\dag}\left(  \left[  \mathcal{N}\left(
\sigma\right)  \right]  ^{-p/2}\left[  \mathcal{N}\left(  \rho\right)
\right]  ^{p}\left[  \mathcal{N}\left(  \sigma\right)  \right]  ^{-p/2}
\right)  \sigma^{p/2}\nonumber\\
& =\left(  I+\frac{p}{2}\log\sigma+o\left(  p\right)  \right)  \times
\nonumber\\
& \ \ \ \ \ \ \ \mathcal{N}^{\dag}\left(  \left[  I-\frac{p}{2}
\log\mathcal{N}\left(  \sigma\right)  +o\left(  p\right)  \right]  \left[
I+p\log\mathcal{N}\left(  \rho\right)  +o\left(  p\right)  \right]  \left[
I-\frac{p}{2}\log\mathcal{N}\left(  \sigma\right)  +o\left(  p\right)
\right]  \right)  \times\nonumber\\
& \ \ \ \ \ \ \ \left(  I+\frac{p}{2}\log\sigma+o\left(  p\right)  \right)
\\
& =I+p\log\sigma-p\mathcal{N}^{\dag}\left(  \log\mathcal{N}\left(
\sigma\right)  \right)  +p\mathcal{N}^{\dag}\left(  \log\mathcal{N}\left(
\rho\right)  \right)  +o\left(  p\right)  \\
& =I+p\left[  \log\sigma+\mathcal{N}^{\dag}\left(  \log\mathcal{N}\left(
\rho\right)  -\log\mathcal{N}\left(  \sigma\right)  \right)  \right]
+o\left(  p\right)  ,
\end{align}
where $o\left(  p\right)  $ means that $o\left(  p\right)  /p\rightarrow0$ in
the operator norm as $p\rightarrow0$. The first equality follows from the
Taylor expansion $X^{p}=I+p\log X+o\left(  p\right)  $, and the second
equality follows because $\mathcal{N}^{\dag}\left(  I\right)  =I$. So this
implies for $p$ small enough that
\begin{align}
& \frac{1}{p}\log\left[  \sigma^{p/2}\mathcal{N}^{\dag}\left(  \left[
\mathcal{N}\left(  \sigma\right)  \right]  ^{-p/2}\left[  \mathcal{N}\left(
\rho\right)  \right]  ^{p}\left[  \mathcal{N}\left(  \sigma\right)  \right]
^{-p/2}\right)  \sigma^{p/2}\right]  \nonumber\\
& =\frac{1}{p}\log\left[  I+p\left[  \log\sigma+\mathcal{N}^{\dag}\left(
\log\mathcal{N}\left(  \rho\right)  -\log\mathcal{N}\left(  \sigma\right)
\right)  \right]  +o\left(  p\right)  \right]  \\
& =\log\sigma+\mathcal{N}^{\dag}\left(  \log\mathcal{N}\left(  \rho\right)
-\log\mathcal{N}\left(  \sigma\right)  \right)  +o\left(  p\right)  .
\end{align}
By exponentiating the first and last line and taking the limit as
$p\rightarrow0$, we recover the statement of the lemma.
\end{proof}

Appealing to Lemma~\ref{LTBoost}, we can write the second argument of the
relative entropy in (\ref{rengenreldif1}) as in (\ref{eq:limit-renyi-gen}).
Based on the procedure given in~\cite[Section 10]{BSW14}, this suggests the
following R\'{e}nyi generalizations of the relative entropy difference in
(\ref{eq:rel-entropy-diff}) using the R\'enyi relative entropy and the sandwiched R\'enyi relative entropy.

\begin{definition}\label{rengen-reldif}
Let $\rho,\sigma\in\mathcal{S}\left(  \mathcal{H}\right)
_{++}$ and $\mathcal{N}$ be a strict CPTP map. The R\'enyi generalizations of the relative entropy difference in \eqref{eq:rel-entropy-diff} are defined respectively as
\begin{align}
&\Delta_{\alpha}(\rho,\sigma,\mathcal{N})\equiv\frac{1}{\alpha-1}\log
\operatorname{Tr}\left\{  \rho^{\alpha}\sigma^{\left(  1-\alpha\right)  /2}
\mathcal{N}^{\dag}\left(  \left[  \mathcal{N}\left(  \sigma\right)  \right]
^{\left(  \alpha-1\right)  /2}\left[  \mathcal{N}\left(  \rho\right)  \right]
^{1-\alpha}\left[  \mathcal{N}\left(  \sigma\right)  \right]  ^{\left(
\alpha-1\right)  /2}\right)  \sigma^{\left(  1-\alpha\right)  /2}\right\},
\label{eq:Delta-a}\\
&\widetilde{\Delta}_{\alpha}(\rho,\sigma,\mathcal{N})\equiv\nonumber\\
&\frac{\alpha}{\alpha-1}\log\operatorname{Tr}\left\Vert \rho^{1/2}\sigma^{\left(
1-\alpha\right)  /2\alpha}\mathcal{N}^{\dag}\left(  \left[  \mathcal{N}\left(
\sigma\right)  \right]  ^{\left(  \alpha-1\right)  /2\alpha}\left[
\mathcal{N}\left(  \rho\right)  \right]  ^{\left(  1-\alpha\right)  /\alpha
}\left[  \mathcal{N}\left(  \sigma\right)  \right]  ^{\left(  \alpha-1\right)
/2\alpha}\right)  \sigma^{\left(  1-\alpha\right)  /2\alpha}\rho
^{1/2}\right\Vert _{\alpha}.\label{eq:Delta-SW-a}
\end{align}
\end{definition}

Both of these quantities converge to (\ref{eq:rel-entropy-diff}) in the limit
as $\alpha\rightarrow1$, which follows from an argument similar to the proof of \cite[Theorems 9 and 20]{BSW14}
(alternatively, one could employ Lemma~\ref{LTBoost},
\eqref{rengenreldif1}, and the fact that the R\'enyi and sandwiched R\'enyi relative entropies converge to $D(\rho\Vert \sigma)$ as $\alpha \to 1$ and 
are continuous with respect to the second argument
when it is positive definite).
We now show that the above R\'enyi generalizations of a difference of
two relative entropies are consistent with the R\'enyi conditional mutual
information proposed earlier in~\cite[Sections 5 and 6]{BSW14}.
Recall that the conditional mutual information can be written as
\begin{align}
I\left(  A;B|C\right)   &  =I\left(  B;AC\right)  -I(B;C)\\
&  =D\left(  \rho_{ABC}\Vert\rho_{B}\otimes\rho_{AC}\right)  -D\left(
\rho_{BC}\Vert\rho_{B}\otimes\rho_{C}\right),
\end{align}
and choosing in \eqref{eq:Delta-a} and \eqref{eq:Delta-SW-a},
\begin{align}
\rho =\rho_{ABC},\quad\sigma =\rho_{B}\otimes\rho_{AC},\quad\mathcal{N}\left(  \cdot\right)  =\operatorname{Tr}_{A}\left\{
\cdot\right\},\label{consisprope}
\end{align}
so that $\mathcal{N}^{\dag}\left(  \cdot\right)  =\left(  \cdot\right)  \otimes
I_{A}$,
leads to
\begin{align}
\Delta_{\alpha}\left(  \rho_{ABC},\rho_{B}\otimes\rho_{AC},\text{Tr}
_{A}\left\{  \cdot\right\}  \right)=\frac{1}{\alpha-1}\log\text{Tr}\left\{  \rho_{ABC}^{\alpha}\rho_{B}^{0}\rho_{AC}^{\left(  1-\alpha\right)  /2}\rho_{C}^{\left(  \alpha-1\right)
/2}\rho_{BC}^{1-\alpha}\rho_{C}^{\left(  \alpha-1\right)  /2}\rho
_{AC}^{\left(  1-\alpha\right)  /2}\rho_{B}^{0}\right\}  , \label{consis1}
\end{align}
as well as
\begin{align}
&  \widetilde{\Delta}_{\alpha}\left(  \rho_{ABC},\rho_{B}\otimes\rho
_{AC},\text{Tr}_{A}\left\{  \cdot\right\}  \right) \nonumber\\
&  =\frac{\alpha}{\alpha-1}\log\left\Vert \rho_{ABC}^{1/2}\rho_{B}^{0}
\rho_{AC}^{\left(  1-\alpha\right)  /2\alpha}\rho_{C}^{\left(  \alpha
-1\right)  /2\alpha}\rho_{BC}^{\left(  1-\alpha\right)  /\alpha}\rho
_{C}^{\left(  \alpha-1\right)  /2\alpha}\rho_{AC}^{\left(  1-\alpha\right)
/2\alpha}\rho_{B}^{0}\rho_{ABC}^{1/2}\right\Vert _{\alpha}. \label{consis2}
\end{align}
We conclude that the expressions in (\ref{consis1}) and (\ref{consis2}) are equal to the
R\'enyi generalizations of the conditional quantum mutual information proposed
in \cite{BSW14}.

%%%%%%%%%%%%%%%%%%%%%%%%%%%%%%%%%%%%%%%%%%%%%%%%%%%%%%%%%%%%%%%%%%%%%%%%%%%%%%%%%

\subsection{Conjectured remainder term for monotonicity of relative entropy}

\cite[Conjecture 34]{BSW14} leads us to one more conjecture, that the above generalizations $\Delta
_{\alpha}(\rho,\sigma,\mathcal{N})$ and $\widetilde{\Delta}_{\alpha}(\rho,\sigma,\mathcal{N})$ are monotone in the R\'{e}nyi parameter as well:

\begin{conjecture}\label{conj:mono-rel-diff}
Let $\rho,\sigma\in\mathcal{S}\left(  \mathcal{H}\right)  _{++}$ and $\mathcal{N}$ be a strict CPTP\ map. Then the following
inequalities hold for $0\leq\alpha\leq\beta$:
\begin{align}
\Delta_{\alpha}(\rho,\sigma,\mathcal{N}) &  \leq\Delta_{\beta}(\rho
,\sigma,\mathcal{N})\label{monreldif1},\\
\widetilde{\Delta}_{\alpha}(\rho,\sigma,\mathcal{N}) &  \leq\widetilde{\Delta
}_{\beta}(\rho,\sigma,\mathcal{N})\label{monreldif2}.
\end{align}
\end{conjecture}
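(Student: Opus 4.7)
The plan is to attack Conjecture~\ref{conj:mono-rel-diff} by mirroring the strategies used in \cite[Section~8]{BSW14} for the analogous conjecture on the monotonicity of the R\'enyi conditional quantum mutual information in $\alpha$. Because the choices in (\ref{consisprope}) reduce $\Delta_\alpha$ and $\widetilde{\Delta}_\alpha$ to the R\'enyi and sandwiched R\'enyi conditional mutual informations (equations (\ref{consis1})--(\ref{consis2})), the present conjecture strictly contains \cite[Conjecture~34]{BSW14}. Thus any proof must be at least as strong as what is known in the CMI case, and the most sensible opening move is to extend the partial regimes from that setting to the present more general one.

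First, I would establish the inequalities in a neighborhood of $\alpha = 1$. Both $\Delta_\alpha$ and $\widetilde{\Delta}_\alpha$ are smooth in $\alpha$ and coincide at $\alpha = 1$ with the relative-entropy difference $D(\rho\Vert\sigma) - D(\mathcal{N}(\rho)\Vert\mathcal{N}(\sigma))$. Using the Taylor expansion $X^p = I + p\log X + o(p)$ as in the proof of Lemma~\ref{LTBoost}, I would differentiate the nested operator expressions in (\ref{eq:Delta-a}) and (\ref{eq:Delta-SW-a}) at $\alpha = 1$. The derivative should reduce to a variance-type quantity $\tfrac{1}{2}\operatorname{Var}_\rho(L)$ with $L \equiv \log\rho - \log\sigma - \mathcal{N}^{\dag}(\log\mathcal{N}(\rho) - \log\mathcal{N}(\sigma))$, whose non-negativity yields the conjecture locally around $\alpha = 1$ by mean-value considerations.

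Second, for the symmetric pair $(\alpha,\beta)$ satisfying $1/\alpha + 1/\beta = 2$, I would pass to a Stinespring isometry $V$ with $\mathcal{N}(\cdot) = \operatorname{Tr}_E\{V(\cdot)V^{\dag}\}$ so that $\mathcal{N}^{\dag}(\cdot) = V^{\dag}(\cdot \otimes I_E)V$, and rewrite $\widetilde{\Delta}_\alpha$ as a $2\alpha$-Schatten norm of an explicit operator built from powers of $\rho$, $\sigma$, and $V$. The involution $\alpha \mapsto \alpha/(2\alpha-1)$ then intertwines the expressions for $\widetilde{\Delta}_\alpha$ and $\widetilde{\Delta}_\beta$ via H\"older duality, mirroring the CMI argument; an analogous (slightly more delicate) manipulation based on the Araki-Lieb-Thirring inequality should handle $\Delta_\alpha$.

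The hard part is the general case of arbitrary $0 \leq \alpha \leq \beta$. The natural route is Stein-type complex interpolation on an operator-valued analytic family $z \mapsto F(z)$ capturing the nested expression, combined with a three-lines lemma for Schatten norms. The obstruction is that the adjoint channel $\mathcal{N}^{\dag}$ is applied to an $\alpha$-dependent operator sandwiched between further $\alpha$-dependent powers of $\sigma$, which breaks the usual Riesz-Thorin template. Overcoming this seems to require a refinement of the Araki-Lieb-Thirring inequality (or of Audenaert-type matrix geometric-mean inequalities) that accommodates an interspersed completely positive map, and since even the CMI specialization is open, I expect this step to be where a genuinely new operator-theoretic idea is needed; any such idea would simultaneously resolve \cite[Conjecture~34]{BSW14}.
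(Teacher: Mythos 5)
You have correctly diagnosed the status of this statement: it is a conjecture, the paper does not prove it in general, and your overall architecture---partial regimes plus an honestly acknowledged open core---matches the paper's own treatment, which consists of a proof near $\alpha=1$, special symmetric-exponent cases, and numerical evidence (1000 Haar-random trials in dimensions up to five, which your plan omits). Your first step essentially reproduces the paper's argument in Appendix~\ref{renreldifApp}: substitute $\gamma=\alpha-1$, Taylor-expand the trace functional, and show the second-order coefficient is non-negative. One formula you state is incorrect, however. The derivative at $\alpha=1$ is not $\tfrac{1}{2}\operatorname{Tr}\{\rho(L-\Delta)^{2}\}$ with your $L$ alone; it is $\tfrac{1}{2}V(\rho,\sigma,\mathcal{N})$, where $V$ contains, in addition to that variance term, the Kadison--Schwarz contribution $\operatorname{Tr}\{\rho[\mathcal{N}^{\dag}(A^{\dag}A)-\mathcal{N}^{\dag}(A^{\dag})\mathcal{N}^{\dag}(A)]\}\geq 0$ with $A=\log\mathcal{N}(\rho)-\log\mathcal{N}(\sigma)$, whose non-negativity requires the operator Kadison--Schwarz inequality for the unital $2$-positive map $\mathcal{N}^{\dag}$. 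Since both summands are non-negative your conclusion survives, but an actual computation would have produced the extra term. You also gloss over the caveat, flagged explicitly in the paper, that the neighborhood argument breaks down when $V=0$, since the $O(\gamma^{3})$ terms are then no longer dominated.

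The genuine gap is your second step. The paper establishes the symmetric-exponent cases only for \emph{unitary} channels $\mathcal{N}$ (and, for $\Delta_{\alpha}$, for the pair $\alpha+\beta=2$ rather than $1/\alpha+1/\beta=2$): unitarity makes $\mathcal{N}^{\dag}$ multiplicative, so the bracketed operator raised to the power $1/(1-\alpha)$ (resp.\ $\alpha/(1-\alpha)$ in the sandwiched case) becomes invariant under the exponent involution, and \eqref{monreldif1}--\eqref{monreldif2} then reduce to the ordinary monotonicity in $\alpha$ of $D_{\alpha}$ and $\widetilde{D}_{\alpha}$. You instead claim the case $1/\alpha+1/\beta=2$ for an \emph{arbitrary} strict CPTP map via Stinespring dilation and H\"older duality, ``mirroring the CMI argument.'' That does not carry over: the duality arguments of \cite{BSW14} exploit purification structure specific to partial-trace channels, and for a general $\mathcal{N}$ the adjoint map interleaved between $\alpha$-dependent powers of $\sigma$---precisely the obstruction you yourself name when discussing the interpolation route for general $(\alpha,\beta)$---already defeats the duality in the symmetric case. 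Your proposal is thus internally inconsistent on this point and asserts strictly more than is known; had the symmetric case held for all channels, it would in particular yield the $\alpha=1/2$ versus $\alpha=1$ comparison underlying Consequence~\ref{monorprop1} unconditionally, which the paper is careful to present only as a consequence of the conjecture.
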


When $\rho$, $\sigma\in \mathcal{S}(\mathcal{H})_{++}$, and $\mathcal{N}$ is a strict CPTP map, it is possible prove Conjecture \ref{conj:mono-rel-diff} in some special cases. For example, when $\alpha$ is in a neighborhood of one, the derivatives of $\Delta_{\alpha}(\rho,\sigma,\mathcal{N}) $ and $\widetilde{\Delta}_{\alpha}(\rho,\sigma,\mathcal{N})$ with respect to $\alpha$ are non-negative, as is seen using Taylor expansions of these functions (see Appendix \ref{renreldifApp} for details). This implies that monotonicity holds when $\alpha$ is in a neighborhood of one. Further, when $\alpha$ and $\beta$ satisfy $0\leq\alpha\leq\beta$, $\alpha+\beta=2$, and $\mathcal{N}$ is a unitary quantum channel, the inequality in (\ref{monreldif1}) for $\Delta_{\alpha}(\rho,\sigma,\mathcal{N}) $ is true. This is because, in such a
case, we have that $1-\alpha=-\left(  1-\beta\right)  $, so that
\begin{multline}
\left[\sigma^{\left(  1-\alpha\right)  /2}
\mathcal{N}^{\dag}\left(  \left[  \mathcal{N}\left(  \sigma\right)  \right]
^{\left(  \alpha-1\right)  /2}\left[  \mathcal{N}\left(  \rho\right)  \right]
^{1-\alpha}\left[  \mathcal{N}\left(  \sigma\right)  \right]  ^{\left(
\alpha-1\right)  /2}\right)  \sigma^{\left(  1-\alpha\right)  /2}\right]^{1/(1-\alpha)}\\
=\left[\sigma^{\left(  1-\beta\right)  /2}
\mathcal{N}^{\dag}\left(  \left[  \mathcal{N}\left(  \sigma\right)  \right]
^{\left(  \beta-1\right)  /2}\left[  \mathcal{N}\left(  \rho\right)  \right]
^{1-\beta}\left[  \mathcal{N}\left(  \sigma\right)  \right]  ^{\left(
\beta-1\right)  /2}\right)  \sigma^{\left(  1-\beta\right)  /2}\right]^{1/(1-\beta)}.
\end{multline}
Then the monotonicity follows from the ordinary monotonicity of the R\'{e}nyi
relative entropy, namely the statement that $D_{\alpha}\left(  \rho\Vert\sigma\right) \leq D_{\beta}\left(  \rho
\Vert\sigma\right)$ for $0\leq\alpha\leq\beta$~\cite{TCR09}. By a similar line of reasoning, when $1/\alpha+1/\beta=2$, $0\leq\alpha\leq\beta$, and $\mathcal{N}$ is a unitary quantum channel, the inequality in (\ref{monreldif2}) for $\widetilde{\Delta}_{\alpha}(\rho,\sigma,\mathcal{N})$ holds. This is because in such a case, we have that
$\alpha/\left(  1-\alpha\right)  =-\beta/\left(  1-\beta\right)  $, so that
\begin{multline}
\left[\sigma^{\left(
1-\alpha\right)  /2\alpha}\mathcal{N}^{\dag}\left(  \left[  \mathcal{N}\left(
\sigma\right)  \right]  ^{\left(  \alpha-1\right)  /2\alpha}\left[
\mathcal{N}\left(  \rho\right)  \right]  ^{\left(  1-\alpha\right)  /\alpha
}\left[  \mathcal{N}\left(  \sigma\right)  \right]  ^{\left(  \alpha-1\right)
/2\alpha}\right)  \sigma^{\left(  1-\alpha\right)  /2\alpha}\right]  ^{\alpha/\left(  1-\alpha\right)  }\\
=\left[\sigma^{\left(
1-\beta\right)  /2\beta}\mathcal{N}^{\dag}\left(  \left[  \mathcal{N}\left(
\sigma\right)  \right]  ^{\left(  \beta-1\right)  /2\beta}\left[
\mathcal{N}\left(  \rho\right)  \right]  ^{\left(  1-\beta\right)  /\beta
}\left[  \mathcal{N}\left(  \sigma\right)  \right]  ^{\left( -1\right)
/2\beta}\right)  \sigma^{\left(  1-\beta\right)  /2\beta}\right]  ^{\beta/\left(  1-\beta\right)  }.
\end{multline}
Since this is the case, the monotonicity follows directly from the ordinary monotonicity
of the sandwiched R\'{e}nyi relative entropy, namely, $\widetilde{D}_{\alpha}\left(  \rho\Vert\sigma\right)   \leq\widetilde
{D}_{\beta}\left(  \rho\Vert\sigma\right)$ for $0\leq\alpha\leq\beta$~\cite{MDSFT13}.

We also tested the inequalities in (\ref{monreldif1}) and (\ref{monreldif2}) numerically for states $\rho$ and $\sigma$ of dimensions up to five, and channels $\mathcal{N}$ of input and output dimensions (not necessarily the same) up to five, chosen at random according to the Haar measure~\cite{CubittMatlab}. For $\alpha$ and $\beta$ chosen in the range $(0,1) \cup (1, 10]$ such that $\alpha\leq\beta$, we found that the inequalities in (\ref{monreldif1}) and (\ref{monreldif2}) were never falsified in 1000 numerical runs for each setting of the parameters $\alpha$ and $\beta$.

A consequence of the above conjecture is the following inequality (a similar conjecture can also be found in~\cite[Eq.~(4.7)]{Z14b}). This bound, if proven true, would give an operationally useful answer to a question posed in~\cite{Winterconj}.

\begin{consequence}\label{monorprop1}
Let $\rho,\sigma\in\mathcal{S}\left(  \mathcal{H}\right)
_{++}$, and $\mathcal{N}$ be a strict CPTP map. Let $\mathcal{T}$ denote the
Petz recovery channel, which is indeed a CPTP\ map for $\rho$, $\sigma$ and
$\mathcal{N}$ as given:
\begin{equation}
\mathcal{T}\left(  \omega\right)  \equiv\sigma^{1/2}\mathcal{N}^{\dag}\left(
\left[  \mathcal{N}\left(  \sigma\right)  \right]  ^{-1/2}\omega\left[
\mathcal{N}\left(  \sigma\right)  \right]  ^{-1/2}\right)  \sigma^{1/2}.
\end{equation}
Assuming Conjecture~\ref{conj:mono-rel-diff}, the relative entropy difference
of~\eqref{eq:rel-entropy-diff} obeys the following lower bounds:
\begin{align}\label{eq:zhang-conj}
D\left(  \rho\Vert\sigma\right)  -D\left(  \mathcal{N}\left(  \rho\right)
\Vert\mathcal{N}\left(  \sigma\right)  \right)\geq-\log F\left(
\rho,\mathcal{T}\left(  \mathcal{N}(\rho)\right)  \right).
%\geq\frac{1}{4}\left\Vert\rho-\mathcal{T}\left(  \mathcal{N}(\rho)\right)  \right\Vert _{1}^{2}.
\end{align}
\end{consequence}

\begin{proof}
Conjecture~\ref{conj:mono-rel-diff} directly implies the lower bound:
\begin{align}
\widetilde{\Delta}_{1}\left(  \rho,\sigma,\mathcal{N}\right)   &
\geq\widetilde{\Delta}_{1/2}\left(  \rho,\sigma,\mathcal{N}\right)  \\
\Leftrightarrow D\left(  \rho\Vert\sigma\right)  -D\left(  \mathcal{N}\left(
\rho\right)  \Vert\mathcal{N}\left(  \sigma\right)  \right)   &  \geq
-\log\text{Tr}\left\{  \sqrt{\rho^{1/2}\mathcal{T}\left(  \mathcal{N}\left(
\rho\right)  \right)  \rho^{1/2}}\right\}  \label{forlinzconj}\\
&  =-\log F\left(  \rho,\mathcal{T}\left(  \mathcal{N}\left(  \rho\right)
\right)  \right)  .
\end{align}
%Next, consider that
%\begin{align}
%-\log F\left(  \rho,\mathcal{T}\left(  \mathcal{N}\left(  \rho\right)
%\right)  \right)   &  \geq-\log\left[  1-\left(  \frac{1}{2}\left\Vert
%\rho_{ABC}-\mathcal{T}\left(  \mathcal{N}\left(  \rho\right)  \right)
%\right\Vert _{1}\right)  ^{2}\right]  \\
%&  \geq\frac{1}{4}\left\Vert \rho_{ABC}-\mathcal{T}\left(  \mathcal{N}\left(
%\rho\right)  \right)  \right\Vert _{1}^{2},
%\end{align}
%In the above, the first inequality is a result of well known relations between
%trace distance and fidelity \cite{FG98}, and the second is a consequence of
%the inequality $-\log\left(  1-x\right)  \geq x$, which holds if $x\leq1$.
\end{proof}

%%%%%%%%%%%%%%%%%%%%%%%%%%%%%%%%%%%%%%%%%%%%%%%%%%%%%%%%%%%%%%%%%%%%%%%%%%%%%%%%%

\subsection{Conjectured remainder term for joint convexity of relative
entropy}

We now use (\ref{eq:zhang-conj}) to suggest a remainder term for joint
convexity of relative entropy. We once again emphasize that this remainder
term is a consequence of Conjecture~\ref{conj:mono-rel-diff}. Interestingly,
this inequality is equivalent to the inequality in (\ref{eq:zhang-conj}).

\begin{consequence}\label{rtjtconv}
Let $p_{X}\left(  x\right)  $ be a probability distribution characterizing the ensembles $\left\{  p_{X}\left(  x\right),\rho_{x}\right\}  $ and $\left\{  p_{X}\left(  x\right)  ,\sigma_{x}\right\}$ with $\rho_{x},\sigma_{x}\in\mathcal{S}\left(  \mathcal{H}\right)  _{++}$. From Consequence~\ref{monorprop1} we have the following remainder term for joint convexity of relative entropy:
\begin{equation}
\sum_{x}p_{X}\left(  x\right)  D\left(  \rho_{x}\Vert\sigma_{x}\right)
-D\left(  \overline{\rho}\Vert\overline{\sigma}\right)  \geq
-2\log\left(
\sum_{x}p_{X}\left(  x\right)\sqrt{F\left(\rho_{x},\sigma_{x}^{1/2}\left(
\overline{\sigma}\right)  ^{-1/2}\overline{\rho}\left(  \overline{\sigma
}\right)  ^{-1/2}\sigma_{x}^{1/2}\right)}\right)
,\label{jtconvrt}
\end{equation}
where $\overline{\rho}=\sum_{x}p_{X}\left(  x\right)  \rho_{x}$ and $
\overline{\sigma}=\sum_{x}p_{X}\left(  x\right)  \sigma_{x}.$ Furthermore, the inequality~\eqref{jtconvrt}
implies and is implied by the inequality~\eqref{eq:zhang-conj} (Consequence~\ref{monorprop1}),
so that these inequalities are equivalent.
\end{consequence}

\begin{proof}
Let
\begin{align}
\rho_{XB} &  \equiv\sum_{x}p_{X}\left(  x\right)  \left\vert x\right\rangle
\left\langle x\right\vert \otimes\rho_{x},\\
\sigma_{XB} &  \equiv\sum_{x}p_{X}\left(  x\right)  \left\vert x\right\rangle
\left\langle x\right\vert \otimes\sigma_{x},\\
\overline{\rho} &  \equiv\rho_{B}=\operatorname{Tr}_{X}\left\{  \rho
_{XB}\right\}  =\sum_{x}p_{X}\left(  x\right)  \rho_{x},\\
\overline{\sigma} &  \equiv\sigma_{B}=\operatorname{Tr}_{X}\left\{
\sigma_{XB}\right\}  =\sum_{x}p_{X}\left(  x\right)  \sigma_{x}.
\end{align}
Consider that (\ref{eq:zhang-conj}) with the map $\mathcal{N}$ being a
trace-out map over system $X$ implies that
\begin{align}
\sum_{x}p\left(  x\right)  D\left(  \rho_{x}\Vert\sigma_{x}\right)  -D\left(
\overline{\rho}\Vert\overline{\sigma}\right)   &  =D\left(  \rho_{XB}
\Vert\sigma_{XB}\right)  -D\left(  \rho_{B}\Vert\sigma_{B}\right)  \\
&\geq-\log F\left(\rho_{XB},\sigma_{XB}^{1/2}\sigma_{B}^{-1/2}
\rho_{B}\sigma_{B}^{-1/2}\sigma_{XB}^{1/2}\right)\\
%&  =\frac{1}{4}\left\Vert \sum_{x}p\left(  x\right)  \left\vert x\right\rangle
%\left\langle x\right\vert \otimes\left(  \rho_{x}-\sigma_{x}^{1/2}\left(
%\overline{\sigma}\right)  ^{-1/2}\overline{\rho}\left(  \overline{\sigma
%}\right)  ^{-1/2}\sigma_{x}^{1/2}\right)  \right\Vert _{1}^{2}\\
&  =-2\log\left(\sum_{x}p\left(  x\right)  \sqrt{F\left(\rho_{x},\sigma_{x}^{1/2}\left(  \overline{\sigma}\right)  ^{-1/2}\overline{\rho}\left(  \overline{\sigma}\right)  ^{-1/2}\sigma_{x}^{1/2}\right)}\right).
\end{align}
This establishes the first claim.
%The implication \eqref{eq:zhang-conj} $\Rightarrow$ \eqref{jtconvrt} already was used to prove Consequence~\ref{rtjtconv}. So we now consider the other implication.
Now suppose that
\begin{equation}
\sum_{x}p\left(  x\right)  D\left(  \rho_{x}\Vert\sigma_{x}\right)  -D\left(
\overline{\rho}\Vert\overline{\sigma}\right)\geq-2\log\left(\sum_{x}p\left(  x\right)  \sqrt{F\left(\rho_{x},\sigma_{x}^{1/2}\left(  \overline{\sigma}\right)  ^{-1/2}\overline{\rho}\left(  \overline{\sigma}\right)  ^{-1/2}\sigma_{x}^{1/2}\right)}\right)
\end{equation}
is true. We pick
\begin{align}
\rho_{x} &  =V_{E}^{i}U_{AE}\left(  \omega_{A}\otimes\left\vert 0\right\rangle
\left\langle 0\right\vert _{E}\right)  U_{AE}^{\dag}\left(  V_{E}^{i}\right)
^{\dag},\\
\sigma_{x} &  =V_{E}^{i}U_{AE}\left(  \tau_{A}\otimes\left\vert 0\right\rangle
\left\langle 0\right\vert _{E}\right)  U_{AE}^{\dag}\left(  V_{E}^{i}\right)
^{\dag},\\
p\left(  x\right)   &  =\frac{1}{d_{E}^{2}},
\end{align}
with $\left\{  V_{E}^{i}\right\}  _{i=1}^{d_{E}^{2}}$ a Heisenberg-Weyl set of
unitaries. These choices imply that
\begin{align}
\overline{\rho} =\mathcal{N}\left(  \omega_{A}\right)  \otimes\pi_{E},\quad\overline{\sigma} =\mathcal{N}\left(  \tau_{A}\right)  \otimes\pi_{E},
\end{align}
where $\pi_{E}$ is the maximally mixed state on system $E$, and $\mathcal{N}$
is defined by
\begin{align}
\mathcal{N}\left(  \cdot\right)=\text{Tr}_{E}\left\{  U_{AE}\left[
\left(  \cdot\right)  \otimes\left\vert 0\right\rangle \left\langle
0\right\vert _{E}\right]  U_{AE}^{\dag}\right\}=\sum_{j}\left\langle j\right\vert _{E}U_{AE}\left\vert 0\right\rangle
_{E}\left(  \cdot\right)  \left\langle 0\right\vert _{E}U_{AE}^{\dag
}\left\vert j\right\rangle _{E},
\end{align}
so that the Kraus operators of $\mathcal{N}$ are $\left\{  \left\langle
j\right\vert _{E}U_{AE}\left\vert 0\right\rangle _{E}\right\}  _{j}$. Then the
adjoint map $\mathcal{N}^{\dag}$ is given by
\begin{equation}
\mathcal{N}^{\dag}\left(  \cdot\right)  =\sum_{j}\left\langle 0\right\vert
_{E}U_{AE}^{\dag}\left\vert j\right\rangle _{E}\left(  \cdot\right)
\left\langle j\right\vert _{E}U_{AE}\left\vert 0\right\rangle _{E}.
\end{equation}
Furthermore, we have that
\begin{align}
\sum_{x}p\left(  x\right)  D\left(  \rho_{x}\Vert\sigma_{x}\right)   &
=D\left(  \omega_{A}\Vert\tau_{A}\right)  ,\\
D\left(  \overline{\rho}\Vert\overline{\sigma}\right)   &  =D\left(
\mathcal{N}\left(  \omega_{A}\right)  \Vert\mathcal{N}\left(  \tau_{A}\right)
\right)  .
\end{align}
Consider that
\begin{align}
&  \sigma_{x}^{1/2}\left(  \overline{\sigma}\right)  ^{-1/2}\overline{\rho
}\left(  \overline{\sigma}\right)  ^{-1/2}\sigma_{x}^{1/2}\nonumber\\
&  =\left[  V_{E}^{i}U_{AE}\left(  \tau_{A}\otimes\left\vert 0\right\rangle
\left\langle 0\right\vert _{E}\right)  U_{AE}^{\dag}\left(  V_{E}^{i}\right)
^{\dag}\right]  ^{1/2}\left[  \mathcal{N}\left(  \tau_{A}\right)  \otimes
\pi_{E}\right]  ^{-1/2}\left[  \mathcal{N}\left(  \omega_{A}\right)
\otimes\pi_{E}\right]  \times\nonumber\\
&  \ \ \ \ \ \ \ \ \left[  \mathcal{N}\left(  \tau_{A}\right)  \otimes\pi
_{E}\right]  ^{-1/2}\left[  V_{E}^{i}U_{AE}\left(  \tau_{A}\otimes\left\vert
0\right\rangle \left\langle 0\right\vert _{E}\right)  U_{AE}^{\dag}\left(
V_{E}^{i}\right)  ^{\dag}\right]  ^{1/2}\\
&  =V_{E}^{i}U_{AE}\left(  \tau_{A}^{1/2}\otimes\left\vert 0\right\rangle
\left\langle 0\right\vert _{E}\right)  U_{AE}^{\dag}\left(  V_{E}^{i}\right)
^{\dag}\left(  \left[  \mathcal{N}\left(  \tau_{A}\right)  \right]
^{-1/2}\mathcal{N}\left(  \omega_{A}\right)  \left[  \mathcal{N}\left(
\tau_{A}\right)  \right]  ^{-1/2}\otimes I_{E}\right)  \times\nonumber\\
&  \ \ \ \ \ \ \ \ V_{E}^{i}U_{AE}\left(  \tau_{A}^{1/2}\otimes\left\vert
0\right\rangle \left\langle 0\right\vert _{E}\right)  U_{AE}^{\dag}\left(
V_{E}^{i}\right)  ^{\dag}\\
&  =V_{E}^{i}U_{AE}\left(  \tau_{A}^{1/2}\otimes\left\vert 0\right\rangle
\left\langle 0\right\vert _{E}\right)  U_{AE}^{\dag}\left(  \left[
\mathcal{N}\left(  \tau_{A}\right)  \right]  ^{-1/2}\mathcal{N}\left(
\omega_{A}\right)  \left[  \mathcal{N}\left(  \tau_{A}\right)  \right]
^{-1/2}\otimes I_{E}\right)  \times\nonumber\\
&  \ \ \ \ \ \ \ \ U_{AE}\left(  \tau_{A}^{1/2}\otimes\left\vert
0\right\rangle \left\langle 0\right\vert _{E}\right)  U_{AE}^{\dag}\left(
V_{E}^{i}\right)  ^{\dag}
\end{align}
\begin{align}
&  =V_{E}^{i}U_{AE}\left(  \tau_{A}^{1/2}\otimes\left\vert 0\right\rangle
\left\langle 0\right\vert _{E}\right)  U_{AE}^{\dag}\left(  \left[
\mathcal{N}\left(  \tau_{A}\right)  \right]  ^{-1/2}\mathcal{N}\left(
\omega_{A}\right)  \left[  \mathcal{N}\left(  \tau_{A}\right)  \right]
^{-1/2}\otimes\sum_{j}\left\vert j\right\rangle \left\langle j\right\vert
_{E}\right)  \times\nonumber\\
&  \ \ \ \ \ \ \ \ U_{AE}\left(  \tau_{A}^{1/2}\otimes\left\vert
0\right\rangle \left\langle 0\right\vert _{E}\right)  U_{AE}^{\dag}\left(
V_{E}^{i}\right)  ^{\dag}\\
&  =V_{E}^{i}U_{AE}\left(  \tau_{A}^{1/2}\otimes\left\vert 0\right\rangle
_{E}\right)  \sum_{j}\left\langle 0\right\vert _{E}U_{AE}^{\dag}\left\vert
j\right\rangle _{E}\left(  \left[  \mathcal{N}\left(  \tau_{A}\right)
\right]  ^{-1/2}\mathcal{N}\left(  \omega_{A}\right)  \left[  \mathcal{N}
\left(  \tau_{A}\right)  \right]  ^{-1/2}\right)  \left\langle j\right\vert
_{E}U_{AE}\left\vert 0\right\rangle _{E}\times\nonumber\\
&  \ \ \ \ \ \ \ \ \left(  \tau_{A}^{1/2}\otimes\left\langle 0\right\vert
_{E}\right)  U_{AE}^{\dag}\left(  V_{E}^{i}\right)  ^{\dag}\\
&  =V_{E}^{i}U_{AE}\left(  \tau_{A}^{1/2}\otimes\left\vert 0\right\rangle
_{E}\right)  \mathcal{N}^{\dag}\left(  \left[  \mathcal{N}\left(  \tau
_{A}\right)  \right]  ^{-1/2}\mathcal{N}\left(  \omega_{A}\right)  \left[
\mathcal{N}\left(  \tau_{A}\right)  \right]  ^{-1/2}\right)  \left(  \tau
_{A}^{1/2}\otimes\left\langle 0\right\vert _{E}\right)  U_{AE}^{\dag}\left(
V_{E}^{i}\right)  ^{\dag}\\
&  =V_{E}^{i}U_{AE}\left(  \left[  \tau_{A}^{1/2}\mathcal{N}^{\dag}\left(
\left[  \mathcal{N}\left(  \tau_{A}\right)  \right]  ^{-1/2}\mathcal{N}\left(
\omega_{A}\right)  \left[  \mathcal{N}\left(  \tau_{A}\right)  \right]
^{-1/2}\right)  \tau_{A}^{1/2}\right]  \otimes\left\vert 0\right\rangle
\left\langle 0\right\vert _{E}\right)  U_{AE}^{\dag}\left(  V_{E}^{i}\right)
^{\dag}
\end{align}
So two to the power of minus one half times the remainder term of joint convexity evaluates to
\begin{align}
& \sum_{x}p\left(  x\right)  \sqrt{F\left(\rho_{x},\sigma_{x}^{1/2}\left(
\overline{\sigma}\right)  ^{-1/2}\overline{\rho}\left(  \overline{\sigma
}\right)  ^{-1/2}\sigma_{x}^{1/2}\right)}\nonumber\\
&=\sum_{i}\frac{1}{d_{E}^{2}}F^{1/2}\Bigg(
V_{E}^{i}U_{AE}\left(  \omega_{A}\otimes\left\vert 0\right\rangle \left\langle
0\right\vert _{E}\right)  U_{AE}^{\dag}\left(  V_{E}^{i}\right)  ^{\dag},\nonumber\\
&V_{E}^{i}U_{AE}\left(  \left[  \tau_{A}^{1/2}\mathcal{N}^{\dag}\left(  \left[
\mathcal{N}\left(  \tau_{A}\right)  \right]  ^{-1/2}\mathcal{N}\left(
\omega_{A}\right)  \left[  \mathcal{N}\left(  \tau_{A}\right)  \right]
^{-1/2}\right)  \tau_{A}^{1/2}\right]  \otimes\left\vert 0\right\rangle
\left\langle 0\right\vert _{E}\right)  U_{AE}^{\dag}\left(  V_{E}^{i}\right)
^{\dag}
\Bigg)\\
%F^{1/2}\left(
%\begin{array}
%[c]{c}
%V_{E}^{i}U_{AE}\left(  \omega_{A}\otimes\left\vert 0\right\rangle \left\langle
%0\right\vert _{E}\right)  U_{AE}^{\dag}\left(  V_{E}^{i}\right)  ^{\dag},\\
%V_{E}^{i}U_{AE}\left(  \left[  \tau_{A}^{1/2}\mathcal{N}^{\dag}\left(  \left[
%\mathcal{N}\left(  \tau_{A}\right)  \right]  ^{-1/2}\mathcal{N}\left(
%\omega_{A}\right)  \left[  \mathcal{N}\left(  \tau_{A}\right)  \right]
%^{-1/2}\right)  \tau_{A}^{1/2}\right]  \otimes\left\vert 0\right\rangle
%\left\langle 0\right\vert _{E}\right)  U_{AE}^{\dag}\left(  V_{E}^{i}\right)
%^{\dag}
%\end{array}
%\right)
&  =\sum_{i}\frac{1}{d_{E}^{2}}\sqrt{F\left(\omega_{A}\otimes\left\vert
0\right\rangle \left\langle 0\right\vert _{E},\left[  \tau_{A}^{1/2}
\mathcal{N}^{\dag}\left(  \left[  \mathcal{N}\left(  \tau_{A}\right)  \right]
^{-1/2}\mathcal{N}\left(  \omega_{A}\right)  \left[  \mathcal{N}\left(
\tau_{A}\right)  \right]  ^{-1/2}\right)  \tau_{A}^{1/2}\right]
\otimes\left\vert 0\right\rangle \left\langle 0\right\vert _{E}\right)}\\
&  =\sqrt{F\left(\omega_{A},\tau_{A}^{1/2}\mathcal{N}^{\dag}\left(  \left[
\mathcal{N}\left(  \tau_{A}\right)  \right]  ^{-1/2}\mathcal{N}\left(
\omega_{A}\right)  \left[  \mathcal{N}\left(  \tau_{A}\right)  \right]
^{-1/2}\right)  \tau_{A}^{1/2}\right)},
\end{align}
which gives~\eqref{eq:zhang-conj}.
\end{proof}

%\begin{proposition}
%The remainder term in \eqref{eq:zhang-conj} of Consequence~\ref{monorprop1}
%implies the remainder term in \eqref{jtconvrt} of Consequence~\ref{rtjtconv}, and vice versa.
%\end{proposition}

%%%%%%%%%%%%%%%%%%%%%%%%%%%%%%%%%%%%%%%%%%%%%%%%%%%%%%%%%%%%%%%%%%%%%%%%%%%%%%%%%

\subsection{A conjectured remainder term for the Holevo bound}

The Holevo bound~\cite{Ho73} is the statement that the following inequality holds
\begin{equation}
I\left(  X;B\right)  _{\rho}\geq I\left(  X;Y\right)  _{\omega}
,\label{eq:holevo-bound}
\end{equation}
where
\begin{align}
\rho_{XB}\equiv\sum_{x}p_{X}\left(  x\right)  \left\vert x\right\rangle
\left\langle x\right\vert _{X}\otimes\rho_{B}^{x},\quad\omega_{XY}  & =\mathcal{M}_{B\rightarrow Y}\left(  \rho_{XB}\right),
\end{align}
and $\mathcal{M}_{B\rightarrow Y}$ is a measurement channel defined as
\begin{align}
\mathcal{M}_{B\rightarrow Y}\left(  \sigma_{B}\right)\equiv\sum
_{y}\left\langle \varphi^{y}\right\vert \sigma_{B}\left\vert \varphi
^{y}\right\rangle \ \left\vert y\right\rangle \left\langle y\right\vert
_{Y},\quad\sum_{y}\left\vert \varphi^{y}\right\rangle \left\langle \varphi
^{y}\right\vert _{B}=I_{B}.
\end{align}
The Kraus operators for the measurement channel $\mathcal{M}_{B\rightarrow Y}$ are
$\left\{  \left\vert y\right\rangle \left\langle \varphi^{y}\right\vert
\right\}  $. (One might as well restrict to rank-one measurements since these
already give a tighter lower bound in (\ref{eq:holevo-bound}) than arbitrary measurements.)
We now write down a lower bound on the Holevo bound based on Consequence~\ref{monorprop1}.
The interpretation of the lower bound is that it quantifies how far away the states $\rho^x$ are from commuting. If the difference between the Holevo information and the classical mutual information is equal to zero, then the bound below holds with equality. In this case, each state $\rho^x$ is a fixed point of the same entanglement-breaking channel, which in turn by \cite[Theorem 5.3]{FNW14} implies that the states are commuting (a result well known since Holevo's original work).

\begin{consequence}
\label{cons:holevo-remainder}
The following remainder term for the Holevo bound is a consequence of the
conjecture in~\eqref{eq:zhang-conj}:
\begin{equation}
I\left(  X;B\right)  _{\rho}-I\left(  X;Y\right)  _{\omega}\geq-2\log\left(\sum
_{x}p_{X}\left(  x\right)  \sqrt{F\left(  \rho_{B}^{x},\mathcal{E}_B(\rho_{B}^{x})\right)
}\right),
\end{equation}
where $\mathcal{E}_B$ is an entanglement-breaking channel defined as
\begin{equation}
\mathcal{E}_B(\sigma_B) =\sum_{y}\left\langle \varphi^{y}\right\vert
\sigma_{B}\left\vert \varphi^{y}\right\rangle \frac{\rho_{B}^{1/2}\left\vert
\varphi^{y}\right\rangle \left\langle \varphi^{y}\right\vert _{B}\rho
_{B}^{1/2}}{\left\langle \varphi^{y}\right\vert
\rho_{B}\left\vert \varphi^{y}\right\rangle }.
\end{equation}
\end{consequence}

\begin{proof}
We would like to determine remainder terms for the inequality in
(\ref{eq:holevo-bound}) and can do so by rewriting it in terms of relative
entropies:
\begin{equation}
D\left(  \rho_{XB}\Vert\rho_{X}\otimes\rho_{B}\right)  -D\left(  \omega
_{XY}\Vert\omega_{X}\otimes\omega_{Y}\right)  \geq0.
\end{equation}
We now apply the conjecture in~\eqref{eq:zhang-conj} to the case of the
Holevo bound with
\begin{align}
\rho=\rho_{XB},\quad\sigma=\rho_{X}\otimes\rho_{B},\quad\mathcal{N}=\text{id}_{X}\otimes\mathcal{M}_{B\rightarrow Y}.
\end{align}
Consider for these choices that
\begin{align}
\mathcal{N}\left(  \rho\right)& =\omega_{XY}=\sum_{x,y}p_{X}\left(  x\right)  \left\langle \varphi^{y}\right\vert\rho_{B}^{x}\left\vert \varphi^{y}\right\rangle \ \left\vert x\right\rangle\left\langle x\right\vert _{X}\otimes\left\vert y\right\rangle \left\langle
y\right\vert _{Y},\\
\left[  \mathcal{N}\left(  \sigma\right)  \right]  ^{-1/2}  & =\left[
\rho_{X}\otimes\mathcal{M}_{B\rightarrow Y}\left(  \rho_{B}\right)  \right]
^{-1/2}\\
& =\rho_{X}^{-1/2}\otimes\left[  \mathcal{M}_{B\rightarrow Y}\left(  \rho
_{B}\right)  \right]  ^{-1/2}\\
& =\sum_{x}\left[  p_{X}\left(  x\right)  \right]  ^{-1/2}\left\vert
x\right\rangle \left\langle x\right\vert _{X}\otimes\sum_{y}\left[
\left\langle \varphi^{y}\right\vert \rho_{B}\left\vert \varphi^{y}
\right\rangle \right]  ^{-1/2}\ \left\vert y\right\rangle \left\langle
y\right\vert _{Y},\\
\mathcal{M}_{B\rightarrow Y}^{\dag}\left(  \cdot\right)    & =\sum
_{y}\left\vert \varphi^{y}\right\rangle \left\langle y\right\vert \left(
\cdot\right)  \left\vert y\right\rangle \left\langle \varphi^{y}\right\vert
,\\
\sigma^{1/2}  & =\rho_{X}^{1/2}\otimes\rho_{B}^{1/2}\\
& =\sum_{x}\left[  p_{X}\left(  x\right)  \right]  ^{1/2}\left\vert
x\right\rangle \left\langle x\right\vert _{X}\otimes\rho_{B}^{1/2}.
\end{align}
So we find that
\begin{equation}
\left[  \mathcal{N}\left(  \sigma\right)  \right]  ^{-1/2}\mathcal{N}\left(
\rho\right)  \left[  \mathcal{N}\left(  \sigma\right)  \right]  ^{-1/2}
=\sum_{x,y}\left\vert x\right\rangle \left\langle x\right\vert _{X}
\otimes\frac{\left\langle \varphi^{y}\right\vert \rho_{B}^{x}\left\vert
\varphi^{y}\right\rangle }{\left\langle \varphi^{y}\right\vert \rho
_{B}\left\vert \varphi^{y}\right\rangle }\left\vert y\right\rangle
\left\langle y\right\vert _{Y},
\end{equation}
and then
\begin{align}
\mathcal{N}^{\dag}\left(  \left[  \mathcal{N}\left(  \sigma\right)  \right]
^{-1/2}\mathcal{N}\left(  \rho\right)  \left[  \mathcal{N}\left(
\sigma\right)  \right]  ^{-1/2}\right)    & =\mathcal{M}_{B\rightarrow
Y}^{\dag}\left(  \sum_{x,y}\left\vert x\right\rangle \left\langle x\right\vert
_{X}\otimes\frac{\left\langle \varphi^{y}\right\vert \rho_{B}^{x}\left\vert
\varphi^{y}\right\rangle }{\left\langle \varphi^{y}\right\vert \rho
_{B}\left\vert \varphi^{y}\right\rangle }\left\vert y\right\rangle
\left\langle y\right\vert _{Y}\right)  \\
& =\sum_{x,y}\left\vert x\right\rangle \left\langle x\right\vert _{X}
\otimes\frac{\left\langle \varphi^{y}\right\vert \rho_{B}^{x}\left\vert
\varphi^{y}\right\rangle }{\left\langle \varphi^{y}\right\vert \rho
_{B}\left\vert \varphi^{y}\right\rangle }\left\vert \varphi^{y}\right\rangle
\left\langle \varphi^{y}\right\vert _{Y}.
\end{align}
Finally, we have that
\begin{align}
& \sigma^{1/2}\mathcal{N}^{\dag}\left(  \left[  \mathcal{N}\left(
\sigma\right)  \right]  ^{-1/2}\mathcal{N}\left(  \rho\right)  \left[
\mathcal{N}\left(  \sigma\right)  \right]  ^{-1/2}\right)  \sigma
^{1/2}\nonumber\\
& =\sum_{x,y}p_{X}\left(  x\right)  \left\vert x\right\rangle \left\langle
x\right\vert _{X}\otimes\frac{\left\langle \varphi^{y}\right\vert \rho_{B}
^{x}\left\vert \varphi^{y}\right\rangle }{\left\langle \varphi^{y}\right\vert
\rho_{B}\left\vert \varphi^{y}\right\rangle }\rho_{B}^{1/2}\left\vert
\varphi^{y}\right\rangle \left\langle \varphi^{y}\right\vert _{B}\rho
_{B}^{1/2}\\
& =\sum_{x,y}p_{X}\left(  x\right)  \frac{\left\langle \varphi^{y}\right\vert
\rho_{B}^{x}\left\vert \varphi^{y}\right\rangle }{\left\langle \varphi
^{y}\right\vert \rho_{B}\left\vert \varphi^{y}\right\rangle }\ \left\vert
x\right\rangle \left\langle x\right\vert _{X}\otimes\rho_{B}^{1/2}\left\vert
\varphi^{y}\right\rangle \left\langle \varphi^{y}\right\vert _{B}\rho
_{B}^{1/2}.
\end{align}
Defining
\begin{equation}
\sigma_{XB}\equiv\sum_{x}p_{X}\left(  x\right)  \left\vert x\right\rangle
\left\langle x\right\vert _{X}\otimes\sigma_{B}^{x}
\end{equation}
where
\begin{equation}
\sigma_{B}^{x}=\sum_{y}\frac{\left\langle \varphi^{y}\right\vert \rho_{B}
^{x}\left\vert \varphi^{y}\right\rangle }{\left\langle \varphi^{y}\right\vert
\rho_{B}\left\vert \varphi^{y}\right\rangle }\rho_{B}^{1/2}\left\vert
\varphi^{y}\right\rangle \left\langle \varphi^{y}\right\vert _{B}\rho
_{B}^{1/2},
\end{equation}
we see that
\begin{equation}
\sigma^{1/2}\mathcal{N}^{\dag}\left(  \left[  \mathcal{N}\left(
\sigma\right)  \right]  ^{-1/2}\mathcal{N}\left(  \rho\right)  \left[
\mathcal{N}\left(  \sigma\right)  \right]  ^{-1/2}\right)  \sigma^{1/2}
=\sigma_{XB}.
\end{equation}
So a conjectured remainder term for the Holevo bound is
\begin{align}
I\left(  X;B\right)  _{\rho}-I\left(  X;Y\right)  _{\omega}\geq-\log
F\left(  \rho_{XB},\sigma_{XB}\right)& =-2\log\sqrt{F\left(  \rho_{XB},\sigma_{XB}\right)  }\\
& =-2\log\left(\sum_{x}p_{X}\left(  x\right)  \sqrt{F\left(  \rho_{B}^{x},\sigma
_{B}^{x}\right)  }\right).
\end{align}
\end{proof}

%%%%%%%%%%%%%%%%%%%%%%%%%%%%%%%%%%%%%%%%%%%%%%%%%%%%%%%%%%%%%%%%%%%%%%%%%%%%%%%%%

\section{Discussion}

This paper has defined and examined various properties of a R\'{e}nyi squashed
entanglement and a R\'{e}nyi quantum discord. We took as a conjecture that the R\'{e}nyi
conditional mutual information of a tripartite state $\rho_{ABC}$ is monotone
with respect to local CPTP maps on both systems $A$ and $B$. Assuming the conjecture, we
showed that these quantities retain most of the properties of the original von
Neumann entropy-based quantities. For example, we showed that the R\'{e}nyi squashed
entanglement is convex, monotone with respect to LOCC, that it vanishes on separable
states and is subadditive on tensor-product states. Similarly, we showed that
the R\'{e}nyi quantum discord is non-negative, invariant with respect to the action of
local unitaries, vanishes on the set of classical-quantum states, and is
optimized by a rank-one POVM. Further, we proved relations of the R\'{e}nyi
squashed entanglement to a R\'{e}nyi entropy of entanglement and a R\'{e}nyi
entanglement of formation. We gave an expression for the R\'{e}nyi discord of pure
bipartite states. Also, assuming the truth of a conjecture on the monotonicity of the R\'{e}nyi conditional quantum mutual information with respect to the R\'{e}nyi parameter, we derived a remainder term for von Neumann entropy based quantum discord via the R\'{e}nyi quantum discord. 

This paper has also extended the procedure from our prior work for producing
R\'{e}nyi generalizations of quantum information measures to include measures
that are equal to differences of relative entropies. By conjecturing that the
proposed R\'{e}nyi generalizations of differences of relative entropies are
monotone in the R\'{e}nyi parameter, we suggested remainder terms for the
monotonicity and joint convexity of relative entropy, and the Holevo bound.
These remainder terms, if proven true, hold important applications in quantum
information theory~\cite{Winterconj, K13conj, Touchette14}.

The following are some future directions that could be considered based on the
results presented in this work. First of all, it remains a pressing open
question if the R\'{e}nyi conditional mutual information is monotone
with respect to local operations on both systems $A$ and $B$ (Conjecture~\ref{conj:monoA}). If Conjecture~\ref{conj:monoA} is false, then some of the consequences in this paper are directly impacted, such as the LOCC monotonicity of the R\'enyi squashed entanglement. One could also
try to prove more properties of the R\'{e}nyi squashed entanglement and
discord. For example, we have left open the converse part of faithfulness for both the R\'{e}nyi squashed entanglement as well as the R\'{e}nyi discord. The von Neumann entropy based squashed entanglement is known to be superadditive in general and additive on tensor-product states. However, we have only been able to show that the R\'{e}nyi squashed entanglement is subadditive on tensor-product states; super-additivity of the
R\'{e}nyi squashed entanglement in general has been left open. As far as
applications are concerned, it is an open question if the von
Neumann entropy based squashed entanglement is a strong converse rate for entanglement
distillation; the R{\'e}nyi squashed entanglement may be a useful tool in investigating this question. Also, using the R\'{e}nyi squashed entanglement, one could try
to prove that the von Neumann entropy based squashed entanglement is a strong converse rate
for the two-way assisted quantum capacity of any channel (the weak converse bound being shown in~\cite{TGW13, TGW14}). It might also be interesting to determine if
a Koashi-Winter type~\cite{KW04} relation holds for the proposed R\'{e}nyi discord.

\bigskip

\textbf{Note:} After the completion of this work, we learned of the recent 
result of \cite{FR14}, in which the following inequality was established for any tripartite state
$\rho_{ABC} \in\mathcal{S}(\mathcal{H}_A \otimes \mathcal{H}_B \otimes \mathcal{H}_C)$:
\begin{equation}
I(A;B|C)_\rho \geq -\log F(A;B|C)_\rho \, , \label{eq:FR}
\end{equation}
where $F(A;B|C)_\rho$ is the \textit{fidelity of recovery}, defined independently in
\cite{SW14} as
\begin{equation}
F(A;B|C)_\rho \equiv \sup_{\mathcal{R}} F(\rho_{ABC}, \mathcal{R}_{C\to AC} (\rho_{BC}) ) .
\end{equation}
The fidelity of recovery quantifies how well one can recover the full state on systems
$ABC$ if system $A$ is lost and one is allowed to perform a recovery map on system $C$ alone.
Thus, for states with small conditional mutual information (near to zero), the fidelity of recovery is high (near to one). Inequality \eqref{eq:FR} clearly impacts our Consequences
\ref{cons:discord} and \ref{cons:holevo-remainder}, so that one can interpret a state with small
quantum discord as being an approximate fixed point of an entanglement breaking channel
and an ensemble for which the
Holevo information is close to the accessible information as having all states in the ensemble
as approximate fixed points of the same entanglement breaking channel.

\bigskip

\textbf{Acknowledgments.} We thank A.~R.~P.~Rau for many helpful
discussions about this work and John Calsamiglia for discussions about
R\'{e}nyi quantum discord. KS and MMW are grateful to Naresh Sharma for
hosting them for a research visit to the Tata Institute of Fundamental
Research during June 2014, where some of the results in this paper were
established. MMW\ is grateful to the Institute for Quantum Information and
Matter at Caltech for hospitality during a research visit in July 2014. KS
acknowledges support from NSF\ Grant No.~CCF-1350397, the DARPA Quiness
Program through US Army Research Office award W31P4Q-12-1-0019, and the
Graduate School of Louisiana State University for the 2014-2015 Dissertation
Year Fellowship. MMW acknowledges support from the APS-IUSSTF Professorship
Awards in Physics, startup funds from the Department of Physics and Astronomy
at LSU, support from the NSF\ under Award No.~CCF-1350397, and support from
the DARPA Quiness Program through US Army Research Office award
W31P4Q-12-1-0019.

\appendix

%dummy comment inserted by tex2lyx to ensure that this paragraph is not empty%dummy comment inserted by tex2lyx to ensure that this paragraph is not empty

\section{R\'enyi conditional quantum mutual information: useful lemmas}

\subsection{Invariance with respect to tensoring with product states}\label{sub:Invaritens}

\begin{lemma}\label{lem:tensor-prod}
Let $\rho_{AA_{1}BB_{1}EE_{1}}\equiv\omega_{ABE}
\otimes\sigma_{A_{1}}\otimes\tau_{B_{1}}\otimes\gamma_{E_{1}}$. Then
\begin{equation}
I_{\alpha}\left(  AA_{1};BB_{1}|EE_{1}\right)  _{\rho}=I_{\alpha}\left(
A;B|E\right)  _{\omega}.
\end{equation}
\end{lemma}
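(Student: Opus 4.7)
The plan is to compute $I_\alpha(AA_1; BB_1 | EE_1)_\rho$ directly using the Sibson-identity expression in \eqref{eq:CMI-sibson}, and to exploit the fact that all relevant marginals of the tensor-product state factorize as tensor products, so that their fractional powers factor likewise.

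First I would write down the marginals explicitly: $\rho_{AA_1BB_1EE_1} = \omega_{ABE} \otimes \sigma_{A_1} \otimes \tau_{B_1} \otimes \gamma_{E_1}$, $\rho_{AA_1EE_1} = \omega_{AE} \otimes \sigma_{A_1} \otimes \gamma_{E_1}$, and $\rho_{EE_1} = \omega_E \otimes \gamma_{E_1}$. Since each factor on a different tensor leg commutes with operators on the other legs, any real power of these operators factors across the tensor product, e.g.
\begin{equation}
\rho_{AA_1EE_1}^{(1-\alpha)/2} = \omega_{AE}^{(1-\alpha)/2} \otimes \sigma_{A_1}^{(1-\alpha)/2} \otimes \gamma_{E_1}^{(1-\alpha)/2},
\end{equation}
and similarly for $\rho_{AA_1BB_1EE_1}^\alpha$ and $\rho_{EE_1}^{(\alpha-1)/2}$.

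Next I would substitute these factorized powers into the inner expression of \eqref{eq:CMI-sibson}. On the $A_1$ leg the three factors multiply to $\sigma_{A_1}^{(1-\alpha)/2+\alpha+(1-\alpha)/2} = \sigma_{A_1}$, so the partial trace over $AA_1$ contributes $\mathrm{Tr}\{\sigma_{A_1}\}=1$ on the $A_1$ leg and leaves $\mathrm{Tr}_A\{\omega_{AE}^{(1-\alpha)/2}\omega_{ABE}^\alpha\omega_{AE}^{(1-\alpha)/2}\}$ on the $AE$ legs; the $B_1$ leg simply passes through as $\tau_{B_1}^\alpha$ (there is no $B_1$-leg factor in the $\rho_{AA_1EE_1}$ or $\rho_{EE_1}$ sandwich). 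Then sandwiching with $\rho_{EE_1}^{(\alpha-1)/2}$ on the $E_1$ leg gives $\gamma_{E_1}^{(\alpha-1)/2}\gamma_{E_1}\gamma_{E_1}^{(\alpha-1)/2} = \gamma_{E_1}^\alpha$, so the overall operator inside $(\cdot)^{1/\alpha}$ has the tensor-product form
\begin{equation}
\Bigl(\omega_E^{(\alpha-1)/2}\mathrm{Tr}_A\{\omega_{AE}^{(1-\alpha)/2}\omega_{ABE}^\alpha\omega_{AE}^{(1-\alpha)/2}\}\omega_E^{(\alpha-1)/2}\Bigr) \otimes \tau_{B_1}^\alpha \otimes \gamma_{E_1}^\alpha.
\end{equation}

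Finally, raising this tensor product to the $1/\alpha$ power factors across the tensor legs, producing $\tau_{B_1}$ and $\gamma_{E_1}$ on the auxiliary legs, both of which have trace one. The outer trace therefore reduces to the trace of the $\omega$ factor alone, and multiplying by $\alpha/(\alpha-1)$ and taking the logarithm recovers exactly $I_\alpha(A;B|E)_\omega$. The argument is essentially mechanical; the only thing to be careful about is tracking each exponent and verifying that all the auxiliary-system powers combine to give a unit-trace state, which happens precisely because the three exponents $(1-\alpha)/2$, $\alpha$, $(1-\alpha)/2$ sum to $1$ on the $A_1$ leg and $(\alpha-1)/2$, $1$, $(\alpha-1)/2$ sum to $\alpha$ on the $E_1$ leg. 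I do not anticipate any real obstacle beyond this bookkeeping.
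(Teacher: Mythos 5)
Your proposal is correct and follows essentially the same route as the paper's proof: direct substitution into the closed-form Sibson expression \eqref{eq:CMI-sibson}, factoring all fractional powers across the tensor legs, observing that the exponents on the $A_{1}$ and $E_{1}$ legs combine to yield the unit-trace states $\sigma_{A_{1}}$, $\tau_{B_{1}}$, and $\gamma_{E_{1}}$ after the $1/\alpha$ power, so that the outer trace reduces to the $\omega$ factor alone. Your exponent bookkeeping, including the key identities $(1-\alpha)/2+\alpha+(1-\alpha)/2=1$ and $(\alpha-1)/2+1+(\alpha-1)/2=\alpha$, matches the paper's calculation exactly.
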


\begin{proof}
This follows from a direct calculation. Consider that
\begin{align}
&  \text{Tr}_{AA_{1}}\left\{  \rho_{AA_{1}EE_{1}}^{\left(  1-\alpha\right)
/2}\rho_{AA_{1}BB_{1}EE_{1}}^{\alpha}\rho_{AA_{1}EE_{1}}^{\left(
1-\alpha\right)  /2}\right\} \nonumber\\
&  =\text{Tr}_{AA_{1}}\left\{  \left(  \omega_{AE}\otimes\sigma_{A_{1}}
\otimes\gamma_{E_{1}}\right)  ^{\left(  1-\alpha\right)  /2}\left(
\omega_{ABE}\otimes\sigma_{A_{1}}\otimes\tau_{B_{1}}\otimes\gamma_{E_{1}
}\right)  ^{\alpha}\left(  \omega_{AE}\otimes\sigma_{A_{1}}\otimes
\gamma_{E_{1}}\right)  ^{\left(  1-\alpha\right)  /2}\right\} \\
&  =\text{Tr}_{AA_{1}}\left\{  \left(  \omega_{AE}^{\left(  1-\alpha\right)
/2}\otimes\sigma_{A_{1}}^{\left(  1-\alpha\right)  /2}\otimes\gamma_{E_{1}
}^{\left(  1-\alpha\right)  /2}\right)  \left(  \omega_{ABE}^{\alpha}
\otimes\sigma_{A_{1}}^{\alpha}\otimes\tau_{B_{1}}^{\alpha}\otimes\gamma
_{E_{1}}^{\alpha}\right)  \left(  \omega_{AE}^{\left(  1-\alpha\right)
/2}\otimes\sigma_{A_{1}}^{\left(  1-\alpha\right)  /2}\otimes\gamma_{E_{1}
}^{\left(  1-\alpha\right)  /2}\right)  \right\} \\
&  =\text{Tr}_{AA_{1}}\left\{  \omega_{AE}^{\left(  1-\alpha\right)
/2}\left(  \omega_{ABE}^{\alpha}\otimes\tau_{B_{1}}^{\alpha}\right)
\omega_{AE}^{\left(  1-\alpha\right)  /2}\otimes\sigma_{A_{1}}\otimes
\gamma_{E_{1}}\right\} \\
&  =\text{Tr}_{A}\left\{  \omega_{AE}^{\left(  1-\alpha\right)  /2}
\omega_{ABE}^{\alpha}\omega_{AE}^{\left(  1-\alpha\right)  /2}\right\}
\otimes\tau_{B_{1}}^{\alpha}\otimes\gamma_{E_{1}}.
\end{align}
From the fact that
\begin{align}
&  \text{Tr}\left\{  \left(  \rho_{EE_{1}}^{\left(  \alpha-1\right)
/2}\text{Tr}_{AA_{1}}\left\{  \rho_{AA_{1}EE_{1}}^{\left(  1-\alpha\right)
/2}\rho_{AA_{1}BB_{1}EE_{1}}^{\alpha}\rho_{AA_{1}EE_{1}}^{\left(
1-\alpha\right)  /2}\right\}  \rho_{EE_{1}}^{\left(  \alpha-1\right)
/2}\right)  ^{1/\alpha}\right\} \nonumber\\
&  =\text{Tr}\left\{  \left(  \left(  \omega_{E}^{\left(  \alpha-1\right)
/2}\otimes\gamma_{E_{1}}^{\left(  \alpha-1\right)  /2}\right)  \left(
\text{Tr}_{A}\left\{  \omega_{AEE_{1}}^{\left(  1-\alpha\right)  /2}
\omega_{ABEE_{1}}^{\alpha}\omega_{AEE_{1}}^{\left(  1-\alpha\right)
/2}\right\}  \otimes\gamma_{E_{1}}\right)  \left(  \omega_{E}^{\left(
\alpha-1\right)  /2}\otimes\gamma_{E_{1}}^{\left(  \alpha-1\right)
/2}\right)  \otimes\tau_{B_{1}}^{\alpha}\right)  ^{1/\alpha}\right\} \\
&  =\text{Tr}\left\{  \left(  \omega_{E}^{\left(  \alpha-1\right)  /2}\left(
\text{Tr}_{A}\left\{  \omega_{AEE_{1}}^{\left(  1-\alpha\right)  /2}
\omega_{ABEE_{1}}^{\alpha}\omega_{AEE_{1}}^{\left(  1-\alpha\right)
/2}\right\}  \right)  \omega_{E}^{\left(  \alpha-1\right)  /2}\otimes
\tau_{B_{1}}^{\alpha}\otimes\gamma_{E_{1}}^{\alpha}\right)  ^{1/\alpha
}\right\} \\
&  =\text{Tr}\left\{  \left(  \omega_{E}^{\left(  \alpha-1\right)
/2}\text{Tr}_{A}\left\{  \omega_{AE}^{\left(  1-\alpha\right)  /2}\omega
_{ABE}^{\alpha}\omega_{AE}^{\left(  1-\alpha\right)  /2}\right\}  \omega
_{E}^{\left(  \alpha-1\right)  /2}\right)  ^{1/\alpha}\otimes\tau_{B_{1}
}\otimes\gamma_{E_{1}}\right\} \\
&  =\text{Tr}\left\{  \left(  \omega_{E}^{\left(  \alpha-1\right)
/2}\text{Tr}_{A}\left\{  \omega_{AE}^{\left(  1-\alpha\right)  /2}\omega
_{ABE}^{\alpha}\omega_{AE}^{\left(  1-\alpha\right)  /2}\right\}  \omega
_{E}^{\left(  \alpha-1\right)  /2}\right)  ^{1/\alpha}\right\},
\end{align}
it follows that
\begin{equation}
I_{\alpha}\left(  AA_{1};BB_{1}|EE_{1}\right)  _{\rho}=I_{\alpha}\left(
A;B|E\right)  _{\omega}.
\end{equation}
\end{proof}

%%%%%%%%%%%%%%%%%%%%%%%%%%%%%%%%%%%%%%%%%%%%%%%%%%%%%%%%%%%%%%%%%%%%%%%%%%%%%%%%%

\subsection{Conditioning on classical information}\label{sub:CononC}

\begin{lemma}\label{lem:classical-cond}
Let $\rho_{XABC}$ be a classical-quantum state of
the following form:
\begin{equation}
\rho_{XABC}=\sum_{x}p_{X}\left(  x\right)  \left\vert x\right\rangle
\left\langle x\right\vert _{X}\otimes\rho_{ABC}^{x}.
\end{equation}
Then the following identity holds for $\alpha > 0$:
\begin{equation}
I_{\alpha}\left(  A;B|CX\right)  _{\rho}=\frac{\alpha}{\alpha-1}\log\sum
_{x}p_{X}\left(  x\right)  \exp\left\{\left(  \frac{\alpha-1}{\alpha}\right)
I_{\alpha}\left(  A;B|C\right)  _{\rho^{x}}\right\}.
\end{equation}
\end{lemma}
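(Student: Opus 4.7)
My plan is to prove this identity by direct computation using the Sibson-identity form of $I_\alpha(A;B|E)$ given in \eqref{eq:CMI-sibson}, exploiting the block-diagonal structure that the classical register $X$ induces on every operator appearing in that expression.

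First I would note that because $\rho_{XABC}$ is classical on $X$, every power and partial trace commutes with the projective decomposition $\{|x\rangle\langle x|_X\}$. Concretely, I would write
\begin{align}
\rho_{XABC}^{\alpha} &= \sum_{x} p_X(x)^{\alpha}\,|x\rangle\langle x|_X \otimes (\rho_{ABC}^{x})^{\alpha},\\
\rho_{XAC}^{(1-\alpha)/2} &= \sum_{x} p_X(x)^{(1-\alpha)/2}\,|x\rangle\langle x|_X \otimes (\rho_{AC}^{x})^{(1-\alpha)/2},\\
\rho_{XC}^{(\alpha-1)/2} &= \sum_{x} p_X(x)^{(\alpha-1)/2}\,|x\rangle\langle x|_X \otimes (\rho_{C}^{x})^{(\alpha-1)/2}.
\end{align}

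Next I would substitute these three identities into the Sibson formula \eqref{eq:CMI-sibson} with conditioning system $CX$ in place of $E$. The cross terms vanish because of the orthogonality $|x\rangle\langle x|\cdot|x'\rangle\langle x'|=\delta_{xx'}|x\rangle\langle x|$, so everything reduces block-by-block in $x$. Adding the exponents of $p_X(x)$ from the three factors gives $(1-\alpha)/2 + \alpha + (1-\alpha)/2 = 1$ before the partial trace over $A$, and then the two outer $\rho_{XC}^{(\alpha-1)/2}$ factors contribute $(\alpha-1)/2 + (\alpha-1)/2 = \alpha - 1$ more, so the $x$-th block of the operator whose $1/\alpha$ power is taken carries a prefactor of $p_X(x)^{\alpha}$. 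The internal operator in the $x$-th block is exactly the one appearing in the Sibson identity for $I_\alpha(A;B|C)_{\rho^x}$.

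Taking the $1/\alpha$ power block-by-block pulls $p_X(x)^\alpha$ out as $p_X(x)$, and the trace then gives
\begin{equation}
\sum_{x} p_X(x)\,\mathrm{Tr}\!\left\{\Bigl((\rho_C^x)^{(\alpha-1)/2}\mathrm{Tr}_A\bigl\{(\rho_{AC}^x)^{(1-\alpha)/2}(\rho_{ABC}^x)^{\alpha}(\rho_{AC}^x)^{(1-\alpha)/2}\bigr\}(\rho_C^x)^{(\alpha-1)/2}\Bigr)^{1/\alpha}\right\}.
\end{equation}
By \eqref{eq:CMI-sibson} applied to each $\rho_{ABC}^x$, the inner trace equals $\exp\{\tfrac{\alpha-1}{\alpha}\,I_\alpha(A;B|C)_{\rho^x}\}$. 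Multiplying by $\tfrac{\alpha}{\alpha-1}\log$ on the outside then yields the claimed formula.

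The only step with any subtlety is ensuring that the block-diagonality survives under the fractional powers; this relies on the convention, already fixed in the Preliminaries, that $f(A)$ is defined via the spectral decomposition on the support of $A$, so that $(\sum_x |x\rangle\langle x|\otimes A_x)^t = \sum_x |x\rangle\langle x|\otimes A_x^t$ for any real $t$ and positive semidefinite $A_x$. There is no genuine obstacle here --- the argument is a routine computation --- so this is really just a matter of carefully bookkeeping the exponents of $p_X(x)$ at each stage.
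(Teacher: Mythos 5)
Your proposal is correct and follows essentially the same route as the paper's own proof: both substitute the block-diagonal decompositions of $\rho_{XABC}^{\alpha}$, $\rho_{XAC}^{(1-\alpha)/2}$, and $\rho_{XC}^{(\alpha-1)/2}$ into the Sibson-identity form \eqref{eq:CMI-sibson}, collect the net prefactor $p_X(x)^{\alpha}$ in each block, pull it through the $1/\alpha$ power as $p_X(x)$, and recognize the per-block trace as $\exp\{\tfrac{\alpha-1}{\alpha}I_{\alpha}(A;B|C)_{\rho^{x}}\}$. Your exponent bookkeeping and the remark on the functional-calculus convention for block-diagonal operators match the paper's computation exactly.
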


\begin{proof}
Recalling the formula in (\ref{eq:CMI-sibson}), we have
\begin{equation}
I_{\alpha}\left(  A;B|CX\right)  _{\rho}=\frac{\alpha}{\alpha-1}\log
\text{Tr}\left\{  \left(  \text{Tr}_{A}\left\{  \rho_{CX}^{\left(
\alpha-1\right)  /2}\rho_{ACX}^{\left(  1-\alpha\right)  /2}\rho
_{ABCX}^{\alpha}\rho_{ACX}^{\left(  1-\alpha\right)  /2}\rho_{CX}^{\left(
\alpha-1\right)  /2}\right\}  \right)  ^{1/\alpha}\right\}.
\end{equation}
So
\begin{multline}
\rho_{CX}^{\left(  \alpha-1\right)  /2}\rho_{ACX}^{\left(  1-\alpha\right)
/2}\rho_{ABCX}^{\alpha}\rho_{ACX}^{\left(  1-\alpha\right)  /2}\rho
_{CX}^{\left(  \alpha-1\right)  /2}\\
=\left[  \sum_{x}p_{X}\left(  x\right)  \rho_{C}^{x}\otimes\left\vert
x\right\rangle \left\langle x\right\vert _{X}\right]  ^{\left(  \alpha
-1\right)  /2}\left[  \sum_{x}p_{X}\left(  x\right)  \rho_{AC}^{x}
\otimes\left\vert x\right\rangle \left\langle x\right\vert _{X}\right]
^{\left(  1-\alpha\right)  /2}\times\\
\left[  \sum_{x}p_{X}\left(  x\right)  \rho_{ABC}^{x}\otimes\left\vert
x\right\rangle \left\langle x\right\vert _{X}\right]  ^{\alpha}\left[
\sum_{x}p_{X}\left(  x\right)  \rho_{AC}^{x}\otimes\left\vert x\right\rangle
\left\langle x\right\vert _{X}\right]  ^{\left(  1-\alpha\right)  /2}\times\\
\left[  \sum_{x}p_{X}\left(  x\right)  \rho_{C}^{x}\otimes\left\vert
x\right\rangle \left\langle x\right\vert _{X}\right]  ^{\left(  \alpha
-1\right)  /2}
\end{multline}
\begin{equation}
=\sum_{x}p_{X}^{\alpha}\left(  x\right)  \left[  \rho_{C}^{x}\right]
^{\left(  \alpha-1\right)  /2}\left[  \rho_{AC}^{x}\right]  ^{\left(
1-\alpha\right)  /2}\left[  \rho_{ABC}^{x}\right]  ^{\alpha}\left[  \rho
_{AC}^{x}\right]  ^{\left(  1-\alpha\right)  /2}\left[  \rho_{C}^{x}\right]
^{\left(  \alpha-1\right)  /2}\otimes\left\vert x\right\rangle \left\langle
x\right\vert _{X}.
\end{equation}
From the fact that
\begin{align}
&  \text{Tr}\left\{  \left(  \text{Tr}_{A}\left\{  \sum_{x}p_{X}^{\alpha
}\left(  x\right)  \left[  \rho_{C}^{x}\right]  ^{\left(  \alpha-1\right)
/2}\left[  \rho_{AC}^{x}\right]  ^{\left(  1-\alpha\right)  /2}\left[
\rho_{ABC}^{x}\right]  ^{\alpha}\left[  \rho_{AC}^{x}\right]  ^{\left(
1-\alpha\right)  /2}\left[  \rho_{C}^{x}\right]  ^{\left(  \alpha-1\right)
/2}\otimes\left\vert x\right\rangle \left\langle x\right\vert _{X}\right\}
\right)  ^{1/\alpha}\right\} \nonumber\\
&  =\text{Tr}\left\{  \left(  \sum_{x}p_{X}^{\alpha}\left(  x\right)
\text{Tr}_{A}\left\{  \left[  \rho_{C}^{x}\right]  ^{\left(  \alpha-1\right)
/2}\left[  \rho_{AC}^{x}\right]  ^{\left(  1-\alpha\right)  /2}\left[
\rho_{ABC}^{x}\right]  ^{\alpha}\left[  \rho_{AC}^{x}\right]  ^{\left(
1-\alpha\right)  /2}\left[  \rho_{C}^{x}\right]  ^{\left(  \alpha-1\right)
/2}\right\}  \otimes\left\vert x\right\rangle \left\langle x\right\vert
_{X}\right)  ^{1/\alpha}\right\} \\
&  =\text{Tr}\left\{  \sum_{x}p_{X}\left(  x\right)  \left(  \text{Tr}
_{A}\left\{  \left[  \rho_{C}^{x}\right]  ^{\left(  \alpha-1\right)
/2}\left[  \rho_{AC}^{x}\right]  ^{\left(  1-\alpha\right)  /2}\left[
\rho_{ABC}^{x}\right]  ^{\alpha}\left[  \rho_{AC}^{x}\right]  ^{\left(
1-\alpha\right)  /2}\left[  \rho_{C}^{x}\right]  ^{\left(  \alpha-1\right)
/2}\right\}  \right)  ^{1/\alpha}\otimes\left\vert x\right\rangle \left\langle
x\right\vert _{X}\right\} \\
&  =\sum_{x}p_{X}\left(  x\right)  \text{Tr}\left\{  \left(  \text{Tr}
_{A}\left\{  \left[  \rho_{C}^{x}\right]  ^{\left(  \alpha-1\right)
/2}\left[  \rho_{AC}^{x}\right]  ^{\left(  1-\alpha\right)  /2}\left[
\rho_{ABC}^{x}\right]  ^{\alpha}\left[  \rho_{AC}^{x}\right]  ^{\left(
1-\alpha\right)  /2}\left[  \rho_{C}^{x}\right]  ^{\left(  \alpha-1\right)
/2}\right\}  \right)  ^{1/\alpha}\right\} \\
&  =\sum_{x}p_{X}\left(  x\right)  \exp\left\{\left(  \frac{\alpha-1}{\alpha}\right)
I_{\alpha}\left(  A;B|C\right)  _{\rho^{x}}\right\},
\end{align}
it follows that
\begin{equation}
I_{\alpha}\left(  A;B|CX\right)  _{\rho}=\frac{\alpha}{\alpha-1}\log\sum
_{x}p_{X}\left(  x\right)  \exp\left\{\left(  \frac{\alpha-1}{\alpha}\right)
I_{\alpha}\left(  A;B|C\right)  _{\rho^{x}}\right\}.
\end{equation}
\end{proof}

%%%%%%%%%%%%%%%%%%%%%%%%%%%%%%%%%%%%%%%%%%%%%%%%%%%%%%%%%%%%%%%%%%%%%%%%%%%%%%%%%

\section{Proof of Conjecture~\ref{conj:mono-rel-diff} for $\alpha$ in a neighborhood of one}\label{renreldifApp}

This section gives more details about proving Conjecture~\ref{conj:mono-rel-diff} in a neighborhood of one. Consider the inequality in (\ref{monreldif1}). Let $\rho$, $\sigma\in\mathcal{S}\left(\mathcal{H}\right)_{++}$. We begin by introducing a variable
\begin{equation}
\gamma=\alpha-1,
\end{equation}
and with%
\begin{equation}
Y\left(  \gamma\right)  \equiv\rho^{1+ \gamma}\sigma^{- \gamma /2}
\mathcal{N}^{\dag}\left(  \left[  \mathcal{N}\left(  \sigma\right)  \right]
^{\gamma  /2}\left[  \mathcal{N}\left(  \rho\right)  \right]
^{- \gamma}\left[  \mathcal{N}\left(  \sigma\right)  \right]  ^{
 \gamma  /2}\right)  \sigma^{  - \gamma  /2},\label{eq:Y_beta}
\end{equation}
it follows that $\Delta_{\alpha}\left(  \rho,\sigma,\mathcal{N}\right)  $ is equal to
\begin{equation}
\frac{1}{\alpha-1}\log
\text{Tr}\left\{  \rho^{\alpha}\sigma^{\left(  1-\alpha\right)  /2}
\mathcal{N}^{\dag}\left(  \left[  \mathcal{N}\left(  \sigma\right)  \right]
^{\left(  \alpha-1\right)  /2}\left[  \mathcal{N}\left(  \rho\right)  \right]
^{1-\alpha}\left[  \mathcal{N}\left(  \sigma\right)  \right]  ^{\left(
\alpha-1\right)  /2}\right)  \sigma^{\left(  1-\alpha\right)  /2}\right\}
=\frac{1}{\gamma}\log\text{Tr}\left\{  Y\left(  \gamma\right)  \right\}  .
\end{equation}
Since $d\gamma/d\alpha=1$,
\begin{multline}
\frac{d}{d\alpha}\left[ \frac{1}{\alpha-1}\log
\text{Tr}\left\{  \rho^{\alpha}\sigma^{\left(  1-\alpha\right)  /2}
\mathcal{N}^{\dag}\left(  \left[  \mathcal{N}\left(  \sigma\right)  \right]
^{\left(  \alpha-1\right)  /2}\left[  \mathcal{N}\left(  \rho\right)  \right]
^{1-\alpha}\left[  \mathcal{N}\left(  \sigma\right)  \right]  ^{\left(
\alpha-1\right)  /2}\right)  \sigma^{\left(  1-\alpha\right)  /2}\right\} \right]  \\=\frac{d}{d\gamma}\left[  \frac{1}{\gamma
}\log\text{Tr}\left\{  Y\left(  \gamma\right)  \right\}  \right]  .
\end{multline}
We can then explicitly compute the derivative:
\begin{align}
\frac{d}{d\gamma}\left[  \frac{1}{\gamma}\log\text{Tr}\left\{  Y\left(
\gamma\right)  \right\}  \right]   &  =-\frac{1}{\gamma^{2}}\log
\text{Tr}\left\{  Y\left(  \gamma\right)  \right\}  +\frac{\text{Tr}\left\{
\frac{d}{d\gamma}Y\left(  \gamma\right)  \right\}  }{\gamma\text{Tr}\left\{
Y\left(  \gamma\right)  \right\}  }\\
&  =\frac{\gamma\text{Tr}\left\{  \frac{d}{d\gamma}Y\left(  \gamma\right)
\right\}  -\text{Tr}\left\{  Y\left(  \gamma\right)  \right\}  \log
\text{Tr}\left\{  Y\left(  \gamma\right)  \right\}  }{\gamma^{2}
\text{Tr}\left\{  Y\left(  \gamma\right)  \right\}  }.
\label{eq:derivative-exp}
\end{align}

We can prove that the numerator of
(\ref{eq:derivative-exp}) is non-negative for $\gamma$ in a neighborhood of
zero. To this end, consider a Taylor expansion of $Y\left(  \gamma\right)  $
in (\ref{eq:Y_beta})\ around $\gamma$ equal to zero (so around $\alpha$ equal
to one). Indeed, consider that
\begin{align}
X^{1+\gamma}  &  =X+\gamma X\log X+\frac{\gamma^{2}}{2}X\log^{2}X+O\left(
\gamma^{3}\right)  ,\\
X^{\gamma}  &  =I+\gamma\log X+\frac{\gamma^{2}}{2}\log^{2}X+O\left(
\gamma^{3}\right)  .
\end{align}
For our case, we make the following substitutions into Tr$\left\{  Y\left(
\gamma\right)  \right\}  $:
\begin{align}
\rho^{1+\gamma}  &  =\rho+\gamma\rho\log\rho
+\frac{\gamma^{2}}{2}\rho\log^{2}\rho+O\left(  \gamma^{3}\right)
,\\
\sigma^{- \gamma /2}&=I-\frac{\gamma}{2}\log\sigma
+\frac{\gamma^{2}}{8}\log^{2}\sigma+O\left(  \gamma^{3}\right),\\
\left[  \mathcal{N}\left(  \sigma\right)  \right]
^{\gamma  /2}  &  =I+\frac{\gamma}{2}\log\left[  \mathcal{N}\left(  \sigma\right)  \right]
+\frac{\gamma^{2}}{8}\log^{2}\left[  \mathcal{N}\left(  \sigma\right)  \right]+O\left(  \gamma^{3}\right)  ,\\
\left[  \mathcal{N}\left(  \rho\right)  \right]^{- \gamma}&=I-\gamma\log\left[  \mathcal{N}\left(  \rho\right)  \right]+\frac{\gamma^{2}}{2}\log
^{2}\left[  \mathcal{N}\left(  \rho\right)  \right]+O\left(  \gamma^{3}\right)  .
\end{align}
After a rather tedious calculation, we find that
\begin{equation}
\text{Tr}\left\{  Y\left(  \gamma\right)  \right\}  =\text{Tr}\left\{
\rho\right\}  +\gamma\Delta\left(  \rho,\sigma,\mathcal{N}\right)
+\frac{\gamma^{2}}{2}\left[  V\left(  \rho,\sigma,\mathcal{N}\right)  +\left[
\Delta\left( \rho,\sigma,\mathcal{N}\right)  \right]  ^{2}\right]  +O\left(
\gamma^{3}\right)  ,
\end{equation}
where $\Delta\left(  \rho,\sigma,\mathcal{N}\right)$ is the relative entropy difference (of the form written in (\ref{vonNreldiff})) and $V\left(  \rho,\sigma,\mathcal{N}\right)  $ is a relative entropy difference variance of $\rho$, $\sigma$, and $\mathcal{N}$:
\begin{multline}
 V\left(  \rho,\sigma,\mathcal{N}\right) \equiv \text{Tr}\left\{  \rho\left[  \log\rho-\log\sigma-\mathcal{N}^{\dag}\left(
\log\mathcal{N}\left(  \rho\right)  \right)  +\mathcal{N}^{\dag}\left(
\log\mathcal{N}\left(  \sigma\right)  \right) 
- \Delta\left(  \rho,\sigma,\mathcal{N}\right)
\right]  ^{2}\right\}  \\
  +\text{Tr}\left\{  \mathcal{N}\left(  \rho\right)  \left[  \log
\mathcal{N}\left(  \rho\right)  -\log\mathcal{N}\left(  \sigma\right)
\right]  ^{2}\right\}  -\text{Tr}\left\{  \rho\left[  \mathcal{N}^{\dag
}\left(  \log\mathcal{N}\left(  \rho\right)  \right)  -\mathcal{N}^{\dag
}\left(  \log\mathcal{N}\left(  \sigma\right)  \right)  \right]  ^{2}\right\}
.
\end{multline}
Letting $A=\log\mathcal{N}\left(  \rho\right)  -\log\mathcal{N}\left(
\sigma\right)  $, observe that%
\begin{align}
& \text{Tr}\left\{  \mathcal{N}\left(  \rho\right)  \left[  \log
\mathcal{N}\left(  \rho\right)  -\log\mathcal{N}\left(  \sigma\right)
\right]  ^{2}\right\}  -\text{Tr}\left\{  \rho\left[  \mathcal{N}^{\dag
}\left(  \log\mathcal{N}\left(  \rho\right)  \right)  -\mathcal{N}^{\dag
}\left(  \log\mathcal{N}\left(  \sigma\right)  \right)  \right]  ^{2}\right\} \nonumber
\\
& =\text{Tr}\left\{  \mathcal{N}\left(  \rho\right)  A^{\dag}A\right\}
-\text{Tr}\left\{  \rho\mathcal{N}^{\dag}\left(  A^{\dag}\right)
\mathcal{N}^{\dag}\left(  A\right)  \right\}  \\
& =\text{Tr}\left\{  \rho\mathcal{N}^{\dag}\left(  A^{\dag}A\right)  \right\}
-\text{Tr}\left\{  \rho\mathcal{N}^{\dag}\left(  A^{\dag}\right)
\mathcal{N}^{\dag}\left(  A\right)  \right\}  \\
& =\text{Tr}\left\{  \rho\left[  \mathcal{N}^{\dag}\left(  A^{\dag}A\right)
-\mathcal{N}^{\dag}\left(  A^{\dag}\right)  \mathcal{N}^{\dag}\left(
A\right)  \right]  \right\}  \\
& \geq0.
\end{align}
The inequality follows by applying the Kadison-Schwarz inequality \cite{B07}, i.e., for
any 2-positive unital map $\phi$, we have that%
\begin{equation}
\phi(  a^{\dag}a)  \geq\phi(  a^{\dag})  \phi\left(
a\right)  .
\end{equation}
For any Hermitian operator $H$, we have that
\begin{equation}
\left\langle H^{2}\right\rangle _{\rho}-\left\langle H\right\rangle _{\rho
}^{2}\geq0.
\end{equation}
So taking $H\equiv\log\rho-\log\sigma-\mathcal{N}^\dagger\log\mathcal{N}(\rho)+\mathcal{N}^\dagger\log\mathcal{N}(\sigma)
$, we conclude that
\begin{equation}\text{Tr}\left\{  \rho\left[  \log\rho-\log\sigma-\mathcal{N}^{\dag}\left(
\log\mathcal{N}\left(  \rho\right)  \right)  +\mathcal{N}^{\dag}\left(
\log\mathcal{N}\left(  \sigma\right)  \right) 
-\Delta\left(  \rho,\sigma,\mathcal{N}\right)
\right]  ^{2}\right\}\geq0 .
\end{equation}
Putting the two inequalities above together, we conclude that
$V\left(  \rho,\sigma,\mathcal{N}\right) \geq 0$, an
observation central to our development here.

We will make the abbreviations
$\Delta\equiv\Delta\left(  \rho,\sigma,\mathcal{N}\right)  $ and $V\equiv
V\left(  \rho,\sigma,\mathcal{N}\right)  $\ from here forward, so that
\begin{equation}
\text{Tr}\left\{  Y\left(  \gamma\right)  \right\}  =1+\gamma\Delta
+\frac{\gamma^{2}}{2}\left[  V+\Delta^{2}\right]  +O\left(  \gamma^{3}\right)
. \label{eq:beta-small-proof-1}
\end{equation}
So this implies that
\begin{align}
\gamma\text{Tr}\left\{  \frac{d}{d\gamma}Y\left(  \gamma\right)  \right\}&
=\gamma\Delta+\gamma^{2}\left[  V+\Delta^{2}\right]  +O\left(  \gamma
^{3}\right)  ,\\
\text{Tr}\left\{  Y\left(  \gamma\right)  \right\}  \log\text{Tr}\left\{
Y\left(  \gamma\right)  \right\}   &  =\left[  1+\gamma\Delta+\frac{\gamma
^{2}}{2}\left[  V+\Delta^{2}\right]  +O\left(  \gamma^{3}\right)  \right]
\log\left[  1+\gamma\Delta+\frac{\gamma^{2}}{2}\left[  V+\Delta^{2}\right]
+O\left(  \gamma^{3}\right)  \right]  .
\end{align}
Then for small $\gamma$, we have the following Taylor expansion for the
logarithm:
\begin{align}
\log\left[  1+\gamma\Delta+\frac{\gamma^{2}}{2}\left[  V+\Delta^{2}\right]
+O\left(  \gamma^{3}\right)  \right]   &  =\gamma\Delta+\frac{\gamma^{2}}
{2}\left[  V+\Delta^{2}\right]  -\frac{\gamma^{2}\Delta^{2}}{2}+O\left(
\gamma^{3}\right) \\
&  =\gamma\Delta+\frac{\gamma^{2}}{2}V+O\left(  \gamma^{3}\right)  ,
\end{align}
which gives
\begin{align}
\text{Tr}\left\{  Y\left(  \gamma\right)  \right\}  \log\text{Tr}\left\{
Y\left(  \gamma\right)  \right\}   &  =\left[  1+\gamma\Delta+\frac{\gamma
^{2}}{2}\left[  V+\Delta^{2}\right]  +O\left(  \gamma^{3}\right)  \right]
\left[  \gamma\Delta+\frac{\gamma^{2}}{2}V+O\left(  \gamma^{3}\right)  \right]
\nonumber\\
&  =\gamma\Delta+\frac{\gamma^{2}}{2}V+\gamma^{2}\Delta^{2}+O\left(
\gamma^{3}\right)  .
\end{align}
Finally, we can say that
\begin{align}
\gamma\text{Tr}\left\{  \frac{d}{d\gamma}Y\left(  \gamma\right)  \right\}
-\text{Tr}\left\{  Y\left(  \gamma\right)  \right\}  \log\text{Tr}\left\{
Y\left(  \gamma\right)  \right\}   &  =\gamma\Delta+\gamma^{2}\left[
V+\Delta^{2}\right]  -\left[  \gamma\Delta+\frac{\gamma^{2}}{2}V+\gamma
^{2}\Delta^{2}\right]  +O\left(  \gamma^{3}\right) \nonumber\\
&  =\frac{\gamma^{2}}{2}V+O\left(  \gamma^{3}\right)  .
\label{eq:beta-small-proof-last}
\end{align}
If $V>0$, we can conclude that as long as $\gamma$ is very near to zero, all
terms $O\left(  \gamma^{3}\right)  $ are negligible in comparison to
$\frac{\gamma^{2}}{2}V$, and the monotonicity holds in such a regime. (Note that this
argument does not work if $V=0$.)

A similar kind of development can also be used to show that the the inequality in (\ref{monreldif2}) holds for $\gamma$ in a neighborhood of zero (i.e., $\alpha$ in a neighborhood of one).

\bibliographystyle{alpha}
\bibliography{Ref-1}

\end{document}